%%%%%%%%%%%%%%%%%%%%%%%%%%%%%%%%%%%%%%%%%%%%%%%%%%
%
%   Latex file of 'Solutions of difference equations for face models'
%   by P. Finch, R. Weston & P. Zinn-Justin
%   Aug 2015
%
%%%%%%%%%%%%%%%%%%%%%%%%%%%%%%%%%%%%%%%%%%%%%%%%%%

% Preamble
\documentclass[11pt,a4paper]{amsart}

%Packages
\usepackage[a4paper,left=2.25cm,right=2.25cm,top=3cm,bottom=3cm]{geometry}
\usepackage{tikz}
\usepackage{hyperref}
\setlength{\multlinegap}{0pt}
\allowdisplaybreaks[1]
\setlength{\parskip}{1pt}

%%%%%%%%%%%%%%%%%%%%%%%%%%%%%%%%%%%%%%%%%%%%%%%%%%
%Theorem environments are defined.
%%%%%%%%%%%%%%%%%%%%%%%%%%%%%%%%%%%%%%%%%%%%%%%%%%
\theoremstyle{definition}
\newtheorem{thm}{Theorem}[section]
\newtheorem{prop}[thm]{Proposition}

\newtheorem{cor}[thm]{Corollary}
\newtheorem{dfn}[thm]{Definition}
\newtheorem{lem}[thm]{Lemma}

%%%%%%%%%%%%%%%%%%%%%%%%%%%%%%%%%%%%%%%%%%%%%%%%%%
%Robert's command for Equations
%%%%%%%%%%%%%%%%%%%%%%%%%%%%%%%%%%%%%%%%%%%%%%%%%%
\makeatletter

\@addtoreset{equation}{section}
\makeatother

%%%%%%%%%%%%%%%%%%%%%%%%%%%%%%%%%%%%%%%%%%%%%%%%%%
%Macros
%%%%%%%%%%%%%%%%%%%%%%%%%%%%%%%%%%%%%%%%%%%%%%%%%%
%Sets of numbers

\def\Z{\mathbb{Z}}

\def\C{\mathbb{C}}

%Bra and ket notation
\newcommand{\bra}[1]{\left\langle #1\right|}
\newcommand{\ket}[1]{\left|#1\right\rangle}

%Face weights
\def\Wop[#1][#2][#3][#4][#5]{W\left(\left.\begin{array}{cc} #1&#2\\#3&#4\end{array}\right|#5\right)}
\def\Wopc[#1][#2][#3][#4][#5]{W^{3C}\left(\left.\begin{array}{cc} #1&#2\\#3&#4\end{array}\right|#5\right)}

%The defining functions
\def\f[#1]{{\color{black} f}(#1)}
\def\fj[#1]{{\color{black} f}^{(j)}\left(#1\right)}
%\def\f[#1]{\vartheta_{1}(#1|\tau)}
%\def\fj[#1]{\vartheta_{j}\left(\left. #1 \right|2\tau\right)}

%Frequently used sets
\newcommand{\bC}{\overline{C}}
\newcommand{\bI}{{\overline{I}}}
\newcommand{\bS}{{\overline{S}}}
\def\cB{\mathcal{B}}
\def\cH{\mathcal{H}}

%Frequently used variables
\newcommand{\bPsi}{\overline{\Psi}}
\newcommand{\bpsi}{\overline{\psi}}
\newcommand{\balpha}{\overline{\alpha}}
\newcommand{\ba}{{\bf a}}
\newcommand{\bu}{{\bf u}}
\newcommand{\bv}{{\bf v}}
\newcommand{\bi}{{\bf i}}

%%%%%%%%%%%%%%%%%%%%%%%%%%%%%%%%%%%%%%%%%%%%%%%%%%
%Robert's other command - Either move them to the part of above or get rid of them
%%%%%%%%%%%%%%%%%%%%%%%%%%%%%%%%%%%%%%%%%%%%%%%%%%
\newcommand{\be}{\begin{eqnarray}}
\newcommand{\ee}{\end{eqnarray}}
\newcommand{\ben}{\begin{eqnarray*}}
\newcommand{\een}{\end{eqnarray*}}
\newcommand{\bec}{\begin{equation}\begin{array}{lll}}
\newcommand{\eec}{\end{array}\end{equation}}

\newcommand{\ot}{\otimes}
\newcommand{\sli}{\sum\limits}

\newcommand{\mref}[1]{(\ref{#1})}

% Details for title page
\title{Theta function solutions of the qKZB equation for a face model}
\author{Peter E. Finch}
\author{Robert Weston}
\author{Paul Zinn-Justin}
\address{Peter E. Finch and Paul Zinn-Justin, CNRS UMR 7589 / Sorbonne Universit\'es, UPMC Univ Paris 06, LPTHE, F-75005, Paris, France \bigskip}
\address{Robert Weston, Department of Mathematics, Heriot-Watt University, Edinburgh EH14 4AS, UK,  and Maxwell Institute for Mathematical Sciences, Edinburgh, U.K.}

\thanks{P.~Finch and P.~Zinn-Justin are supported by ERC grant 278124 ``LIC''.
The authors would like to acknowledge and thank the Galileo Galilei Institute in Florence for their hospitality and support 
 for the period of the scientific program on
    ``Statistical Mechanics, Integrability and Combinatorics" during which part of this work was completed.}

\date{\today}

\begin{document}

\maketitle

\begin{abstract}
We consider the quantum Knizhnik--Zamolodchikov--Bernard equation for a face model with elliptic weights,
the SOS model. We provide explicit solutions as theta functions.
On the so-called combinatorial line, in which the model is equivalent
to the {\em three-colour model},
these solutions are shown to be eigenvectors of the transfer matrix with
periodic boundary conditions.
\end{abstract}

\section{Introduction}

%\begin{itemize}
%\item 
The Knizhnik--Zamolodchikov (KZ) equations \cite{KZ} are a set of compatible
differential equations satisfied by conformal blocks in Conformal Field
Theory. The original work on KZ equations was concerned with the theory on a sphere,
but Bernard generalised it to the torus \cite{Bernard88a} 
and to higher genus Riemann surfaces \cite{Bernard88b},
leading to the so-called Knizhnik--Zamolodchikov--Bernard (KZB) equations.
These differential equations come from a flat connection
over the moduli space of Riemann surfaces with $L$ marked points; in the
present work, we shall
only be concerned with the variation of the marked
points, and not with the variation of the underlying Riemann surface,
whose genus will always be one.

%\item qKZ \cite{FR-qKZ}
The KZ equations are intimately related to the representation theory
of affine algebras.
In \cite{FR-qKZ}, {\em difference}\/ equations, now customarily
called qKZ equations, were introduced as analogues of the KZ equations
for quantised affine algebras. These equations were formulated in terms
of the $R$-matrix of the associated quantum integrable system,
which satisfies the Yang--Baxter equation. For a review of qKZ equations see 
\cite{Etingofbook} and references contained there.

Generalising both the KZB and qKZ equations, Felder
introduced the qKZB equations \cite{Felder-ICM}. An important new
ingredient was the use of an elliptic solution
of the {\em dynamical}\/ Yang--Baxter
equation. The qKZB equations can be viewed
as related to a representation of an (extended) affine Weyl
group, which will always be the affine symmetric group for us, 
on an appropriate functional space,
dynamical $R$-matrices being ``generalised $R$-matrices'' in the sense
of Cherednik \cite{Che-qKZ}. In the latter work, two types of solutions of the
qKZ(B) equation were
considered: ordinary solutions, corresponding to looking for
eigenvectors of the commutative subgroup of the affine Weyl group,
and symmetric solutions, which are invariant under the whole of the
affine Weyl group. In the second case, we call the set of equations
that these satisfy, the qKZ(B) {\em system}.

In fact, the qKZ system predated the qKZ equation -- see \cite{Smi}
where form factors of a quantum integrable model are shown to satisfy the qKZ system,
as well as \cite{JM-book} for a similar connection to correlation functions.
In the
context of dynamical $R$-matrices associated to so-called ABF models,
the qKZB system was considered in \cite{Fodaetal94} using the vertex operator approach \cite{JM-book}. This approach was developed further for 
a range of elliptic face and vertex models in \cite{MR1403536,MR1421016,MR1702389,MR1832074,MR1625187,MR2153660}. Our approach in this paper is different, and the rationale for 
developing this alternative to the vertex operator approach is discussed in Section 5.

In an unrelated development, Stroganov \cite{Strog-odd}
and Razumov and Stroganov \cite{RS-spin,RS-spintwist} noticed
that the ground state energy of the (odd size, or twisted even size) XXZ spin
chain at the value $\Delta=-1/2$ of its anisotropy parameter was
particularly simple.
The corresponding ground state entries 
turned out to have interesting connections to combinatorics
(see also \cite{RS-conj,BdGN-XXZ-ASM-PP}), and $\Delta=-1/2$
was dubbed the ``combinatorial point'' of the XXZ spin chain.

In an attempt to explain these ground state features,
Di Francesco and Zinn-Justin \cite{artic31} considered an inhomogeneous version
of certain models that are equivalent to the XXZ spin chain, namely a loop model
and the six-vertex model, and analysed 
the ground state entries as a function of the inhomogeneities
(or {\em spectral parameters}), showing in particular
that they formed polynomials in these spectral parameters. 
The equations satisfied by the inhomogeneous
ground state turned out to be nothing but
the qKZ system \cite{artic34}, and using related representation theory,
an explicit form of the ground state entries was provided
\cite{artic42,Tiago-thesis} in terms of certain contour integrals.
It is worth noting that the solution of the qKZ system makes
sense for any value of the anisotropy parameter $\Delta$, but only at
$\Delta=-1/2$ can it be interpreted as an eigenvector of the transfer matrix
of the model.

The XXZ and six-vertex models are based on a trigonometric solution
of the Yang--Baxter equation.
It is natural to try to generalise these results to an
{\em elliptic}\/ solution.
An obvious attempt is to consider the 
{\em XYZ}\/ and {\em eight-vertex}\/ models,
and indeed remarkable properties of the ground state of the XYZ/eight-vertex
model in odd size on a particular ``combinatorial line''
were observed in \cite{BM-8v,BM-P6a,BM-P6b,RS-XYZ}
and analysed using the qKZ(B) approach in
\cite{artic57}. This approach was partially successful in the sense that some
conjectures from \cite{BM-P6b} were proved, but no explicit formula was
found for the ground state entries.

Here we pursue a slightly different generalisation: we use the solution of
the dynamical Yang--Baxter equation which corresponds to SOS models \cite{ABF,DJMO1986}. 
We write the corresponding qKZB
system, and then construct some solutions. These solutions are shown to be
eigenvectors of the transfer matrix (with periodic boundary conditions) of the SOS model on the ``combinatorial line'', which in the SOS language corresponds to the three-colour model \cite{Baxter1970,Lieb1967}.
We note that the three-colour model was also recently studied
in \cite{rosengrenb},
but with different (Domain Wall) Boundary Conditions; experience with
the trigonometric limit suggests that there may be some connection between
this work and ours.

% There are some technical differences with \cite{artic57}:
% our system has {\em even}\/ size, which is necessary in a height model
% with periodic boundary conditions where heights on neighboring sites
% differ by $\pm 1$. This prevents a direct identification
% of the present model with the eight-vertex model of \cite{artic57}
% via the vertex-IRF transformation \cite{?}. Furthermore, here, 
% we are able to provide explicit formulae
% for the entries of the solution of the qKZB system, leaving the inhomogeneities
% free (in \cite{artic57}, they were partially specialized).
% Other solutions of qKZB equations for our system are present in the literature, most notably in a series papers by Felder, Tarasov and Varchenko
% \cite{FTV-qKZB,FV-qKZB}. Our solutions differ from these solutions.

In slightly more detail, the results of the paper are as follows.
In Section~\ref{sec:sol}, we define the quantum integrable model (SOS model) and the associated qKZB system. We then show that a certain integral expression provides {\em two}\/ solutions of it. 
In Section~\ref{sec:anal}, we analyse in more detail these solutions, proving that they are {\em theta functions}\/ not only of the dynamical parameter, as is expected, but also of all spectral parameters. We show that they satisfy recurrence relations and a ``wheel condition''. In Section~\ref{sec:combi}, we restrict ourselves to the value $\eta=2\pi/3$ of the crossing parameter (the ``combinatorial line''), and show the eigenvector property for these solutions of qKZB. 
In Section~\ref{sec:disc}, we summarise our findings and discuss connections with existing work on qKZB equations.
Finally, Appendix~\ref{app:ex} provides the explicit example of size $L=4$, while Appendix~\ref{app:lim} describes the trigonometric and rational limit of our results and Appendix~\ref{app:graph} gives a pictorial representation of the qKZB system and equation.

\section{The Solution of the Difference Equations}\label{sec:sol}
In this section we introduce elliptic face weights, define difference equations (qKZB system) and then go on to construct particular
solutions. 
%We shall show in Section \ref{secEigenstates} that these solutions specify eigenvalues of a face-model
%transfer matrix. 
\subsection{The SOS weights}
We consider SOS weights which depend upon three complex parameters $\eta$, $\tau$ and $\zeta$ (where the elliptic nome $p$ is 
given by $p=e^{i\pi \tau}$),
spectral parameter $u\in \C$, and a height parameter $a\in \Z$ \cite{Baxter1973,DJMO1986}. Defining $f(u)=\vartheta_1(u|\tau)$ we have
\begin{align}
  \Wop[a][a\pm1][a\pm1][a\pm2][u] & = \frac{\f[\eta+u]}{\f[\eta-u]}, \nonumber \\
  \Wop[a][a\pm1][a\mp1][a][u] & = \frac{\f[u]\f[a\eta\pm\eta+\zeta]}{\f[\eta-u]\f[a\eta+\zeta]}, \label{eq:weights}\\
  \Wop[a][a\pm1][a\pm1][a][u] & = \frac{\f[\eta]\f[a\eta\mp u+\zeta]}{\f[\eta-u]\f[a\eta+\zeta]},\nonumber
\end{align}
%In the case when $\zeta=0$, these weights coincide with the standard elliptic SOS weights (see, for example \cite{Bax82}),
%but is convenient for us to keep $\zeta$ generic for now. 
It is conventional to encode SOS weights into a dynamical R-matrix ${\mathcal{R}}(u,a)\in\mbox{End}(\C^2\ot \C^2)$ \cite{FelVar1996}. Here, 
we adopt the alternative approach of considering the R-matrix as acting on a path space. More specifically, we consider 
a periodic SOS model on a  chain of $ L=2n$  sites
given in terms of the path Hilbert space $\mathcal{H}_{L}$ defined as the complex
span of the set $\mathcal{B}_{L}$ of basis vectors  given by
\begin{align*}
  \mathcal{B}_{L} & = \left\{\ket{a_{1},a_{2},\dots a_{L}}\left|\right. a_{i}\in \Z, \, |a_{i-1}-a_{i}|=1 \right\}
\end{align*}
with indices taken mod$( L)$.
In this case we define R-matrices $R_i(u)\in$End$(\mathcal{H}_L)$ by 
\begin{align}
  \bra{\bf{a}'} R_{i}(u) \ket{\bf{a}} & = \Wop[a_{i-1}][a_{i}'][a_{i}][a_{i+1}][u] \prod_{j\neq i} \delta_{a_{j}}^{a_{j}'}, \quad
\hbox{where}\quad   \ket{\bf{a}}=|a_1,a_2,\cdots,a_L\rangle. \label{eq:Rmat}
\end{align}
With this definition the face-version of the Yang-Baxter equation (which \eqref{eq:weights} obey) is written simply as the standard
\begin{align}
  R_{i}(u)R_{i+1}(u+v)R_{i}(v) & = R_{i+1}(v)R_{i}(u+v)R_{i+1}(u), \label{eqnYBE} 
\end{align}
The SOS transfer matrix $t(u)\in \mbox{End}(\mathcal{H}_L)$ is then defined by
\begin{align}
  \bra{\bf{a}'} t(u) \ket{\bf{a}} & = \prod_{i=1}^{L} \Wop[a_{i-1}'][a_{i}'][a_{i-1}][a_{i}][u-u_{i}]. \label{eq:tmat}
\end{align}

\subsection{The qKZB System}
In this section we define the qKZB system associated with our elliptic face weights. 
%(the term qKZ is more conventionally used only for the difference 
%equation of trigonometric models, but we adopt it here for simplicity). 
First of all we define $\rho:\cB_L\rightarrow \cB_L$
\begin{align*}
  \rho\,\ket{a_1,\cdots,a_{L-1},a_L} & = \ket{a_L,a_1,\cdots,a_{L-1}}. \\ 
  %\tau_{i,j}\, g(z_1,z_2,\cdots,z_i,\cdots,z_j,\cdots,z_L) & = g(z_1,z_2,\cdots,z_j,z_{i+1},\cdots,z_i,z_{j+1},\cdots,z_L)\\
  %&&s_i(w)\, g(z_1,z_2,\cdots,z_i,\cdots,\cdots,z_L) & = s_i(w)g(z_1,z_2,\cdots,z_i+w,\cdots,\cdots,z_L)
\end{align*}
We then have 
\begin{dfn}[Level-$\ell$ qKZB System]
  Consider a function $\Psi: \C^L\rightarrow \cH_L$. The level-$\ell$ qKZB system is defined as the set of equations  
\begin{align}
  R_i(u_i-u_{i+1}) \Psi(u_1,u_2,\cdots,u_i,u_{i+1},\cdots,u_L) & =  \Psi(u_1,u_2,\cdots,u_{i+1},u_{i},\cdots,u_L)\label{qkz1} \\
  \rho \, \Psi(u_1,u_2,\cdots,\cdots,u_{L-1},u_L) & = \kappa\,\Psi(u_L+ s ,u_1,u_2,\cdots,u_{L-1})\label{qkz2}
\end{align}
where $s= (\ell+2) \eta$ and $\kappa\in \C^{\times}$ is a constant.
This pair of equations implies the following: 
\begin{align}
  S_i \Psi(u_1,u_2,\cdots,u_i,u_{i+1},\cdots,u_L) & = \kappa \Psi(u_1,u_2,\cdots,u_i+s,u_{i+1},\cdots,u_L)\label{qkzeq}
\end{align}
where
\begin{align*}
  S_i & :=  \left[R_{i-1}(u_{i}-u_{i-1}+s) \cdots R_{1}(u_{i}-u_{1}+s)\right]\rho \left[R_{L -1}(u_{i}-u_{L}) \cdots R_{i}(u_{i}-u_{i+1})\right].
\end{align*}
Note, that we shall refer to \eqref{qkzeq} itself referred as the qKZB equation, as opposed to qKZB system. Pictorial representations 
of the qKZB system and qKZB equation are  given in Appendix~\ref{app:graph}.
\end{dfn}

We define components $\Psi_{\ba}(\bu)\in \C$ of a function $\Psi: \C^L\rightarrow \cH_L$ by 
\begin{align*}
  \Psi(\bu) = \sli_{\ket{\ba} \in \mathcal{B}_L} \Psi_{\ba}(\bu)\ket{\ba}, \quad\hbox{where}\quad \ket{\ba}=\ket{a_1,a_2,\cdots,a_L}.
\end{align*}
The qKZB equation can be written in terms of the coefficients as
\begin{align}
  & \Psi_{\ba'}(\ldots,u_i+s,\ldots) \label{qkzeqcoeff}\\
  & =  \kappa^{-1} \sum_{\ba} \Psi_{\ba}(\ldots)\, \delta_{a_{i}'}^{a_{i-1}} \prod_{j=1}^{i-1} \Wop[a_{j-1}'][a_{j}'][a_{j-1}][a_{j}][u_{i}-u_{j}+s] \prod_{j=i+1}^{L} \Wop[a_{j-1}'][a_{j}'][a_{j-1}][a_{j}][u_{i}-u_{j}]. \nonumber
\end{align}

\subsection{The Solution of the level-1 qKZB System}
The main result of this section is given by Theorem \ref{thm:qkz}. We shall present a series of definitions and Lemmas that lead to this
result. 
For a given $\ket{\ba} \in \mathcal{B}_L$
let us define a vector $\alpha$ (vector $\balpha$) in terms of the $n$ positions where height variables decrease (increase) to the left. Namely, we define
\begin{align*}
  \alpha&=(\alpha_1,\alpha_2,\cdots,\alpha_n),\quad \hbox{where}\quad \alpha_i\in\{1,2,\cdots, L\},\quad 
  \alpha_i<\alpha_{i+1}, \quad \hbox{and}\quad a_{\alpha_i}-a_{\alpha_i-1}=+1,\\
  \balpha&=(\balpha_1,\balpha_2,\cdots,\balpha_n),\quad \hbox{where}\quad \balpha_i\in\{1,2,\cdots, L\},\quad 
  \balpha_i<\balpha_{i+1}, \quad \hbox{and}\quad a_{\balpha_i}-a_{\balpha_i-1}=-1.
\end{align*}
For example, for $ L=4$ and path $\ba=(1,2,3,2)$ we have $\alpha=(2,3)$,  $\balpha=(1,4)$; and for $\ba=(3,2,3,2)$, we have $\alpha=(1,3)$, $\balpha=(2,4)$.

We will first proceed in steps to define and consider the properties of two different functions $\Psi^{(j)}(\bu)$ $(j=2,3)$ expressed as
integrals. These are the functions that will be shown to satisfy the qKZB system.
We show also define two alternative integral expressions $\bPsi^{(j)}(\bu)$ $(j=2,3)$. It will turn out that $\Psi^{(j)}(\bu)=
\bPsi^{(j)}(\bu)$, but the alternative form will prove useful in the proof of the cyclicity relation \eqref{qkz2}.

First of all, we define the relevant integrands. 
\begin{dfn}\label{def:Idef}
  With $ L=2n$, we define the four functions $I^{(2)},I^{(3)},\bI^{(2)},\bI^{(3)} : \C^{2n}\times \C^{n} \rightarrow \cH_L$ by their components
\begin{align}
  & I_\ba^{(j)}(u_1,\cdots,u_{L}|v_1,\cdots,v_n):= \f[\eta]^n \f[a_L \eta +\zeta] \fj[a_{L}\eta+\zeta -n\eta+2\sum_{l=1}^{n}v_{l}- \sum_{m=1}^{L}u_{m}] \label{eq:Idef} \\
  & \times \left\{ \prod_{l=1}^{n} \frac{\f[a_{\alpha_{l}}\eta+\zeta-v_{l}+u_{\alpha_{l}}] \left( \prod_{l < m \leq n} \f[v_{l}-v_{m}]\f[\eta-v_{l}+v_{m}]\right)}{\left(\prod_{1 \leq m \leq \alpha_{l}} \f[u_{m}-v_{l}]\right)\left(\prod_{\alpha_{l} \leq m \leq  L} \f[\eta-v_{l}+u_{m}]\right)} \right\}, \nonumber \\
%%%%%%%%%%%%%%%%%%%
  & \bI_\ba^{(j)}(u_1,\cdots,u_{L}|v_1,\cdots,v_n):= \f[\eta]^n \f[a_L \eta +\zeta] \fj[a_{L}\eta+\zeta +n\eta-2\sum_{l=1}^{n}v_{l}+ \sum_{m=1}^{L}u_{m}] \label{eq:bIdef} \\
  & \times \left\{ \prod_{l=1}^{n} \frac{\f[a_{\balpha_{l}}\eta+\zeta+v_{l}-u_{\balpha_{l}}] \left( \prod_{l < m \leq n} \f[v_{l}-v_{m}] \f[\eta-v_{l}+v_{m}]\right)}{\left(\prod_{1 \leq m \leq \balpha_{l}} \f[u_{m}-v_{l}]\right)\left(\prod_{\balpha_{l} \leq m \leq  L} \f[\eta-v_{l}+u_{m}]\right)} \right\}, \nonumber
\end{align}
where $f^{(j)}(u)=\vartheta_{j}(u|2\tau)$.
\end{dfn}

We shall use these functions as the integrands of multiple integral expressions for solutions of the qKZB system. To this end it is useful to first 
show the following:

\begin{lem}\label{lem:per}
  The functions $I_\ba^{(j)}(\bu|\bv)$, $\bI_\ba^{(j)}(\bu|\bv)$ are meromorphic functions of each of the arguments $(v_1,v_2,\cdots,v_n)=:\bv$, and are doubly periodic with
\begin{align*}
  I_\ba^{(j)}(\bu|\bv) & = I_\ba^{(j)}(\bu|v_1,v_2,\cdots,v_i+(r_1+r_2\tau)\pi,\cdots,v_n),\quad r_1,r_2\in\Z \\
  \bI_\ba^{(j)}(\bu|\bv)& = \bI_\ba^{(j)}(\bu|v_1,v_2,\cdots,v_i+(r_1+r_2\tau)\pi,\cdots,v_n),\quad r_1,r_2\in\Z \\
\end{align*}

\end{lem}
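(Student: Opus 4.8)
The plan is to reduce the statement to the known quasi-periodicity properties of the Jacobi theta functions $\vartheta_1(\cdot|\tau)$ and $\vartheta_j(\cdot|2\tau)$, and to verify that all the quasi-periodicity factors that appear cancel in the quotient. Recall that $f(u)=\vartheta_1(u|\tau)$ satisfies $f(u+\pi)=-f(u)$ and $f(u+\pi\tau)=-e^{-i\pi\tau-2iu}f(u)$, while $f^{(j)}(u)=\vartheta_j(u|2\tau)$ satisfies $f^{(j)}(u+\pi)=\pm f^{(j)}(u)$ and $f^{(j)}(u+2\pi\tau)=c\,e^{-2i\pi\tau-2iu}f^{(j)}(u)$ for suitable constants; the key point is that under $u\mapsto u+\pi\tau$ the function $f^{(j)}$ picks up a factor involving $e^{-iu}$ rather than $e^{-2iu}$ — more precisely $f^{(j)}(u+\pi\tau)$ is, up to a constant and up to possibly permuting the label $j$, equal to $e^{-i\pi\tau/2-iu}$ times another $\vartheta_{j'}(u|2\tau)$. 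Meromorphicity in each $v_i$ is immediate since $I_\ba^{(j)}$ and $\bI_\ba^{(j)}$ are finite products and quotients of entire functions of $v_i$, so the whole content is the double periodicity.

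First I would treat the shift $v_i\mapsto v_i+\pi r_1$ (the real period). Each factor $f(\cdots\pm v_i\cdots)$ in numerator and denominator changes sign, and $f^{(j)}(\cdots+2\sum v_l\cdots)$ either is unchanged or changes sign; I would count the parity. In $I_\ba^{(j)}$ the $v_i$-dependent factors are: $f[a_{\alpha_i}\eta+\zeta-v_i+u_{\alpha_i}]$ (one factor), the products $\prod_{m}f[v_i-v_m]\,f[\eta-v_i+v_m]$ over the other $v_m$ (an even number of sign-changing factors under $v_i\mapsto v_i+\pi$, in fact $2(n-1)$ of them, net sign $+$ — wait, one must be careful whether $v_i$ appears as $v_l$ or $v_m$, but in each such pair exactly one of the two factors contains $v_i$, giving $n-1$ sign changes... actually each unordered pair $\{i,m\}$ contributes the two factors $f[v_i-v_m]f[\eta-v_i+v_m]$ or $f[v_m-v_i]f[\eta-v_m+v_i]$, and in either case exactly these two factors involve $v_i$, giving $2$ sign changes per pair, hence even overall), the denominator $\prod_{m\le\alpha_i}f[u_m-v_i]$ ($\alpha_i$ sign changes) and $\prod_{\alpha_i\le m\le L}f[\eta-v_i+u_m]$ ($L+1-\alpha_i$ sign changes), together $L+1$ sign changes in the denominator, and finally $f^{(j)}[\cdots+2v_i+\cdots]$ which is invariant (since $2v_i\mapsto 2v_i+2\pi$ and $\vartheta_j(\cdot|2\tau)$ has period $2\pi$, up to the $\pm$ which here is $+$ because the shift is by a full period $2\pi$ of the $2\tau$-lattice... one should double check this, $\vartheta_j(u+2\pi|2\tau)=\vartheta_j(u|2\tau)$ indeed). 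Collecting: numerator gives $1\cdot(-1)^{\text{even}}\cdot 1 = +1$ from its $v_i$-dependent part except the single factor $f[a_{\alpha_i}\eta+\zeta-v_i+u_{\alpha_i}]$ which gives $-1$; the denominator gives $(-1)^{L+1}=(-1)^{2n+1}=-1$. So numerator sign $\times$ (denominator sign)$^{-1}$ $=(-1)\cdot(-1)=+1$, confirming invariance under $v_i\mapsto v_i+\pi$. The argument for $\bI_\ba^{(j)}$ is identical by symmetry.

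The harder part, and the main obstacle, is the shift $v_i\mapsto v_i+\pi\tau$. Now each factor $f(w)$ with $w$ linear in $v_i$ contributes not just a sign but an exponential $e^{-i\pi\tau\mp 2iw_0}$ times possibly $e^{\mp 2i\pi\tau}$ corrections where $w_0$ is the value at $v_i=0$; and crucially $f^{(j)}$ — whose argument contains $2v_i$, so the relevant shift is by $2\pi\tau$, a full period of the $2\tau$-lattice — contributes $c\,e^{-2i\pi\tau-2i(\text{arg})}$. One must check that (i) the number of $f$-factors in the numerator and denominator is such that the pure exponential-in-$\tau$ prefactors $e^{-i\pi\tau}$ all cancel, and (ii) the $v_i$-linear pieces in the exponents sum to zero. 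For (ii) one tracks the coefficient of $v_i$ inside each argument: in $I_\ba^{(j)}$ the numerator factor contributes $+1$ from $-v_i$ giving exponent $+2i(\cdots)$ hmm — I mean the exponent picked up is proportional to (coefficient of $v_i$) times (constant part of the argument), and separately there is a term proportional to $v_i$ itself coming from the $v_i$-dependence of the constant part; the latter is what must cancel. Summing the $v_i$-coefficients: numerator has $-1$ (from $-v_i$ in the $f[a_{\alpha_i}\eta+\zeta-v_i+u_{\alpha_i}]$ factor), $+1$ and $-1$ repeatedly from the pairs $f[v_i-v_m]f[\eta-v_i+v_m]$ (net $0$ per pair), total $-1$; the $f^{(j)}$ factor contributes $+2$ (from $2v_i$) but with the $e^{-2iw}$ rule this is the coefficient $2$; denominator has $-1$ (from $-v_i$ in $f[u_m-v_i]$, there are $\alpha_i$ of these, total $-\alpha_i$) and $-1$ from each $f[\eta-v_i+u_m]$ (there are $L+1-\alpha_i$ of these, total $-(L+1-\alpha_i)$), so denominator total $-(L+1)=-(2n+1)$. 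The quadratic-in-$v_i$ balance is then: (numerator linear coeff) $+$ ($f^{(j)}$ coeff, counted appropriately) $-$ (denominator linear coeff) should vanish; with the normalisation where each $f$ contributes exponent with coefficient $2$ and $f^{(j)}$ (shift by $2\pi\tau$) contributes coefficient $2$ as well, one gets $2(-1) + 2\cdot(\text{something}) - 2\cdot(-(2n+1))$ — I expect after carefully matching conventions that this equals $0$ precisely because $\#\{f\text{ in denom}\} - \#\{f\text{ in num involving }v_i\} = 2n$ matches the factor $2\sum v_l$ in the $f^{(j)}$ argument. The cleanest way to organise this is: observe that $I_\ba^{(j)}$, as a function of $v_i$ alone, is a ratio $f^{(j)}(2v_i + c_0)\cdot\prod f(\pm v_i + c_k)$ with equal numbers (namely $2n$ after accounting for the $f^{(j)}$ having "weight 2") in a suitable sense of numerator and denominator, hence of "total theta-degree zero", and any such ratio is automatically elliptic. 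I would state this as the key lemma: a ratio $\frac{f^{(j)}(2v+a_0)\prod_{k=1}^{p}f(v+b_k)}{\prod_{k=1}^{q}f(v+c_k)}$ with $q = p+2$ is doubly periodic in $v$ iff $2a_0 + \sum b_k - \sum c_k \equiv 0$ modulo the lattice (plus a sign/character condition for the real period), then verify these two scalar identities by direct computation from the definitions; the sum $\sum b_k-\sum c_k$ telescopes nicely because of the structure $\prod_{1\le m\le\alpha_l}u_m - \prod_{\alpha_l\le m\le L}(\eta+u_m)$ together with the $f^{(j)}$ argument $a_L\eta+\zeta-n\eta+2\sum v_l-\sum u_m$, whose $u$-dependence is exactly $-\sum_m u_m$, matching. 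I expect this telescoping to be the one genuinely computational point, and everything else to be bookkeeping of signs and $\tau$-exponents.
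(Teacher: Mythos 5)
Your overall strategy is the same as the paper's (the paper's proof is a one-liner: meromorphicity is clear and the quasi-periodicity prefactors from \eqref{eq:periodic} cancel between numerator and denominator), and your treatment of the real period $v_i\mapsto v_i+\pi$ is correct. However, for the shift $v_i\mapsto v_i+\pi\tau$ the argument is left genuinely incomplete, and the one place where you sketch why the linear exponents cancel is not sufficient. After applying \eqref{eq:periodic} to every $v_i$-dependent factor, the leftover prefactor is $e^{-2i(\Sigma_N-\Sigma_D)}$ where $\Sigma_N$ (resp.\ $\Sigma_D$) is the sum of (coefficient of $v_i$ in the argument)$\times$(argument) over numerator (resp.\ denominator) factors, with the $f^{(j)}$ factor counted with effective weight $1$ because its nome is $2\tau$. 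Carrying this out, the $v$-, $u$- and $\zeta$-parts of $\Sigma_N-\Sigma_D$ do cancel by pure counting ($2n+1$ theta factors on each side), but the $\eta$-part does \emph{not} cancel automatically: one is left with $(a_L-a_{\alpha_i}+2i-\alpha_i)\,\eta$. Your ``telescoping'' remark only addresses the $u$-dependence ($-\sum_m u_m$ matching) and never mentions the heights, yet the single numerator factor $f(a_{\alpha_i}\eta+\zeta-v_i+u_{\alpha_i})$ carries $a_{\alpha_i}\eta$ and the denominators carry $(L-\alpha_i+1)\eta$ from the $\eta$-shifts, so the residual term above is exactly what must be shown to vanish.

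The missing ingredient is the combinatorial identity $a_{\alpha_i}=a_L+2i-\alpha_i$, which follows from the definition of $\alpha$: among the positions $1,\dots,\alpha_i$ there are exactly $i$ up-steps (namely $\alpha_1<\cdots<\alpha_i$) and $\alpha_i-i$ down-steps, so $a_{\alpha_i}-a_0=i-(\alpha_i-i)$ with $a_0=a_L$. This is the only non-bookkeeping step in the whole lemma, and it is precisely the point at which your writeup says ``I expect after carefully matching conventions that this equals $0$''. (The companion identity $a_{\balpha_i}=a_L+\balpha_i-2i$ plays the same role for $\bI^{(j)}_{\ba}$.) With that identity supplied, your plan closes; without it, the claimed exact (not merely quasi-) periodicity in $v_i$ is not established.
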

\begin{proof} Meromorphicity comes directly from the properties of the Jacobi theta functions. The double periodicity comes from the relations
\begin{align}
  \vartheta_{1}(u+(r_1+r_2\tau)\pi|\tau) & = (-1)^{r_1+r_2} p^{-r_2^2} e^{-2ir_{2}u} \vartheta_{1}(u|\tau)\nonumber \\
  \vartheta_{2}(u+(r_1+r_2\tau)\pi|\tau) & = (-1)^{r_1} p^{-r_2^2} e^{-2ir_{2}u}\vartheta_{2}(u|\tau) \label{eq:periodic}\\
  \vartheta_{3}(u+(r_1+r_2\tau)\pi|\tau) & = p^{-r_2^2} e^{-2ir_{2}u} \vartheta_{3}(u|\tau) \nonumber 
\end{align}
with quasi-periodic prefactors cancelling from the top and bottom of \eqref{eq:Idef}.
\end{proof}

Now we define functions that will ultimately be shown to satisfy the qKZB system: 
\begin{dfn}\label{def:Psi}
  We define the functions $\Psi^{(j)}, \bPsi^{(j)}: \C^L \rightarrow \cH_L$ $(j=2,3)$ in terms of components $\Psi^{(j)}_\ba(\bu), \bPsi^{(j)}_\ba(\bu)$ by
\begin{align}
  & \Psi_{\ba}^{(j)}(\bu)= \Phi_{0}(\bu) \frac{1}{c^n} \int_{C_\alpha^{(1)} } \frac{dv_1}{2\pi i}  \int_{C_\alpha^{(2)}} \frac{dv_2}{2\pi i}  \cdots \int_{C_\alpha^{(n)}} \frac{dv_n}{2\pi i} \; I_\ba^{(j)}(\bv|\bu)\label{eq:intform1}\\
  & \bPsi_{\ba}^{(j)}(\bu)= \Phi_{0}(\bu) \frac{1}{c^n} \int_{\bC ^{(1)}_{\balpha}  } \frac{dv_1}{2\pi i}  \int_{\bC_{\balpha}^{(2)}} \frac{dv_2}{2\pi i}  \cdots \int_{\bC_{\balpha}^{(n)}} \frac{dv_n}{2\pi i} \; \bI_\ba^{(j)}(\bv|\bu)\label{eq:intform2}\\
  & \hbox{where}\quad \Phi(\bu) = \prod_{1 \leq i < j \leq  L} \f[\eta-u_{i}+u_{j}], \quad c=\oint \frac{dz}{2\pi i \f[z]}=
\frac{1}{2q^{1/4} (q^2;q^2)_\infty^3}, \nonumber 
\end{align}
and for a given $\alpha$, the contours are defined such that:\\[-2mm]

\hspace*{3mm} The $v_\ell$ contour $C^{(\ell)}_\alpha$ circles only the poles at $u_m$ for $m=1,2,\cdots,\alpha_\ell$,

\hspace*{3mm} The $v_\ell$ contour  $\bC^{(\ell)}_\alpha$ circles only the poles at  $u_m+\eta$ for $m=\alpha_\ell,\alpha_{\ell}+1,\cdots, L$.

% We can  write this expression as the sum of residues
% \begin{align*}&&\Psi_{\ba}^{(j)}(\bu)= \Phi(\bu) \frac{1}{c^n}\sum_{\ell=1}^n \left( \sum_{m=1}^{\alpha_\ell} \hbox{Res}_{ v_\ell=u_m} 
% I_\ba^{(j)}(\bv|\bu)\right).\end{align*}
\end{dfn} 
%Let us now define a different $v_L$ 
%contour
%$\bC^{(\ell)}$ as the one circling only the poles at $u_m+\eta$ for $m=\alpha_\ell,\alpha_{\ell}+1,\cdots, L$.
A useful result relating integrals over the two contours is the following:
\begin{lem}\label{lem:cbar}
\begin{align*}
  \int_{\bC^{(\ell)}_\alpha } \frac{dv_\ell}{2\pi i}  I_\ba^{(j)}(\bv|\bu) = - \int_{C^{(\ell)}_\alpha } \frac{dv_\ell}{2\pi i}  I_\ba^{(j)}(\bv|\bu),\quad
  \int_{\bC^{(\ell)}_{\balpha} } \frac{dv_\ell}{2\pi i}  \bI_\ba^{(j)}(\bv|\bu) = - \int_{C^{(\ell)}_{\balpha} } \frac{dv_\ell}{2\pi i}  \bI_\ba^{(j)}(\bv|\bu)  
\end{align*}
\end{lem}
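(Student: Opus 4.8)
The plan is to show that the full integrand $I_\ba^{(j)}(\bv|\bu)$, viewed as a function of the single variable $v_\ell$ (with all other variables fixed), has exactly the poles at $u_1,\dots,u_{\alpha_\ell}$ and at $u_{\alpha_\ell}+\eta,\dots,u_L+\eta$ and no others inside a fundamental domain, and that the sum of all its residues vanishes. The lemma then follows immediately, since $C_\alpha^{(\ell)}$ encircles the first block of poles, $\bC_\alpha^{(\ell)}$ encircles the second block, and $\oint_{C_\alpha^{(\ell)}} + \oint_{\bC_\alpha^{(\ell)}}$ (with consistent orientation) is a contour that can be shrunk to zero on the torus; the relative sign in the statement comes from reversing the orientation on $\bC$. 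First I would isolate the $v_\ell$-dependence in \eqref{eq:Idef}: it sits in the factor $\vartheta_j(\cdots + 2v_\ell + \cdots|2\tau)$, in $\f[a_{\alpha_\ell}\eta+\zeta - v_\ell + u_{\alpha_\ell}]$, in the products $\f[v_\ell - v_m]\f[\eta - v_\ell + v_m]$ over $m>\ell$, and in the denominator factors $\prod_{m\le \alpha_\ell}\f[u_m - v_\ell]$ and $\prod_{m\ge \alpha_\ell}\f[\eta - v_\ell + u_m]$.

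The key point is that, by Lemma~\ref{lem:per}, $I_\ba^{(j)}(\bv|\bu)$ is a genuinely elliptic (doubly periodic and meromorphic) function of $v_\ell$ with periods $\pi$ and $\pi\tau$: the quasi-periodic prefactors produced by the theta transformation laws \eqref{eq:periodic} cancel between numerator and denominator, including the contribution of the $f^{(j)}$ factor with its doubled modulus $2\tau$ — here the argument carries a $2v_\ell$, so under $v_\ell\mapsto v_\ell + \pi$ the phase matches that of an ordinary $\vartheta_j(\cdot|\tau)$ and cancels against one net $\f[\cdot]$ factor, while under $v_\ell \mapsto v_\ell + \pi\tau$ one checks the $p$-powers and exponential factors balance. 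An elliptic function has zero sum of residues over a fundamental domain, which is exactly the identity we need, \emph{provided} we have correctly enumerated the poles: the numerator factors $\f[v_\ell - v_m]\f[\eta-v_\ell+v_m]$ and $\f[a_{\alpha_\ell}\eta+\zeta-v_\ell+u_{\alpha_\ell}]$ and $\vartheta_j$ are entire in $v_\ell$, so the only poles are the simple zeros of the denominator, i.e. precisely $v_\ell = u_m$ ($1\le m\le\alpha_\ell$) and $v_\ell = u_m+\eta$ ($\alpha_\ell\le m\le L$), modulo the lattice. (Note the potential double-counting at $m=\alpha_\ell$: $v_\ell = u_{\alpha_\ell}$ appears once and $v_\ell = u_{\alpha_\ell}+\eta$ appears once, and these are distinct; the numerator does not cancel either, so both are genuine simple poles and each lies in exactly one of the two contour families.)

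I would then write: by periodicity, $\bigl(\int_{C_\alpha^{(\ell)}} + \int_{\bC_\alpha^{(\ell)}}\bigr)\frac{dv_\ell}{2\pi i}\,I_\ba^{(j)} = \frac{1}{2\pi i}\oint_{\partial P} I_\ba^{(j)}\,dv_\ell = 0$, where $P$ is a period parallelogram containing all the listed poles and the boundary integrals cancel in pairs by double periodicity; rearranging gives the first identity. The second identity is proved identically, with $I$ replaced by $\bI$, $\alpha$ by $\balpha$, and the $v_\ell$-dependence now residing in $\f[a_{\balpha_\ell}\eta+\zeta + v_\ell - u_{\balpha_\ell}]$ and the corresponding denominator products — the ellipticity argument is the same. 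The main obstacle I anticipate is the bookkeeping in verifying double periodicity in $v_\ell$ of the whole integrand, in particular tracking the $2v_\ell$-argument and modulus-$2\tau$ of the $f^{(j)}$ factor against the collection of modulus-$\tau$ factors; once Lemma~\ref{lem:per} is invoked this is already done, so in fact the proof reduces to the pole count plus the one-line contour-deformation argument.
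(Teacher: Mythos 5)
Your proposal is correct and follows essentially the same route as the paper: invoke the double periodicity of $I_\ba^{(j)}$ in $v_\ell$ (Lemma~\ref{lem:per}) to conclude that the integral over the boundary of a fundamental parallelogram vanishes, observe that the poles inside are exactly those encircled by $C^{(\ell)}_\alpha$ and $\bC^{(\ell)}_\alpha$, and rearrange $\int_{C}+\int_{\bC}=0$. The extra care you take in enumerating the poles and checking the numerator is entire is consistent with, and slightly more explicit than, the paper's one-line "catalogue of poles" remark.
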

\begin{proof}
We consider the case  of $I_\ba^{(j)}(\bv|\bu) $: From Lemma \ref{lem:per} we know that 
\begin{align*}
  \int_{D_\tau } \frac{dv_\ell}{2\pi i}  I_\ba^{(j)}(\bv|\bu) & = 0
\end{align*}
where $D_\tau$ is the boundary of the parallelogram in the complex plane with vertices
\begin{align*}
  \frac{\pi}{2}(-1-\tau),\quad \frac{\pi}{2}(1-\tau), \quad \frac{\pi}{2}(-1+\tau), \quad \frac{\pi}{2}(1+\tau)
\end{align*}
($D_\tau$ can be translated if necessary to avoid poles on the boundary).
The catalogue of poles occurring inside the domain $D_\tau$ consists of those at $u_m$ ($m=1,2,\cdots,\alpha_\ell$) and
$u_m+\eta$ ($m=\alpha_\ell,\alpha_{\ell}+1,\cdots, L$) or double-periodically translated versions of them.
Hence we have
\begin{align*}
  \int_{D_\tau } \frac{dv_\ell}{2\pi i}  I_\ba^{(j)}(\bv|\bu)=\int_{\C^{(\ell)} } \frac{dv_\ell}{2\pi i}  I_\ba^{(j)}(\bv|\bu) +\int_{\bC^{(\ell)} } \frac{dv_\ell}{2\pi i}  I_\ba^{(j)}(\bv|\bu) = 0.
\end{align*}
The proof of the statement of involving the integral of  $\bI_\ba^{(j)}(\bv|\bu) $ is similar. 
\end{proof}

\noindent
For convenience we define $\tau_{i,i+1}$, which acts on functions of $L$ variables, by
\begin{align*}
  \tau_{i,i+1}\, g(z_1,\cdots,z_i,z_{i+1},\cdots,z_L) & = g(z_1,\cdots,z_{i-1},z_{i+1},z_{i},z_{i+2},\cdots,z_L)\\
\end{align*}
Now we have a lemma regarding the solution of the first equation \eqref{qkz1} of the qKZB system:
\begin{lem}\label{lem:qkz1}
%The functions $\Psi^{(j)}(\bu)$ and $\bPsi^{(j)}(\bu)$ $(j=2,3)$ satisfy the exchange relation \eqref{qkz1}.
The functions $\Psi^{(j)}(\bu)$ and $\bPsi^{(j)}(\bu)$ $(j=2,3)$ satisfy the following special cases of the exchange relation \eqref{qkz1}:
\begin{align*}
  & (i) && \tau_{i,i+1} \Psi^{(j)}_{\dots a,a-1,a-2\dots}(\bu)  =  \frac{\f[\eta+u_{i}-u_{i+1}]}{\f[\eta-u_{i}+u_{i+1}]} \Psi^{(j)}_{\dots a,a-1,a-2\dots}(\bu), \\
  & (ii) && \tau_{i,i+1} \bPsi^{(j)}_{\dots a,a+1,a+2\dots}(\bu)  =  \frac{\f[\eta+u_{i}-u_{i+1}]}{\f[\eta-u_{i}+u_{i+1}]} \bPsi^{(j)}_{\dots a,a+1,a+2\dots}(\bu), \\
  & (iii) &&  \Psi_{\dots a,a-1,a\dots}(\bu)  \\
  &&& = \frac{\f[a\eta+\zeta]\f[\eta-u_{i}+u_{i+1}]\tau_{i,i+1} - \f[\eta]\f[a\eta+\zeta-u_{i}+u_{i+1}]}{\f[u_{i}-u_{i+1}]\f[a\eta+\eta+\zeta]} \Psi_{\dots a,a+1,a\dots}(\bu), \nonumber\\
  & (iv) && \Psi_{\dots a,a+1,a\dots}(\bu)\\
  &&& = \frac{\f[a\eta+\zeta]\f[\eta-u_{i}+u_{i+1}]\tau_{i,i+1} - \f[\eta]\f[a\eta+\zeta+u_{i}-u_{i+1}]}{\f[u_{i}-u_{i+1}]\f[a\eta-\eta+\zeta]} \Psi_{\dots a,a-1,a\dots}(\bu), \nonumber
\end{align*}
for $\Psi=\Psi^{(j)}, \bPsi^{(j)}$ and $j=2,3$.
\end{lem}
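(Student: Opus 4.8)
The plan is to verify each of the four identities by direct manipulation at the level of the integrands, exploiting the structure of the definitions rather than attempting anything conceptual. For items (i) and (ii), which concern configurations where the heights decrease (resp. increase) monotonically near sites $i,i+1$, I would first observe that in such a local configuration both $\alpha_\ell$-data (resp. $\bar\alpha_\ell$-data) behave very simply: either neither of the positions $i,i+1$ appears among the $\alpha_\ell$, or both do (consecutively), and in either case the set of integration contours $C_\alpha^{(\ell)}$ is essentially unchanged when we swap $u_i\leftrightarrow u_{i+1}$, up to a relabelling. So I would apply $\tau_{i,i+1}$ directly to the integrand $I_\ba^{(j)}(\bv|\bu)$ in \eqref{eq:Idef}, track how each theta-factor transforms, and show that the net effect is multiplication by the scalar $f[\eta+u_i-u_{i+1}]/f[\eta-u_i+u_{i+1}]$ together with a renaming of dummy integration variables $v_\ell\leftrightarrow v_{\ell+1}$ that leaves the multiple contour integral invariant (using that $f[v_\ell-v_m]f[\eta-v_\ell+v_m]$ is symmetric under $v_\ell\leftrightarrow v_m$ combined with an overall sign, and that the product of contours is symmetric). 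The factor $f[\eta]^n f[a_L\eta+\zeta]\fj[\cdots]$ and the prefactor $\Phi_0(\bu)$ must be checked to transform correctly — $\Phi_0$ contributes $f[\eta-u_i+u_{i+1}]\to f[\eta-u_{i+1}+u_i]=-f[\eta+u_i-u_{i+1}]$ times the rest, so one must be careful with signs coming from $f$ being odd.

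For items (iii) and (iv), which relate the two different local paths $\dots a,a-1,a\dots$ and $\dots a,a+1,a\dots$, the situation is genuinely different because the two paths have \emph{different} $\alpha$-vectors — they differ in whether position $i$ or position $i+1$ carries the "increase" marker — hence different integrands and different contours. Here I would argue as follows: write out $\Psi_{\dots a,a-1,a\dots}$ as its contour integral, and in the $v_\ell$-integral corresponding to the marker that sits at position $i$ or $i+1$, deform or split the contour. The key is a three-term theta identity: the numerator combination $f[a\eta+\zeta]f[\eta-u_i+u_{i+1}]\,\tau_{i,i+1}-f[\eta]f[a\eta+\zeta-u_i+u_{i+1}]$ acting on the integrand of the $\dots a,a+1,a\dots$ path should, after using the addition formula for $\vartheta_1$ (the standard three-term relation $f[x+y]f[x-y]f[z+w]f[z-w]-f[x+w]f[x-w]f[z+y]f[z-y]=f[x+z]f[x-z]f[y+w]f[y-w]$ or a specialisation thereof), collapse to the integrand of the $\dots a,a-1,a\dots$ path up to the factor $f[u_i-u_{i+1}]f[a\eta+\eta+\zeta]$ in the denominator. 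I would match the arguments of the theta functions in \eqref{eq:Idef} — in particular the factor $\f[a_{\alpha_l}\eta+\zeta-v_l+u_{\alpha_l}]$ whose argument shifts as $\alpha_l$ moves from $i$ to $i+1$, and the denominator factors $\prod_{m\le\alpha_l}\f[u_m-v_l]$ and $\prod_{\alpha_l\le m\le L}\f[\eta-v_l+u_m]$ which gain/lose a factor involving $u_i$ or $u_{i+1}$ — and show the three-term identity is exactly what is needed. The operator $\tau_{i,i+1}$ in the statement signals that one combination must be read with $u_i,u_{i+1}$ swapped before comparison.

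The main obstacle will be item (iii)/(iv): keeping careful track of which poles lie inside which contour after the $u_i\leftrightarrow u_{i+1}$ swap, since swapping the spectral parameters moves the pole at $u_i$ (inside $C_\alpha^{(\ell)}$ for the relevant $\ell$) to where $u_{i+1}$ was, and for the two paths the cutoff $\alpha_\ell$ differs by one — so a pole may cross the contour and the naive integrand identity must be supplemented by a residue term, which had better combine correctly with the other term in the numerator. I expect that the residue at the relevant pole $v_\ell=u_i$ or $v_\ell=u_{i+1}$ is precisely what produces the second, non-$\tau$ term $-f[\eta]f[a\eta+\zeta\mp u_i\pm u_{i+1}]$ in the stated linear combination; verifying this residue computation — evaluating $I_\ba^{(j)}$ at the pole and recognising the result — is the real content. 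A useful preliminary reduction: prove (iii) and (iv) are equivalent to each other (they are inverse relations, since the two $2\times2$ matrices of operators acting on $(\Psi_{\dots a,a-1,a\dots},\Psi_{\dots a,a+1,a\dots})$ should be mutually inverse up to the scalar $f[\eta+u_i-u_{i+1}]/f[\eta-u_i+u_{i+1}]$, which follows from the same three-term identity), so it suffices to establish one of them. Items (i) and (ii), by contrast, I expect to be routine once the bookkeeping of quasi-periodicity signs from $\Phi_0$ and from $f$ being odd is done carefully.
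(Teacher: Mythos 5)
Your overall strategy is the paper's: for (i) and (ii) one checks that the integrand $I^{(j)}_{\mathbf a}(\mathbf v|\mathbf u)$ (resp.\ $\bar I^{(j)}_{\mathbf a}$) is literally invariant under $u_i\leftrightarrow u_{i+1}$ for the relevant local height configuration, so the stated ratio comes entirely from $\Phi_0$; for (iii) and (iv) the engine is the Weierstrass/Riemann three-term identity for $\vartheta_1$, which is exactly the identity $\Xi=0$ used in the paper. Two small corrections to your bookkeeping for (i)/(ii): since neither $i$ nor $i+1$ is a marked position of the path $\dots a,a-1,a-2\dots$ (resp.\ $\dots a,a+1,a+2\dots$), each denominator product in the integrand contains either both of $u_i,u_{i+1}$ or neither, so no relabelling $v_\ell\leftrightarrow v_{\ell+1}$ is needed and the contours are unchanged; and $f(\eta-u_{i+1}+u_i)$ \emph{equals} $f(\eta+u_i-u_{i+1})$ — the argument is identical, so no sign from the oddness of $f$ enters there.

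The one point where your plan would lead you astray is the residue mechanism you anticipate for (iii)/(iv). No residue correction arises, and the non-$\tau_{i,i+1}$ term is not produced by a pole crossing a contour. Rather, the combination $f(a\eta+\zeta)f(\eta-u_i+u_{i+1})\,\tau_{i,i+1}\bigl[\Phi_0 I_{\dots a,a+1,a\dots}\bigr]-f(u_i-u_{i+1})f((a+1)\eta+\zeta)\,\Phi_0 I_{\dots a,a-1,a\dots}-f(\eta)f(a\eta+\zeta-u_i+u_{i+1})\,\Phi_0 I_{\dots a,a+1,a\dots}$ vanishes \emph{identically} as a meromorphic function of $\mathbf v$: the ratios $I_{\dots a,a-1,a\dots}/I_{\dots a,a+1,a\dots}$ and $\tau_{i,i+1}I/I$ are explicit products of theta functions in the single relevant variable $v_\ell$, and the three-term identity applies pointwise. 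The contour bookkeeping is then benign: integrate all three terms over the contours attached to the path $\dots a,a-1,a\dots$, whose $\ell$-th member encircles $u_1,\dots,u_{i+1}$. This is legitimate because $I_{\dots a,a+1,a\dots}$ has no pole at $v_\ell=u_{i+1}$ (its first denominator product stops at $m=i$), while the swapped integrand's pole sits at $u_{i+1}$ instead of $u_i$ and is already enclosed; each term therefore integrates to the corresponding $\Psi$-component and the identity (iii) follows. Hunting for a residue at $v_\ell=u_i$ or $v_\ell=u_{i+1}$ to generate the $-f(\eta)f(a\eta+\zeta\mp(u_i-u_{i+1}))$ term would produce nothing or a double count. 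With that mechanism corrected, the rest of your outline — including the remark that (iii) and (iv) are mutually inverse up to the scalar of (i) — is consistent with the paper's argument.
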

\begin{proof} 
The fact that the first two relations, (i) and (ii), are satisfied follows directly from the following three identities:
\begin{align*}
  \tau_{i,i+1} I^{(j)}_{(\dots a,a-1,a-2\dots)}(\bv|\bu) & = I^{(j)}_{(\dots a,a-1,a-2\dots)}(\bv|\bu)\\
  \tau_{i,i+1} \bar{I}^{(j)}_{(\dots a,a+1,a+2\dots)}(\bv|\bu) & = \bar{I}^{(j)}_{(\dots a,a+1,a+2\dots)}(\bv|\bu),\\
  \tau_{i,i+1} \Phi(\bu) & = \frac{f(\eta-u_{i+1}+u_{i})}{f(\eta-u_{i}+u_{i+1})} \Phi(\bu)
\end{align*}
Moving on to statements (iii) and (iv) we provide the proof only for $\Psi^{(j)}(\bu)$ - the $\bPsi^{(j)}(\bu)$ case is very similar.
From Eq. \mref{eq:Idef} that we have 
\begin{align*}
  \tau_{i(i+1)} I^{(j)}_{\dots a,a+1,a\dots}(\bv|\bu) & = \frac{\f[u_{i}-v_{l}]\f[\eta(a+1)+\zeta-v_{l}+u_{i+1}]}{\f[u_{i+1}-v_{l}]\f[\eta(a+1)+\zeta-v_{l}+u_{i}]} I^{(j)}_{\dots a,a+1,a\dots}(\bv|\bu), \\
  \tau_{i(i+1)} I^{(j)}_{\dots a,a-1,a\dots}(\bv|\bu) & = \frac{\f[\eta-v_{l}+u_{i+1}]\f[a\eta+\zeta-v_{l}+u_{i}]}{\f[\eta-v_{l}+u_{i}]\f[a\eta+\zeta-v_{l}+u_{i+1}]} I^{(j)}_{\dots a,a-1,a\dots}(\bv|\bu), \\
  I^{(j)}_{\dots a,a-1,a\dots}(\bv|\bu) & = \frac{\f[\eta-v_{l}+u_{i}]\f[a\eta+\zeta-v_{l}+u_{i+1}]}{\f[u_{i+1}-v_{l}]\f[\eta(a+1)+\zeta-v_{l}+u_{i}]} I^{(j)}_{\dots a,a+1,a\dots}(\bv|\bu),
\end{align*}

\noindent It follows from these relations that we have 
\begin{align}
  &   \f[a\eta + \zeta]\f[\eta-u_{i}+u_{i+1}] \tau_{i(i+1)}\left[\Phi_{0}(\bu) I^{(j)}_{\dots a,a+1,a\dots}(\bv|\bu)\right] \label{eq:ex2}  \\
  & - \f[u_{i}-u_{i+1}] \f[a\eta+\eta+\zeta] \left[\Phi_{0}(\bu I^{(j)})_{\dots a,a-1,a\dots}(\bv|\bu)\right] \nonumber \\
  & - \f[\eta]\f[a\eta+\zeta-u_{i}+u_{i+1}] \left[\Phi_{0}(\bu)  I^{(j)}_{\dots a,a+1,a\dots}(\bv|\bu)\right] \nonumber \\
  & = \, \frac{\Xi \left( (a+1)\eta+\zeta-v_{l}+u_{i+1},a\eta+\zeta,\eta-u_{i+1}+u_{i},u_{i}-v_{l}\right)}{\f[ u_{i+1}-v_{l}] \f[(a+1)\eta+\zeta-v_{l}+u_{i}]} 
  \left[ \Phi_{x0}(\bu) I^{(j)}_{\dots a,a+1,a\dots}(\bv|\bu)\right] \nonumber \\[3mm]
%%%%%%%%%
  &   \f[a\eta+\zeta]\f[\eta-u_{i}+u_{i+1}] \tau_{i(i+1)}\left[\Phi_{0}(\bu) I^{(j)}_{\dots a,a-1,a\dots}(\bv|\bu)\right] \label{eq:ex3} \\
  & - \f[u_{i}-u_{i+1}]\f[a\eta-\eta+\zeta]\left[\Phi_{0}(\bu) I^{(j)}_{\dots a,a+1,a\dots}(\bv|\bu)\right] \nonumber \\
  & - \f[\eta]\f[a\eta+\zeta+u_{i}-u_{i+1}] \left[\Phi_{0}(\bu) I^{(j)}_{\dots a,a-1,a\dots}(\bv|\bu)\right] \nonumber \\
  & = \frac{\Xi(a\eta+\zeta-v_{l}+u_{i}, a\eta+\zeta, \eta-u_{i+1}+u_{i}, \eta-v_{l}+u_{i+1})}{\f[\eta-v_{l}+u_{i}]\f[a\eta+\zeta-v_{l}+u_{i+1}]} \left[\Phi_{0}(\bu) I^{(j)}_{\dots a,a-1,a\dots}(\bv|\bu)\right] \nonumber 
\end{align}
where the function $\Xi(z_1,z_2,z_3,z_4)$ is defined by
\begin{align*}
  & \Xi(2z_1,2z_2,2z_3,2z_4):= \f[2z_1] \f[2z_2] \f[2z_3] \f[2z_4] \\
  & - \f[-z_1+z_2+z_3+z_4] \f[z_1-z_2+z_3+z_4] \f[z_1+z_2-z_3+z_4] \f[z_1+z_2+z_3-z_4] \\
  & - \f[z_1+z_2+z_3+z_4] \f[z_1-z_2-z_3+z_4] \f[z_1-z_2+z_3-z_4] \f[z_1+z_2-z_3-z_4].
\end{align*}
However, a Riemann identity for $\vartheta_{1}$ states that $\Xi(z_1,z_2,z_3,z_4)=0$, and hence \eqref{eq:ex2} and \eqref{eq:ex3} imply 
(iii) and (iv) respectively. 
\end{proof}

We note that the only cases of the exchange relation \eqref{qkz1} not covered by the above lemma are relations (i) and (ii) with $\Psi^{(j)}(\bu)$ and $\bPsi^{(j)}(\bu)$ interchanged, which will follow once we show the equality between the two. However, to do this it is useful
to have an integrated form of the expressions \eqref{eq:intform1} and \eqref{eq:intform2} for $\Psi^{(j)}(\bu)$ and $\bPsi^{(j)}(\bu)$. 
Such integrated expressions can be expressed in terms of the following sets:

%We have almost reached the stage where we can show the equality of $\Psi^{(j)}(\bu)$ and $\bPsi^{(j)}(\bu)$, but to do this it is useful
%to have an integrated form of the expressions \eqref{eq:intform1} and \eqref{eq:intform2} for $\Psi^{(j)}(\bu)$ and $\bPsi^{(j)}(\bu)$. 
%Such integrated expressions can be expressed in terms of the following sets:
\begin{dfn}
We define the sets
\begin{align*}
  S_{\alpha}   & = \{ (i_{1},\dots i_{n}) | 1 \leq i_{l} \leq \alpha_{l}, \, i_{l}\neq i_{m},\, 1 \leq l,m \leq n\}, \\
  \bS_{\alpha} & = \{ (i_{1},\dots i_{n}) | \alpha_{l} \leq i_{l} \leq  L\, \, i_{l}\neq i_{m},\, 1 \leq l,m \leq n\},
\end{align*}
where $\alpha$ can be replaced with $\balpha$. For example, for $\ba=(3,4,5,4)$ we have $\alpha=(2,3)$, $\balpha=(1,4)$, and
\begin{align*}
  S_{\alpha}=\{(1,2),(1,3),(2,1),(2,3)\}, \quad  \bS_{\balpha}=\{(1,4),(2,4),(3,4)\}.
\end{align*}
\end{dfn}

\noindent The set $S_{\alpha}$ contains the possible $(i_1,\cdots,i_n)$ at which the multiple residue at 
$\bv=(u_{i_1},u_{i_2},\cdots, u_{i_n})$ is non-zero. Similarly, $\bS_{\balpha}$ contains the possible $(i_1,\cdots,i_n)$ at which the multiple residue at 
$\bv=(u_{i_1}+\eta,u_{i_2}+\eta,\cdots, u_{i_n}+\eta)$ is non-zero. It follows that we have the integrated expressions
\begin{align}
  \Psi^{(j)}_{\ba}(\bu) & = \left[\f[\eta]\right]^{n} \f[a_{L}\eta+\zeta] \sum_{{\bf i}\in S_{\alpha}} \psi^{(j)}_{\ba,{\bf i}}(\bu),
  \quad \bPsi^{(j)}_{\ba}(\bu) = \left[\f[\eta]\right]^{n} \f[a_{L}\eta+\zeta] \sum_{\bi \in \bS_{\balpha}} \bpsi^{(j)}_{\ba,\tilde{i}}(\bu), \label{eq:intexp} \\
  \psi^{(j)}_{\ba,{\bf i}}(\bu) & = \fj[a_{L}\eta+\zeta -n\eta+2\sum_{l=1}^{n}u_{i_{l}}- \sum_{m=1}^{L}u_{m}] \left\{ \frac{\prod_{1 \leq l < m \leq n} \f[u_{i_{l}}-u_{i_{m}}]}{\prod_{l=1}^{n}\prod_{1 \leq m \leq \alpha_{l},\, m\neq i_{l}} \f[u_{m}-u_{i_{l}}]} \right\}  \nonumber \\
  & \quad \times  \prod_{l=1}^{n} \f[a_{\alpha_{l}}\eta+\zeta-u_{i_{l}}+u_{\alpha_{l}}] \left\{ 	\frac{\prod_{1 \leq i < j \leq  L} \f[\eta-u_{i}+u_{j}]\prod_{1 \leq l < m \leq n} \f[\eta-u_{i_{l}}+u_{i_{m}}]}{\prod_{l=1}^{n} \prod_{\alpha_{l} \leq m \leq  L} \f[\eta-u_{i_{l}}+u_{m}]} \right\}, \nonumber \\
%%%%%%%%%%%%%%%%%%%%
  \bpsi^{(j)}_{\ba,\bi}(\bu)
  & = \fj[a_{L}\eta+\zeta -n\eta-2\sum_{l=1}^{n}u_{i_{l}} +\sum_{m=1}^{L}u_{m}] \left\{\frac{\prod_{1 \leq l < m \leq n} \f[u_{i_{l}}-u_{i_{m}}]}{\prod_{l=1}^{n}\prod_{\balpha_{l} \leq m \leq  L,\,m\neq i_{l}} \f[u_{i_{l}}-u_{m}]} \right\} \nonumber \\
  & \quad\times \prod_{l=1}^{n}  \f[a_{\balpha_{l}}\eta+\eta+\zeta+u_{i_{l}}-u_{\balpha_{l}}]   \left\{\frac{\prod_{1 \leq i < j \leq  L} \f[\eta-u_{i}+u_{j}]\prod_{1 \leq l < m \leq n} \f[\eta-u_{i_{l}}+u_{i_{m}}]}{\prod_{l=1}^{n} \prod_{1 \leq m \leq \balpha_{l}} \f[\eta-u_{m}+u_{i_{l}}]} \right\}. \nonumber
\end{align}

Now we can demonstrate the following:
\begin{lem}\label{lem:psibpsi}
  We have $\Psi^{(j)}(\bu)=\bPsi^{(j)}(\bu)$ $(j=2,3)$.
\end{lem}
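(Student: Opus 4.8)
The plan is to establish the identity for one maximally simple path and then propagate it to every $\ket{\ba}\in\cB_L$ using the relations already proved in Lemma~\ref{lem:qkz1}.

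I would first record two structural facts. First, both integral representations are covariant under a uniform shift of the heights: replacing $a_i\mapsto a_i+1$ for all $i$ leaves the position sets $\alpha,\balpha$ unchanged and affects $I^{(j)}_{\ba}$ and $\bI^{(j)}_{\ba}$ only through the combinations $a_L\eta+\zeta$ and $a_{\alpha_l}\eta+\zeta$ (resp.\ $a_{\balpha_l}\eta+\zeta$), so $\Psi^{(j)}_{\ba+(1,\dots,1)}(\bu;\zeta)=\Psi^{(j)}_{\ba}(\bu;\zeta+\eta)$ and likewise for $\bPsi^{(j)}$; hence the assertion $\Psi^{(j)}_{\ba}=\bPsi^{(j)}_{\ba}$, read as an identity of meromorphic functions of $(\bu,\zeta)$, is invariant under vertical shifts of $\ba$. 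Secondly, if two paths $\ba,\ba'$ differ by the flip of a single local extremum ($a_{i-1}=a_{i+1}=a$, $a_i\mapsto 2a-a_i$), then relations (iii) and (iv) of Lemma~\ref{lem:qkz1} express $\Psi^{(j)}_{\ba'}$ and $\bPsi^{(j)}_{\ba'}$ as one and the same explicit operator -- built from multiplication by theta functions of $u_i,u_{i+1},\zeta$ together with the transposition $\tau_{i,i+1}$, and depending only on $a$ -- applied to $\Psi^{(j)}_{\ba}$ and $\bPsi^{(j)}_{\ba}$ respectively. Since this operator does not know which of the two families it acts on, equality at $\ba$ forces equality at $\ba'$.

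The combinatorial point is that a flip at site $i$ simply exchanges the two (opposite) path-steps at positions $i$ and $i+1$, i.e.\ it acts by an adjacent transposition on the cyclic word of $n$ up-steps and $n$ down-steps; such transpositions generate all permutations of this word, and together with the vertical shifts (which are needed because flips preserve the parity of the heights) they connect every path to the ``mountain'' $\ba_0=(1,2,\dots,n,n-1,\dots,1,0)$. It therefore remains only to verify $\Psi^{(j)}_{\ba_0}=\bPsi^{(j)}_{\ba_0}$. For $\ba_0$ we have $\alpha=(1,\dots,n)$ and $\balpha=(n+1,\dots,2n)$, so $S_{\alpha}$ and $\bS_{\balpha}$ each contain a single element and, by \eqref{eq:intexp}, $\Psi^{(j)}_{\ba_0}=\f[\eta]^{n}\f[\zeta]\,\psi^{(j)}_{\ba_0,(1,\dots,n)}$ and $\bPsi^{(j)}_{\ba_0}=\f[\eta]^{n}\f[\zeta]\,\bpsi^{(j)}_{\ba_0,(n+1,\dots,2n)}$. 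Substituting $i_l=\alpha_l=l$ in the first and $i_l=\balpha_l=n+l$ in the second, the ratios of theta products in the first curly brackets of \eqref{eq:intexp} both reduce to $1$, the products $\prod_l\f[a_{\alpha_l}\eta+\zeta-u_{i_l}+u_{\alpha_l}]$ and $\prod_l\f[a_{\balpha_l}\eta+\eta+\zeta+u_{i_l}-u_{\balpha_l}]$ both collapse to $\prod_{l=1}^{n}\f[l\eta+\zeta]$, the arguments of $f^{(j)}$ both become $\zeta-n\eta+\sum_{l\le n}u_{l}-\sum_{l>n}u_{l}$, and after cancelling the remaining common theta factors one gets
\begin{align*}
  \Psi^{(j)}_{\ba_0}(\bu)=\bPsi^{(j)}_{\ba_0}(\bu)
  =\;& \f[\zeta]\,\fj[\zeta-n\eta+\textstyle{\sum_{l=1}^{n}u_{l}-\sum_{l=n+1}^{2n}u_{l}}]\,\prod_{l=1}^{n}\f[l\eta+\zeta] \\
  &\times\,\frac{\prod_{1\le i<j\le 2n}\f[\eta-u_{i}+u_{j}]}{\prod_{l=1}^{n}\prod_{m=n+1}^{2n}\f[\eta-u_{l}+u_{m}]}.
\end{align*}

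Apart from this last cancellation -- a routine theta-function computation, identical for $j=2$ and $j=3$ -- the argument involves no hard step; its substance is the remark that the exchange operators of Lemma~\ref{lem:qkz1} are insensitive to the distinction between $\Psi^{(j)}$ and $\bPsi^{(j)}$. The only mild difficulty I anticipate is organising the propagation carefully, since a term-by-term comparison of the two sums in \eqref{eq:intexp} is impossible: one has $|S_{\alpha}|\neq|\bS_{\balpha}|$ for most paths.
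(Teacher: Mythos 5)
Your proposal is correct and follows essentially the same route as the paper: both verify the identity on the single ``mountain'' path, where $S_\alpha$ and $\bS_{\balpha}$ are singletons and the two integrated expressions from \eqref{eq:intexp} visibly coincide, and then propagate to all other components via the exchange relations (iii)--(iv) of Lemma~\ref{lem:qkz1}, which act identically on $\Psi^{(j)}$ and $\bPsi^{(j)}$. The only (cosmetic) difference is that the paper keeps the base height $a$ general in the mountain computation, whereas you fix it to $0$ and restore generality through the $\zeta$-shift covariance.
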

\begin{proof}
  The strategy for proving the equality is to demonstrate equality for a particular choice of $\ba$, and then to use the exchange relations following from
 Lemma \ref{lem:qkz1} to show that the relation follows for all $\ba$. Note that for the choice $\ba=(a+1,a+2,\cdots,a+n,a+n-1,\cdots,a+1,a)$ we have 
\begin{align*}
  \alpha & = (1,2,\cdots,n) & S_{\alpha}&=\{(1,2,\cdots,n)\} \\
  \balpha & =(n+1,n+2,\cdots,2n),\quad & \bS_{\balpha} & = \{(n+1,n+2,\cdots,2n)\}.
\end{align*}
That is, both sets contain a single element. Using \eqref{eq:intexp} we have
\begin{align}
  \Psi^{(j)}_{a+1,\dots, a+n, \dots, a+1,a}(\bu) 
  & = \bPsi^{(j)}_{a+1,\dots, a+n, \dots, a+1,a}(\bu) \label{qkzupstate}\\
  & = \fj[a\eta+\zeta -n\eta +\sum_{m=1}^{n}u_{m} -\sum_{m=n+1}^{2n}u_{m}] \prod_{l=0}^{n} \f[a\eta+ l\eta+\zeta] \nonumber \\
  & \quad \times \prod_{1\leq i < j \leq n} \f[\eta-u_{i}+u_{j}] \prod_{n+1 \leq i < j \leq  L} \f[\eta-u_{i}+u_{j}] \nonumber
\end{align}
All other components $\Psi^{(j)}(\bu)$ can then be obtained uniquely in terms of $\Psi^{(j)}_{a+1,\dots, a+n, \dots, a+1,a}(\bu)$ by using the exchange relation
(iii) and (iv) from Lemma \ref{lem:qkz1} (and similar for $\bPsi^{(j)}(\bu)$) proving the equality.
\end{proof}

Finally, we show that $\Psi^{(j)}(\bu)$ satisfy the second qKZB system equation \eqref{qkz2}:
\begin{lem} \label{lem:qkz2} The functions $\Psi^{(j)}(\bu)$ satisfy the condition 
\begin{align}
  \Psi^{(j)}_{a_{L}a_{1}...a_{L-1}}(u_{1},u_{2},\dots,u_{L}) & = -\Psi^{(j)}_{a_{1}...a_{L-1}a_{L}}(u_{2},\dots,u_{L},u_{1} - 3\eta)
\label{eq:cycle}
\end{align}
\end{lem}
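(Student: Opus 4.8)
The plan is to establish the cyclicity relation \eqref{eq:cycle} directly at the level of the integrand, using the alternative form $\bPsi^{(j)}$ introduced precisely for this purpose, and then invoking Lemma~\ref{lem:psibpsi} to transfer the conclusion back to $\Psi^{(j)}$. Concretely, write $\ba'=(a_L,a_1,\dots,a_{L-1})$ and $\bu'=(u_2,\dots,u_L,u_1-3\eta)$. The key observation is that the combinatorial data transforms in a controlled way: if $\ba$ has decrease-positions $\balpha=(\balpha_1,\dots,\balpha_n)$, then reading $\ba'$ cyclically shifts all the step positions by one; since $\eta=2\pi/3$ will only matter later, the shift $u_1\mapsto u_1-3\eta$ is exactly the amount $s=(\ell+2)\eta=3\eta$ at level $\ell=1$, so the prefactor $\f[a_L\eta+\zeta]$, the theta-function factor $\fj[\cdots]$, and the products over $\f[\eta-u_i+u_j]$ and $\f[u_i-v_\ell]$ in $\bI^{(j)}$ match up. I would first check that the bulk of $\bI^{(j)}_{\ba'}(\bv|\bu')$ equals $\bI^{(j)}_{\ba}(\bv|\bu)$ up to a sign and a reshuffling of which $u_m$ appears in each factor, tracking carefully the endpoint terms $m=1$ and $m=L$ in the products $\prod_{1\le m\le\balpha_l}\f[u_m-v_l]$ and $\prod_{\balpha_l\le m\le L}\f[\eta-v_l+u_m]$, since those are the ones sensitive to the cyclic rotation.

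Next I would match the contours. Under $\bu\mapsto\bu'$ the pole positions $u_m+\eta$ are permuted cyclically and the one at $u_L+\eta$ is replaced by $u_1-3\eta+\eta=u_1-2\eta$; the claim is that the contours $\bC^{(\ell)}_{\balpha}$ for $\ba$ map onto the contours $\bC^{(\ell)}_{\balpha'}$ for $\ba'$ (possibly after relabelling the integration variables $v_\ell$, which is harmless since the measure is symmetric). Here the role of the shift by $3\eta$ is crucial: it ensures that the pole originally at $u_1+\eta$, which the contour for $\ba$ may or may not enclose, is moved out of the way by $2\eta$ so that the enclosed-pole structure is exactly the one dictated by $\balpha'$. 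This is the place where I expect the bookkeeping to be most delicate — the precise statement of which poles lie inside $\bC^{(\ell)}_{\balpha}$ depends on the inequalities $\balpha_l\le m\le L$, and after the cyclic shift these become $\balpha'_l\le m\le L$ with $\balpha'_l=\balpha_l+1$ for the components that do not wrap around and a special treatment for the component that does.

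Once the integrand and contours are shown to correspond, the overall sign $-1$ should emerge from a single application of Lemma~\ref{lem:cbar}: the cyclic rotation naturally produces an integral over $C$-type contours (poles at $u_m$) rather than $\bC$-type (poles at $u_m+\eta$), or vice versa, and converting back introduces exactly one sign per integration variable but with cancellations leaving a net sign of $-1$ — I would verify this count explicitly on the reference state $\ba=(a+1,\dots,a+n,\dots,a+1,a)$ using the closed form \eqref{qkzupstate}, where both $\balpha$ and $\balpha'$ consist of a single block and the whole identity reduces to a quasi-periodicity statement for $\vartheta_j(\cdot|2\tau)$ together with the relations \eqref{eq:periodic}. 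Finally, as in the proof of Lemma~\ref{lem:psibpsi}, I would argue that it suffices to check \eqref{eq:cycle} on this one reference component: both sides of \eqref{eq:cycle}, as functions of $\ba$, satisfy the same exchange relations (iii)–(iv) of Lemma~\ref{lem:qkz1} once one checks that $\tau_{i,i+1}$ on the right-hand side corresponds to $\tau_{i-1,i}$ on the left after the rotation, so equality on one component propagates to all of $\cB_L$.

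The main obstacle, as indicated, is the contour-matching step: getting the combinatorics of $S_{\balpha}$ versus $S_{\balpha'}$ and the enclosed poles exactly right, and confirming that the $3\eta$ shift is precisely what makes the residue structure transform covariantly rather than leaving a mismatched pole. Everything else — the quasi-periodicity prefactor cancellations, the sign from Lemma~\ref{lem:cbar}, and the propagation via exchange relations — is routine given the machinery already set up, though the sign must be tracked with care since an even number of variables $L=2n$ means naive counting could suggest $+1$ rather than the claimed $-1$.
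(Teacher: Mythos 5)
Your overall strategy --- use the two integral representations, Lemma~\ref{lem:cbar}, and a single net sign --- points in the right direction, and the paper's proof lives in this circle of ideas, but the specific mechanism you propose does not work, and the step you defer is exactly where the content is. The full $n$-fold integrands do \emph{not} correspond under the cyclic shift: comparing the integrand for the component $(a_L,a_1,\ldots,a_{L-1})$ at $(u_1,\ldots,u_L)$ with that for $(a_1,\ldots,a_L)$ at $(u_2,\ldots,u_L,u_1-3\eta)$, the prefactors are $\f[a_{L-1}\eta+\zeta]$ on one side and $\f[a_L\eta+\zeta]$ on the other, and the arguments of $f^{(j)}$ differ by $(a_{L-1}-a_L)\eta-3\eta$, which is not a period; no relabelling of the $v_\ell$ fixes this. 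What actually happens (and what the paper does) is a residue computation in \emph{one} variable on each side: when $a_L-a_{L-1}=1$ the left-hand side has $\alpha_1=1$, so the $v_1$-contour encloses only the pole $v_1=u_1$, while on the right-hand side $\alpha_n=L$, so after a single application of Lemma~\ref{lem:cbar} the $v_n$-contour encloses only the pole $v_n=u_1-2\eta$; evaluating these two residues produces the missing prefactors (e.g.\ $\f[a_{\alpha_1}\eta+\zeta-v_1+u_{\alpha_1}]\to\f[a_L\eta+\zeta]$) and leaves two $(n-1)$-fold integrals that agree after renaming variables. That single contour flip is the sole source of the overall $-1$; your ``one sign per integration variable with cancellations'' would suggest $(-1)^n$ and is left unresolved in your write-up. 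The case $a_L-a_{L-1}=-1$ must then be handled separately, starting from $\bPsi^{(j)}$ and Lemma~\ref{lem:psibpsi}, so the case split is unavoidable.

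Your fallback (verify one reference component and propagate by exchange relations) also has two gaps. First, the exchange relations available simultaneously on both sides of \eqref{eq:cycle} are only those at positions $2,\ldots,L-1$ of the rotated path; they alter only $a_1,\ldots,a_{L-2}$ and freeze the pair $(a_{L-1},a_L)$, so you need a separate reference computation for each sign of $a_L-a_{L-1}$ --- again reproducing the paper's case split. Second, even for a single reference check, one side of \eqref{eq:cycle} is necessarily a one-step rotation of the reference path, whose set $S_\alpha$ has more than one element, so it is not given by the closed form \eqref{qkzupstate}; evaluating it already requires the single-residue argument above. The residue computation therefore cannot be bypassed, and your proposal as written does not contain it.
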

\begin{proof}
We need to split our calculation into two cases depending upon the value of $(a_{L}-a_{L-1})$. We first consider the possibility $a_{L}-a_{L-1}=1$. On the left hand side of \eqref{eq:cycle}  we have $\alpha_1=1$ and using the integral 
expression \eqref{eq:intform1}, the contour $C^{(1)}$ circles just one pole at $v_{1}= u_{1}$.
Thus we can compute
\begin{align*}
  & \Psi^{(j)}_{a_{L}a_{1}...a_{L-1}}(u_{1},u_{2},\dots,u_{L}) \\
  & = \left[\f[\eta]\right]^{n-1} \f[(a_{L}-1)\eta+\zeta] \f[a_{L}\eta+\zeta] \left[\prod_{2 \leq i < j \leq  L} \f[\eta-u_{i}+u_{j}]\right]\\
  & \quad \times \oint_{C^{(2)}_\alpha} \cdots \oint_{C^{(n)}_\alpha} \left(\prod_{l=2}^{n} \frac{dv_{l}}{2\pi i c}\right) \fj[(a_{L}-1)\eta+\zeta -n\eta +u_{1}+ 2\sum_{l=2}^{n}v_{l}- \sum_{m=2}^{L}u_{m}] \\
  & \quad \times \left\{ \prod_{l=2}^{n}  
  \frac{\f[a_{\alpha_{l-1}}\eta+\zeta-v_{l}+u_{\alpha_{l-1}+1}] \f[\eta-u_{1}+v_{l}] \left( \prod_{l < m \leq n} \f[v_{l}-v_{m}] \f[\eta-v_{l}+v_{m}]\right)}{\left(\prod_{2 \leq m \leq \alpha_{l-1}+1} \f[u_{m}-v_{l}]\right)\left(\prod_{\alpha_{l-1}+1 \leq m \leq  L} \f[\eta-v_{l}+u_{m}]\right)} \right\}
\end{align*}

\noindent
To compute $\Psi^{(j)}_{a_{1}...a_{L-1}a_{L}}(u_{2},\dots,u_{L},u_{1} - 3\eta)$, i.e., the expression on the right-hand-side of Eq. \mref{eq:cycle}, we note that the new vector $\alpha'$ has  $\alpha'_{n}= L$. The desired integral of $v_l$ over the corresponding  $C^{(n)}_{\alpha'}$ may be rewritten using Lemma \ref{lem:cbar} as minus the integral over  $\bC^{(n)}_{\alpha'}$, and this latter integral is simple in that it circles the single pole at $v_n=u_{1}-2\eta$.
Performing this integral gives
\begin{align*}
  & \Psi^{(j)}_{a_{1}...a_{L-1}a_{L}}(u_{2},\dots,u_{L},u_{1} - 3\eta) \\
  & = -\left[\f[\eta]\right]^{n-1} \f[a_{L}\eta+\zeta] \f[(a_{L}-1)\eta+\zeta] \left[\prod_{1 \leq i < j \leq  L-1} \f[\eta-u_{i+1}+u_{j+1}]\right]  \\
  & \quad \times \oint_{C^{(1)}_{\alpha'}} \cdots \oint_{C^{(n-1)}_{\alpha'}} \left(\prod_{l=1}^{n-1} \frac{dv_{l}}{c_{f}}\right) \fj[(a_{L}-1)\eta+\zeta -n\eta + u_{1}+ 2\sum_{l=1}^{n-1}v_{l}- \sum_{m=2}^{L}u_{m}] \\
  & \quad \times \left\{ \prod_{l=1}^{n-1}  
  \frac{\f[a_{\alpha_{l}}\eta+\zeta-v_{l}+u_{\alpha_{l}+1}] \f[\eta-u_{1}+v_{l}] \left( \prod_{l < m \leq n-1} \f[v_{l}-v_{m}] \f[\eta-v_{l}+v_{m}]\right)}{\left(\prod_{1 \leq m \leq \alpha_{l}} \f[u_{m+1}-v_{l}]\right)\left(\prod_{\alpha_{l} \leq m \leq  L-1} \f[\eta-v_{l}+u_{m+1}]\right)} \right\}
\end{align*}
Changing the integration variables we obtain the desired equality . 

In order to treat the case $a_L-a_{L-1}=-1$ we use a very similar argument, the key difference being that we start from the second integral
expression \eqref{eq:intform2} and use the equality $\Psi^{(j)}(\bu)=\bPsi^{(j)}(\bu)$ given by Lemma \ref{lem:psibpsi}. 
\end{proof}

Lemmas \ref{lem:qkz1}, \ref{lem:psibpsi} and \ref{lem:qkz2} lead to the main theorem:
\begin{thm} \label{thm:qkz} The functions $\Psi^{(2)}(\bu)$ and $\Psi^{(3)}(\bu)$ are solutions of the qKZB system \eqref{qkz1} and \eqref{qkz2} with level $\ell=1$ and $\kappa=-1$. 
\end{thm}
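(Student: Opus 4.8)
The plan is to assemble Theorem~\ref{thm:qkz} directly from the three preceding lemmas, checking that together they cover \emph{all} components of both equations \eqref{qkz1} and \eqref{qkz2}, and that the claimed values $\ell=1$, $\kappa=-1$ are consistent. First I would verify the level: equation \eqref{qkz2} requires $s=(\ell+2)\eta$, and Lemma~\ref{lem:qkz2} establishes the cyclicity relation \eqref{eq:cycle} with the shift $u_1\mapsto u_1-3\eta$, i.e.\ $s=3\eta$, which forces $\ell=1$; comparing the sign in \eqref{eq:cycle} with \eqref{qkz2} (rewritten via the definition of $\rho$ acting on components) gives $\kappa=-1$. So the statement of the theorem is essentially the bookkeeping that reconciles Lemma~\ref{lem:qkz2} with the Definition of the level-$\ell$ qKZB system.

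Next I would handle equation \eqref{qkz1}, the exchange relation $R_i(u_i-u_{i+1})\Psi(\dots,u_i,u_{i+1},\dots)=\Psi(\dots,u_{i+1},u_i,\dots)$. Written in components using \eqref{eq:Rmat} and the explicit weights \eqref{eq:weights}, this is a finite linear system relating $\Psi_{\ba'}$ at swapped arguments to $\Psi_{\ba}$, where $\ba,\ba'$ differ only in the $i$-th height and the local configuration $(a_{i-1},a_i,a_{i+1})$ takes one of four shapes: the two "straight" configurations $a,a\pm1,a\pm2$ and the two "turning" configurations $a,a\pm1,a$. The straight cases are exactly relations (i) and (ii) of Lemma~\ref{lem:qkz1} (here I invoke Lemma~\ref{lem:psibpsi}, $\Psi^{(j)}=\bPsi^{(j)}$, precisely to get (i) for $\bPsi$ and (ii) for $\Psi$, the two cases noted as missing right after Lemma~\ref{lem:qkz1}), and the turning cases, after substituting the weights $W\!\left(\genfrac{}{}{0pt}{}{a\ a\pm1}{a\mp1\ a}\,\middle|\,u\right)$ and $W\!\left(\genfrac{}{}{0pt}{}{a\ a\pm1}{a\pm1\ a}\,\middle|\,u\right)$, reduce to relations (iii) and (iv). I would spell out this substitution once — clearing the common denominator $f(\eta-u_i+u_{i+1})$ and matching coefficients of $\tau_{i,i+1}\Psi_{\dots a,a+1,a\dots}$ and $\Psi_{\dots a,a-1,a\dots}$ against the right-hand sides of (iii), (iv) — to confirm that \eqref{qkz1} holds on every basis vector.

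Then for \eqref{qkz2} I would translate Lemma~\ref{lem:qkz2} into the form in the Definition: applying $\rho$ to $\Psi^{(j)}(\bu)$ permutes components as $(\rho\Psi)_{a_1\dots a_L}=\Psi_{a_L a_1\dots a_{L-1}}$ (up to the mod-$L$ incidence constraint, which is respected since $\rho$ maps $\cB_L$ to $\cB_L$), so \eqref{eq:cycle} reads $(\rho\Psi^{(j)})_{a_1\dots a_L}(u_1,\dots,u_L)=-\Psi^{(j)}_{a_1\dots a_L}(u_2,\dots,u_L,u_1-3\eta)$, which is exactly \eqref{qkz2} with $s=3\eta$ and $\kappa=-1$. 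I would also remark that $\Psi^{(j)}$ is well-defined as a $\cH_L$-valued function — finitely many nonzero components for each $\bu$ in the relevant domain, by the residue/contour description in Definition~\ref{def:Psi} and the integrated form \eqref{eq:intexp} — so that "$\Psi^{(j)}$ is a solution" is a meaningful assertion. Finally, since \eqref{qkz1} and \eqref{qkz2} together imply the qKZB equation \eqref{qkzeq} (as stated in the Definition), nothing further is needed for the theorem as phrased.

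The main obstacle I anticipate is purely combinatorial rather than analytic: making sure the four local-configuration cases of \eqref{qkz1} are genuinely exhausted by (i)--(iv) of Lemma~\ref{lem:qkz1} once the $\Psi\leftrightarrow\bPsi$ identification is used, and checking that no sign or normalization is dropped when passing from the $\tau_{i,i+1}$-plus-multiplication form of Lemma~\ref{lem:qkz1} to the operator form $R_i(u_i-u_{i+1})\Psi=\tau_{i,i+1}\Psi$ with the weights \eqref{eq:weights} — in particular that the scalar prefactor $f(\eta+u)/f(\eta-u)$ appearing in the straight weights matches the prefactor produced in (i)--(ii) after the $\Phi(\bu)$ transformation rule $\tau_{i,i+1}\Phi(\bu)=\bigl(f(\eta-u_{i+1}+u_i)/f(\eta-u_i+u_{i+1})\bigr)\Phi(\bu)$ is taken into account. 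Everything else is assembly.
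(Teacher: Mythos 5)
Your proposal is correct and follows essentially the same route as the paper: the paper's proof of Theorem~\ref{thm:qkz} is exactly the assembly you describe, using Lemma~\ref{lem:psibpsi} to extend relations (i)--(iv) of Lemma~\ref{lem:qkz1} to all local configurations of the exchange relation \eqref{qkz1}, and reading off $s=3\eta$ (hence $\ell=1$) and $\kappa=-1$ from the cyclicity relation of Lemma~\ref{lem:qkz2}. Your additional bookkeeping on the four local height configurations and the sign/normalization matching is more explicit than the paper's two-sentence proof but does not change the argument.
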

\begin{proof}
As $\Psi^{(j)}(\bu)=\bPsi^{(j)}(\bu)$ (Lemma \ref{lem:psibpsi}) we find that relations proven in Lemma \ref{lem:qkz1}, i.e. (i)-(iv), cover all cases of the exchange relations. Hence $\Psi^{(2)}(\bu)$ and $\Psi^{(3)}(\bu)$ satisfy Eq. \eqref{qkz1}. On the other hand Lemma \ref{lem:qkz2} shows that $\Psi^{(2)}(\bu)$ and $\Psi^{(3)}(\bu)$ satisfy Eq. \eqref{qkz2} with level $\ell=1$ and $\kappa=-1$. 
\end{proof}

Finally, by way of example, the explicit form of the integrated form of the solution $\Psi^{(3)}(u)$ is given in Appendix~\ref{app:ex}. 

\section{Analytic properties of the solutions} \label{sec:anal}
The properties of $\Psi^{(j)}$ as functions of $u_1,\ldots,u_L$ and of $\zeta$
are described in this section. In fact, {\em in this section only},
it will be convenient to write $u_0:=\zeta$ to treat all variables
simultaneously.

\subsection{Pseudo-periodicity}
We first have the following general result:
\begin{lem} \label{lem:pseudoperiod}
Given a vector-valued function $\Psi=\sum_{\mathbf a}\Psi_{\mathbf a}\ket{\mathbf a}$ which is analytic in the
variables $(u_0,u_1,\ldots,u_L)$,
and an element $(\lambda_0=\pi(m_0+n_0\tau),
\ldots,\lambda_L=\pi(m_L+n_L\tau))\in\Lambda$,
$\Psi$ is a
solution of the $q$KZ equation \eqref{qkzeq} at level $1$ ($s=3\eta$) iff 
$\Psi'$ is,
where $\Psi'=\sum_{\mathbf a}\Psi'_{\mathbf a}\ket{\mathbf a}$ 
is defined by any of the following:
\begin{enumerate}
\item $\Psi'_{\mathbf a}(u_0,\ldots,u_L)=\Psi(u_0,\ldots,u_m+\pi,\ldots,u_L)$
for some $m=0,1,\ldots,L$.
\item
$\Psi'_{\mathbf a}(u_0,\ldots,u_L)=\Psi_{\mathbf a}(u_0+\pi\tau,\ldots,u_L) e^{i\left(3(a_L\eta+\zeta)+\sum_{j=1}^L (a_j\eta+\zeta)+\sum_{j=1}^L (a_j-a_{j-1})u_j\right)}$.
\item
$\Psi'_{\mathbf a}(u_0,\ldots,u_L)=
\Psi_{\mathbf a}(u_0,\ldots,u_m+\pi\tau,\ldots,u_L)
e^{i\left((a_{m-1}\eta+\zeta)(a_m-a_{m-1})+(L-1)u_m-\sum_{i\ne m} u_i\right)}
$ for some $m=1,\ldots,L$.
\end{enumerate}
Furthermore, $\Psi$ and $\Psi'$ satisfy \eqref{qkzeq} with the same value of $\kappa$.
\end{lem}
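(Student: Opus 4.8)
The plan is to verify directly that each of the three transformations $\Psi\mapsto\Psi'$ preserves the qKZ equation \eqref{qkzeq} in its coefficient form \eqref{qkzeqcoeff}, and leaves $\kappa$ unchanged. The key input is the quasi-periodicity \eqref{eq:periodic} of $\vartheta_1$ (recall $f(u)=\vartheta_1(u|\tau)$), which controls how each SOS weight in \eqref{eq:weights} transforms when a spectral parameter, or $\zeta$, is shifted by $\pi$ or $\pi\tau$. Since \eqref{qkzeqcoeff} is a bilinear relation between the $\Psi_{\mathbf a}$ with weight factors $W(\cdots|u_i-u_j+s)$ or $W(\cdots|u_i-u_j)$, the shifts I must track are only those of \emph{differences} $u_i-u_j$ (for the weight arguments) together with the explicit exponential prefactors introduced in the definition of $\Psi'$. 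The strategy in each case is: compute how the left side $\Psi'_{\mathbf a'}(\ldots,u_i+s,\ldots)$ and each weight $W(\cdots)$ on the right side scale, and check that the net prefactor is $\mathbf a$-independent and identical on both sides, so that it cancels and \eqref{qkzeqcoeff} for $\Psi'$ reduces to that for $\Psi$.

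Case (1), the shift $u_m\mapsto u_m+\pi$, is the easiest: by the first line of \eqref{eq:periodic} every $f(x)$ appearing (in weights and in $\Psi_{\mathbf a}$) picks up only a sign $(-1)^{1}$ when its argument is shifted by $\pi$, and one checks that in \eqref{qkzeqcoeff} the number of such signs on the two sides matches (here one uses $|a_j'-a_{j-1}|=|a_j-a_{j-1}|=1$, so each weight in \eqref{eq:weights} has a fixed ``theta-degree'' in any given variable), leaving the equation unchanged with the same $\kappa$. Case (3), the shift $u_m\mapsto u_m+\pi\tau$ with the exponential correction $e^{i((a_{m-1}\eta+\zeta)(a_m-a_{m-1})+(L-1)u_m-\sum_{i\ne m}u_i)}$, is the representative hard computation: one substitutes into \eqref{qkzeqcoeff}, uses $\vartheta_1(x+\pi\tau|\tau)=-p^{-1}e^{-2ix}\vartheta_1(x|\tau)$ on the $u_m$-dependent thetas (those in $W(a_{j-1}',a_j',a_{j-1},a_j\,|\,u_m-u_j(+s))$ for $j=1,\ldots,L$ when $i=m$, and the single weight $W(\cdots|u_i-u_m(+s))$ when $i\ne m$), collects the resulting $e^{-2ix}$ factors and the height-dependent phases coming from the $a\eta+\zeta$-type arguments in \eqref{eq:weights}, and checks that the total phase produced equals exactly the difference of the chosen exponential prefactors evaluated at $\mathbf a'$ versus $\mathbf a$ (and, on the left, at the shifted argument $u_i+s=u_i+3\eta$). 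The case $i=m$ versus $i\ne m$ must be handled separately, and one must remember that $s=3\eta$ does not change under the lattice shift. Case (2) is then obtained from case (3) by a telescoping/product argument: formally shifting all of $u_1,\ldots,u_L$ by $\pi\tau$ is the composite of the $L$ transformations in (3), and combining their exponential corrections (using $\sum_j(a_j-a_{j-1})u_j$ telescopes appropriately around the cycle, with the ``extra'' $a_L\eta+\zeta$ term arising from the cyclic boundary condition built into $\mathcal B_L$) reproduces the stated prefactor for (2); alternatively one checks (2) directly by the same method.

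The main obstacle is the bookkeeping in case (3): correctly identifying, for each weight factor in \eqref{eq:weights}, its exact quasi-period in $u_m$ and in $\zeta$ — in particular the off-diagonal weights $W(a,a\pm1,a\mp1,a\,|\,u)$ and $W(a,a\pm1,a\pm1,a\,|\,u)$ contain $f[a\eta\pm\eta+\zeta]/f[a\eta+\zeta]$ and $f[a\eta\mp u+\zeta]/f[a\eta+\zeta]$, which scale nontrivially under $\zeta\mapsto\zeta+\pi\tau$ and couple the $u$- and $a$-dependence — and then showing that after summing the contributions of all $L$ weights along the row in \eqref{qkzeqcoeff} the height-dependent phases assemble into precisely $(a_{m-1}\eta+\zeta)(a_m-a_{m-1})$ (respectively its primed counterpart) modulo an $\mathbf a$-independent piece that cancels between the two sides. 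Once that telescoping of phases is verified, the invariance of $\kappa$ is immediate since no overall constant is generated. I would organize the write-up as: (i) record the quasi-periods of a single weight $W(\cdots|u)$ under $u\mapsto u+\pi$, $u\mapsto u+\pi\tau$, $\zeta\mapsto\zeta+\pi\tau$; (ii) do case (1); (iii) do case (3) for $i=m$ and $i\ne m$; (iv) deduce case (2); (v) note $\kappa$ is unchanged throughout.
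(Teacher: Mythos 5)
Your overall strategy --- verifying \eqref{qkzeqcoeff} directly by tracking the quasi-periodicity \eqref{eq:periodic} of each weight factor, i.e.\ by gauge transformations of the weights --- is exactly the paper's approach for cases (1) and (3), and your treatment of those two cases is sound. (A small remark on (1): $\Psi$ is an arbitrary analytic solution here, not assumed to be built from theta functions, so there are no signs to track ``in $\Psi_{\mathbf a}$''; the clean observation is that each weight in \eqref{eq:weights} is a ratio in which the signs from $u\mapsto u+\pi$ cancel, so the weights are individually invariant and (1) is immediate.)

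The genuine gap is your derivation of case (2). Composing the $L$ transformations of case (3) shifts all of $u_1,\ldots,u_L$ simultaneously by $\pi\tau$, which is \emph{not} the transformation in (2): statement (2) shifts $u_0=\zeta$. Moreover the composite tells you nothing new: every weight in \eqref{qkzeqcoeff} depends on the spectral parameters only through differences $u_i-u_j$ (or $u_i-u_j+s$), so a simultaneous shift leaves all weights invariant; correspondingly, the product of the case-(3) prefactors collapses to a constant (the $u$-dependent exponents $\sum_m\bigl((L-1)u_m-\sum_{i\ne m}u_i\bigr)$ vanish identically, and $\sum_m(a_{m-1}\eta+\zeta)(a_m-a_{m-1})=-L\eta/2$ by telescoping with $a_0=a_L$). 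In particular no term like $\sum_j(a_j-a_{j-1})u_j$, which appears in the prefactor of (2), can be produced this way. The shift $\zeta\mapsto\zeta+\pi\tau$ acts on the weights in a genuinely different manner, through the factors $f(a\eta\pm\eta+\zeta)/f(a\eta+\zeta)$ and $f(a\eta\mp u+\zeta)/f(a\eta+\zeta)$, and requires its own gauge-transformation computation; this is what the paper does, writing
$W\bigl(\begin{smallmatrix}a&b\\c&d\end{smallmatrix}\big|u-u';\zeta+\pi\tau\bigr)=W\bigl(\begin{smallmatrix}a&b\\c&d\end{smallmatrix}\big|u-u';\zeta\bigr)\,e^{i(c-b)}\,h(a,c|u)h(c,d|u')/\bigl(h(b,d|u)h(a,b|u')\bigr)$ with $h(a,c|u)=e^{i(c-a)u}$, and then cancelling the $h$-factors telescopically around the row. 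Your fallback clause (``check (2) directly by the same method'') is the correct route, but as written the primary argument for (2) does not go through.
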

\begin{proof}
The statement (1) is trivial %when $u_i$ is replaced with $u_i+\pi$ 
because
the weights $W$ of \eqref{eq:weights} are invariant under $u\mapsto u+\pi$ and $\zeta\mapsto\zeta+\pi$.

We consider the next statement (2), i.e.,
$\zeta\mapsto \zeta+\pi\tau$.
We have, writing explicitly the dependence on $\zeta$, the following
``gauge'' transformation
\begin{align*}
  \Wop[a][b][c][d][u-u';\zeta+\pi\tau] = \Wop[a][b][c][d][u-u';\zeta] e^{i(c-b)} \frac{h(a,c|u)h(c,d|u')}{h(b,d|u)h(a,b|u')}
\end{align*}
with $h(a,c|u)=e^{i(c-a)u}$.

After some compensations in the $q$KZ equation \eqref{qkzeqcoeff}, it is not hard to see that 
$\sum_{\mathbf a}\ket{a}\Psi_{\mathbf a}(u_0+\pi\tau,\ldots) e^{3i(a_L\eta+\zeta)+i\sum_{j=1}^L (a_j\eta+\zeta)} \prod_{j=1}^L h(a_{j-1},a_j|u_j)$
satisfies the same $q$KZ equation at level $1$ as $\Psi$. This coincides once again with 
the expression in the proposition.

Finally, for statement (3), i.e.,
$u_m\mapsto u_m+\pi\tau$ with $m=1,\ldots,L$,
we use the gauge transformation
\begin{align*}
  \Wop[a][b][c][d][u+\pi\tau] = e^{-3i\eta}\,\Wop[a][b][c][d][u] \frac{g(a,c)}{g(b,d)}=e^{-3i\eta}\,\Wop[a][b][c][d][u]\frac{g(a,b)}{g(c,d)}
\end{align*}
with $g(a,c)=e^{i (a\eta+\zeta)(c-a)}$.
We then consider Eq.~\eqref{qkzeqcoeff} and apply the first form of the gauge transformation for $i=m$,
and its second form for $i\ne m$. In both cases we find that the vector with
entries $g(a_{m-1},a_m)\Psi_{\mathbf a}(\ldots,u_m+\pi\tau,\ldots)$
satisfies Eq.~\eqref{qkzeqcoeff} with same $\ell$ 
but modified constant $\kappa'=\kappa e^{-3i\eta}$ in the first case, 
$\kappa'=\kappa e^{3i(L-1)\eta}$ in the second.
This constant mismatch can be fixed for $\ell=1$ by considering
$
g(a_{m-1},a_m)
\,
e^{i((L-1)u_m-\sum_{i\ne m} u_i)}
\,
\Psi_{\mathbf a}(\ldots,u_m+\pi\tau,\ldots)
$
which coincides with the expression in the proposition.
\end{proof}

Considering that we have two independent solutions, it is natural
to suspect that they are related to each other by such translations.
Indeed, 
define the lattice $\Lambda$ inside $\mathbb C^{L+1}\ni (u_0,u_1,\ldots,u_L)$:
\[
\Lambda:=
(\pi \mathbb Z+ \pi\tau \mathbb Z)^{L+1}
\]
and its sublattice of index 2:
\[
\Lambda^e=\left\{
(\pi(m_0+n_0\tau),
\ldots,\pi(m_L+n_L\tau))\in\Lambda:
\sum_{i=0}^L n_i = 0\pmod 2\right\}
\]

By direct computation from the integral expressions
\ref{def:Psi}, one obtains:
\begin{prop}\label{prop:pseudoperiod}
$\Psi^{(k)}_{\mathbf a}$ has pseudo-periodicity lattice $\Lambda^e$, $k=2,3$;
$\Psi^{(2)}_{\mathbf a}$ and $\Psi^{(3)}_{\mathbf a}$ are exchanged by odd elements of $\Lambda$,
namely
\begin{align*}
\Psi^{(k)}_{\mathbf a}(\ldots,u_i+\pi,\ldots)&=(-1)^k
\Psi^{(k)}_{\mathbf a}(\ldots)
\\
\Psi^{(k)}_{\mathbf a}(u_0+\pi\tau,\ldots)
&= c_0\,e^{-i\left(3(a_L\eta+\zeta)+\sum_{j=1}^L (a_j\eta+\zeta)+\sum_{j=1}^L (a_j-a_{j-1})u_j\right)}
\Psi^{(\sigma(k))}_{\mathbf a}(\ldots)
\\
\Psi^{(k)}_{\mathbf a}(\ldots,u_m+\pi\tau,\ldots)
&=c_m\,
e^{-i\left((a_{m-1}\eta+\zeta)(a_m-a_{m-1})+(L-1)u_m-\sum_{i\ne m} u_i\right)}
\Psi^{(\sigma(k))}_{\mathbf a}(\ldots)
&
m=1,\ldots,L
\end{align*}
where $\sigma(k)=5-k$, and
the $c_m$ are certain constants (which can be calculated explicitly
but will not be needed).
\end{prop}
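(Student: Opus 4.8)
The plan is to prove Proposition~\ref{prop:pseudoperiod} by direct computation from the integral representations in Definition~\ref{def:Psi}, tracking the quasi-periodicity factors picked up by the integrands $I_\ba^{(j)}(\bv|\bu)$ and the prefactor $\Phi_0(\bu)$ under each of the three elementary lattice shifts. The key point is that the $v_\ell$-integrations are insensitive to the shifts in question: by Lemma~\ref{lem:per} the integrand is doubly periodic in each $v_\ell$, so the contours $C_\alpha^{(\ell)}$ may be taken to be fixed representatives in the fundamental parallelogram, and a shift $u_m\mapsto u_m+\lambda$ simply drags the poles along. Because the catalogue of poles enclosed by $C_\alpha^{(\ell)}$ is specified combinatorially (the poles at $u_m$ for $m\le\alpha_\ell$), the contour after the shift still encircles exactly the shifted poles, so up to relabelling we may pull the shift out of the integral and only the overall quasi-periodic prefactor survives.

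First I would handle case (1), $u_m\mapsto u_m+\pi$. Here $\vartheta_1$ and $\vartheta_j(\cdot|2\tau)$ each pick up a sign: from \eqref{eq:periodic} (and its $2\tau$-analogue), $\vartheta_1(u+\pi)=-\vartheta_1(u)$, while $\vartheta_2(u+\pi|2\tau)=-\vartheta_2(u|2\tau)$ and $\vartheta_3(u+\pi|2\tau)=+\vartheta_3(u|2\tau)$. Counting the net sign in $I_\ba^{(j)}$ and in $\Phi_0$ — most $\vartheta_1$ factors occur in ratios and cancel — leaves precisely the sign $(-1)^k$ coming from the single $f^{(j)}$-factor $f^{(j)}(a_L\eta+\zeta-n\eta+2\sum v_l-\sum u_m)$ (whose argument shifts by $-\pi$ when one $u_m$ shifts by $\pi$, unless $m=0$, i.e.\ $\zeta$, which shifts by $+\pi$; either way the sign is that of $\vartheta_j(\cdot|2\tau)$ under a half-period, namely $(-1)^k$). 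This establishes the first displayed relation and, incidentally, that $\Psi^{(k)}$ is $\Lambda^e$-pseudoperiodic in the $\pi\mathbb Z$-directions.

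Next, cases (2) and (3), the $\pi\tau$-shifts. By Lemma~\ref{lem:pseudoperiod} we already know abstractly that $\Psi^{(k)}(\ldots,u_m+\pi\tau,\ldots)$, multiplied by the stated exponential, again solves the level-1 qKZB system with $\kappa=-1$; since the solution space is spanned by $\Psi^{(2)}$ and $\Psi^{(3)}$ (up to the normalization freedom), the shifted solution must be a combination $c\,\Psi^{(2)}+c'\,\Psi^{(3)}$. To pin down that it is purely $\Psi^{(\sigma(k))}$ with $\sigma(k)=5-k$, I would compute a single component — the "reference" component $\ba=(a+1,\ldots,a+n,\ldots,a+1,a)$ whose integrated value is given explicitly in \eqref{qkzupstate} — under the shift. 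For $j=2,3$ the only $\tau$-sensitive ingredient of \eqref{qkzupstate} is the factor $f^{(j)}(\cdots)=\vartheta_j(\cdot\,|2\tau)$; under a $\pi\tau$-shift of any $u_m$ (including $u_0=\zeta$) its argument shifts by a half-period of the modulus $2\tau$, namely $\pm\pi\tau = \pm\frac12(2\tau)\pi$, and the transformation $\vartheta_2(u+\pi\tau|2\tau)$, $\vartheta_3(u+\pi\tau|2\tau)$ interchanges $\vartheta_2\leftrightarrow\vartheta_3$ up to a common quasi-periodic factor $e^{\mp i u}q_2^{-1/4}$ (with $q_2=e^{2\pi i\tau}$). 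Matching that quasi-periodic factor against the explicit exponential prefactors in the proposition fixes the constants $c_m$, $c_0$, and confirms the index swap $k\mapsto 5-k$.

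The main obstacle I anticipate is bookkeeping: correctly assembling the product of quasi-periodic factors from the many $\vartheta_1$-ratios in \eqref{eq:Idef} and in $\Phi_0$ and checking they all cancel, and — more delicately — verifying that when a pole $u_m$ with $m=\alpha_\ell$ sits on the boundary of the combinatorial index range, shifting it by $\pi$ or $\pi\tau$ does not cause a pole to cross the contour $C_\alpha^{(\ell)}$ and change which residues are collected. The clean way around the latter is to invoke the integrated expressions \eqref{eq:intexp}: since there $\Psi^{(j)}_\ba$ is a finite explicit sum of products of $\vartheta_1$'s and one $\vartheta_j(\cdot|2\tau)$, one reads off the pseudo-periodicity term by term with no contour subtleties at all, and the identity $\Psi^{(j)}=\bPsi^{(j)}$ (Lemma~\ref{lem:psibpsi}) guarantees consistency between the two contour families. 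I would therefore carry out the final quasi-period count on \eqref{eq:intexp} rather than on the contour integrals.
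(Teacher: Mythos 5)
The paper offers no argument for this proposition beyond ``by direct computation from the integral expressions,'' and your plan is exactly that computation, correctly identifying the mechanism (cancellation of the $\vartheta_1$ quasi-period factors between numerator and denominator, and the half-period shift $\pi\tau=\tfrac12(2\tau)\pi$ swapping $\vartheta_2\leftrightarrow\vartheta_3$ at modulus $2\tau$, which produces $\sigma(k)=5-k$), and sensibly moving the final bookkeeping to the integrated expressions \eqref{eq:intexp} to sidestep contour-deformation issues. One caution: your intermediate appeal to ``the solution space is spanned by $\Psi^{(2)}$ and $\Psi^{(3)}$'' is established nowhere in the paper and should be dropped --- it is in any case superseded by the term-by-term computation on \eqref{eq:intexp} that you already propose, which also lets you check that the residual $\vartheta_1$ signs (an extra $(-1)^{n-1}$ from the $f(\eta-u_i+u_j)$ factors, not just the $f^{(j)}$ sign) combine to the stated $(-1)^k$.
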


\subsection{Holomorphicity}
Another key property of $\Psi^{(j)}(\bu)$ %(and hence $\bPsi^{(j)}(\bu)$) 
is given by the following:
\begin{prop}\label{prop:holo}
The functions $\Psi^{(j)}(\bu)$ $(j=2,3)$ are holomorphic in all the arguments $u_m$ ($m=0,1,\ldots,L$).
\end{prop}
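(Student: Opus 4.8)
The plan is to show that the integral expressions \eqref{eq:intform1} for $\Psi^{(j)}_{\ba}(\bu)$, which manifestly are meromorphic in all the $u_m$ (including $u_0=\zeta$), in fact have no poles. The only source of singularities in $u_m$ comes from the denominators of the integrand $I^{(j)}_{\ba}$: the factors $\f[u_r-v_\ell]$ and $\f[\eta-v_\ell+u_r]$ collide with each other or pinch the contour, and the explicit prefactor $\f[a_L\eta+\zeta]$ together with $\fj[\cdots]$ is entire. A naive pole in $u_m$ would therefore have to arise from a pinching of one of the $v_\ell$-contours $C^{(\ell)}_\alpha$ between two of its poles, or by a pole of the integrand crossing the contour. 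Since the contour $C^{(\ell)}_\alpha$ is defined to encircle exactly the poles at $v_\ell=u_1,\ldots,u_{\alpha_\ell}$ and to exclude the poles at $v_\ell=u_m+\eta$, the potentially dangerous configurations are (a) two of the ``included'' poles $u_{m}$, $u_{m'}$ ($m,m'\le\alpha_\ell$) colliding, and (b) an ``included'' pole $u_m$ colliding with an ``excluded'' pole $u_{m'}+\eta$ (which forces $u_m - u_{m'}=\eta$, i.e.\ $\eta-v_\ell+u_{m'}$ vanishes at $v_\ell=u_m$), thereby pinching the $v_\ell$-contour.

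First I would record, using the integrated form \eqref{eq:intexp}, that each term $\psi^{(j)}_{\ba,\bi}(\bu)$ is a ratio of theta functions whose potential poles are located on hyperplanes $u_m=u_{i_\ell}$ (with $m\le\alpha_\ell$, $m\ne i_\ell$) and $u_{i_\ell}=u_m-\eta$ (with $\alpha_\ell\le m\le L$), up to the lattice $\Lambda$; by the pseudo-periodicity of Proposition~\ref{prop:pseudoperiod} it suffices to rule out poles on these finitely many hyperplanes. For a pole at $u_m=u_{m'}$: one should fix $\ell$ with $m=i_\ell$ and collect the (at most two) terms in the sum over $\bi\in S_\alpha$ whose index tuples differ only by $i_\ell=m\leftrightarrow i_\ell=m'$ or similar reorderings. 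The claim is that the residues of these terms cancel pairwise. This is exactly the kind of cancellation one expects, and the cleanest way to see it is to go back to the contour-integral form: the residue at $u_m=u_{m'}$ of $\Psi^{(j)}_{\ba}(\bu)$ is a contour integral in the remaining $v$'s of a function which, upon setting $u_m=u_{m'}$, has a double pole at $v_\ell=u_{m}=u_{m'}$; but the numerator factor $\prod_{l<m\le n}\f[v_l-v_m]$ and the symmetry of the remaining integrand cause the contributions of the two contours (for the variable that used to circle $u_m$ and the one that circled $u_{m'}$) to cancel, much as in the standard Bethe-ansatz-type argument that a double pole of a symmetric rational/elliptic function integrates to zero. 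For a pinching pole at $u_m-u_{m'}=\eta$: here the contour $C^{(\ell)}_\alpha$ is squeezed between the pole at $v_\ell=u_m$ (inside) and $v_\ell=u_{m'}+\eta$ (outside); the residue contribution is the residue at $v_\ell=u_m$, and one shows this equals (up to sign) a residue that also appears with opposite sign from a different term of the sum, again using Lemma~\ref{lem:cbar} to move between $C$ and $\bC$ contours and the explicit form \eqref{eq:Idef} to match the factors.

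An alternative, possibly slicker route — and the one I would actually try first — is to exploit the machinery already built: by Theorem~\ref{thm:qkz} the $\Psi^{(j)}$ solve the qKZB system, so the exchange relations of Lemma~\ref{lem:qkz1} and the cyclicity \eqref{eq:cycle} relate the location of poles in $u_i$ to those in $u_{i+1}$ (and cyclically), while Lemma~\ref{lem:psibpsi} gives a second integral representation \eqref{eq:intform2} in which the roles of ``increase'' and ``decrease'' positions, and hence the candidate pole hyperplanes, are different. A pole of $\Psi^{(j)}_{\ba}$ at $u_m=u_{m'}$ would, via the explicitly regular prefactor $\frac{\f[a\eta+\zeta]\f[\eta-u_i+u_{i+1}]}{\f[u_i-u_{i+1}]\f[a\eta\pm\eta+\zeta]}$ in relations (iii)/(iv), be constrained: the only apparent pole in those relations is at $u_i=u_{i+1}$, and one checks that the numerator vanishes there, so no new poles are created by the exchange relations and the set of possible singularities is preserved under the $S_L$-action on the $u_i$; combined with the fact that $\bPsi^{(j)}$, which equals $\Psi^{(j)}$, is manifestly regular on the hyperplanes that are dangerous for $\Psi^{(j)}$ but not for $\bPsi^{(j)}$, one deduces holomorphicity. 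The main obstacle, in either approach, is the pinching case $u_m-u_{m'}=\eta$: unlike the collision case it is not a symmetry-of-the-integrand cancellation but a genuine contour-deformation argument, and one has to be careful that when $\alpha_\ell$ and the indices $m,m'$ are in the ``wrong'' order the pole is actually on the correct side of the contour so that no pinch occurs — handling all the index ranges $1\le m\le\alpha_\ell\le m'\le L$ uniformly is the fiddly part. I would organize that bookkeeping by treating the $\zeta$-variable ($u_0$) and the $u_m$ ($m\ge1$) on the same footing as suggested by the paper, reducing everything to a single residue identity for the integrand $I^{(j)}$ that is then checked directly from \eqref{eq:Idef} using the quasi-periodicity relations \eqref{eq:periodic} and the Riemann identity $\Xi=0$ already used in Lemma~\ref{lem:qkz1}.
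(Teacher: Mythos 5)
Your overall strategy coincides with the paper's: $\Psi^{(j)}_{\ba}$ is manifestly meromorphic, holomorphicity in $u_0=\zeta$ is immediate, pseudo-periodicity (Prop.~\ref{prop:pseudoperiod}) reduces the problem to finitely many candidate pole hyperplanes in a fundamental domain, and the collision poles $u_m=u_{i_\ell}$ are killed by pairwise cancellation of residues between terms $\psi^{(j)}_{\ba,\bi}$ of the integrated expansion \eqref{eq:intexp} whose index tuples differ by a swap $i_\ell\leftrightarrow i_{\ell'}$ or by the replacement $i_\ell\to m$. That part of your argument matches the paper's cases (1) and (2) exactly.

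The gap is in the $\eta$-shifted case, which you single out as ``the main obstacle'' and leave unresolved, proposing a contour-pinching/deformation analysis via Lemma~\ref{lem:cbar} that you do not carry out. In the paper this case is the \emph{trivial} one, and the reason is a piece of the definition your proposal never uses: the prefactor $\Phi(\bu)=\prod_{1\le i<j\le L}\f[\eta-u_i+u_j]$ multiplying the integral in \eqref{eq:intform1}. In the integrated form \eqref{eq:intexp} this product sits in the numerator of each $\psi^{(j)}_{\ba,\bi}$ and, since $i_\ell\le\alpha_\ell\le m$, it contains exactly the factor $\f[\eta-u_{i_\ell}+u_m]$ that appears in the denominator $\prod_{\alpha_\ell\le m\le L}\f[\eta-u_{i_\ell}+u_m]$; hence each such residue vanishes term by term and no contour is ever pinched. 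Your opening inventory of singularities (``the explicit prefactor $\f[a_L\eta+\zeta]$ together with $\fj[\cdots]$ is entire'') omits $\Phi(\bu)$ altogether, which is precisely why the hard-looking case remains open in your write-up. Your alternative route via the exchange relations and $\Psi^{(j)}=\bPsi^{(j)}$ is not circular (Theorem~\ref{thm:qkz} does not use holomorphicity), but it is likewise only sketched; as it stands, neither route in your proposal disposes of the hyperplanes $u_m=u_{i_\ell}\pm\eta$, so the proof is incomplete until the role of $\Phi(\bu)$ is made explicit.
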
 
\begin{proof}
From its definition \ref{def:Psi}, $\Psi^{(j)}$, as the integral of an expression
that is holomorphic in $\zeta$, is holomorphic in $\zeta=u_0$.

We consider next $m=1,\ldots,L$.
The first step in the proof is to use Prop~\ref{prop:pseudoperiod},
% the form \eqref{eq:intexp} of the integrated expression for  $\Psi^{(j)}(\bu)$, along with the quasi-periodicty
% of theta functions given by \eqref{eq:periodic} and the half-period property \cite{NIST}
% \ben \vartheta_j(z|2\tau)= e^{iz+i\pi \tau/2} \vartheta_{\sigma(j)}(z+\pi \tau|2\tau),\quad j\in\{2,3\},\quad \sigma(2)=3,\, \sigma(3)=2 \een
% means that 
% \ben  \Psi^{(j)}_{\ba}(\bu+(r_1+(2r_2+1)\tau)\pi)\propto \Psi^{(\sigma(j))}_{\ba}(\bu),\quad r_1,r_2 \in \Z\een
% where the proportionality constant is finite and $u$ independent. 
%Hence 
so that to prove holomorphicity, we need to demonstrate the absence of poles only within the fundamental domain $D_{\tau}$ of each of the 
arguments $u_m$.
Potential poles come from points where the arguments (or periodically shifted version of them) 
of the denominator of $\psi^{(j)}_{\ba,\bi}(u)$ vanish within this domain. These possibilities occur at
\begin{align*}
  u_m=u_{i_l},  \; 1\leq m\leq \alpha_\ell, \; m\neq i_\ell ;\quad\quad u_m=u_{i_l}+\eta,\;  \alpha_\ell\leq m \leq L.
\end{align*}
 We see immediately, due to the zero coming from the numerator term $\sli_{1\leq i<j\leq L} f(\eta-u_i+u_j)$ of $\psi^{(j)}_{\ba,\bi}(u)$, that
\begin{align*}
  \mbox{Res}\Big|_{ u_m=u_{i_l}+\eta} \psi^{(j)}_{\ba,\bi}(\bu) & = 0.
\end{align*}

\noindent For the remaining $u_m=u_{i_\ell}$ $(1\leq m\leq \alpha_\ell, m\neq i_\ell)$ possibility, we now show that the residue is zero.
There are two such cases to consider:\\
\begin{enumerate}
\item $m=i_{\ell'}$ for some  $\ell'\in\{1,2,\cdots,n\}/\{\ell\}$: In this case the residue is cancelled by the residue coming
from $\psi^{(j)}_{\ba,\bi'}(u)$ where $\bi'$ is the same as $\bi$ except that the values of $i_{\ell}$ and  $i_{\ell'}$ are exchanged.  

\item $m\neq i_{\ell'}$ for any $\ell'\in\{1,2,\cdots,n\}/\{\ell\}$: In this case the residue is cancelled by the residue coming
from $\psi^{(j)}_{\ba,\bi''}(u)$ where $\bi''$ is the same as $\bi$ except that the values of $i_{\ell}$ is replaced by $m$.  
\end{enumerate}

% \vspace*{10mm}
% we find that the residue of $\psi_{\ba,{\bf i}}(\bu)$ when $u_{j}\rightarrow u_{j'}$ may be potentially non-zero only in two cases:
% \begin{enumerate}
% 	\item Both $j$ and $j'$ appear in ${\bf i}$, say $i_{l}=j$ and $i_{l'}=j'$ where $j\leq \alpha_{l'}$ and $j'\leq \alpha_{l}$. Consequently, it implies there exists a unique ${\bf i}'\in S_{\ba}$ such that $i_{l}=j'$ and $l'=j$ with ${\bf i}$ and ${\bf i}'$ the same in all other entries. As $f(u)$ is an odd function one finds that
% 	$$ \mbox{Res}(\psi^{(j)}_{\ba,{\bf i}}(\bu), u_{j}\rightarrow u_{j'}) = - \mbox{Res}(\psi^{(j)}_{\ba,{\bf i}'}(\bu), u_{j}\rightarrow u_{j'}). $$
% 	%
% 	\item Only one of $j$ or $j'$ appears in ${\bf i}$, say $i_{l}=j$ where $j'\leq\alpha_{l}$. Consequently, it implies there exists a unique ${\bf i}'\in S_{\ba}$ such that $i_{l}=j'$ with ${\bf i}$ and ${\bf i}'$ the same in all other entries. As $f(u)$ is an odd function one finds that
% 	$$ \mbox{Res}(\psi^{(j)}_{\ba,{\bf i}}(\bu), u_{j}\rightarrow u_{j'}) = - \mbox{Res}(\psi^{(j)}_{\ba,{\bf i}'}(\bu), u_{j}\rightarrow u_{j'}). $$
% \end{enumerate}
\noindent Thus all the non-zero residues of $\psi^{(0)}_{\ba,{\bf i}}(\bu)$ appearing $\Psi_{\ba}(\bu)$ are cancelled out which proves the proposition. To clarify the proof, consider the example considered above with $L=4$, $\ba=(3,4,5,4)$,  $\alpha=(2,3)$, and
$ S_{\ba}=\{(1,2),(1,3),(2,1),(2,3)\}$. We find for example
\begin{align*}
   \hbox{Res}_{u_1=u_2} \psi^{(j)}_{\ba,(1,2)}(\bu) & =  -\hbox{Res}_{u_1=u_2} \psi^{(j)}_{\ba,(2,1)}(\bu)\\
\hbox{Res}_{u_2=u_3} \psi^{(j)}_{\ba,(1,2)}(\bu) & =  -\hbox{Res}_{u_2=u_3} \psi^{(j)}_{\ba,(1,3)}(\bu).
\end{align*}
\end{proof}

\subsection{Theta functions}
Define the matrix $\alpha_{ij}$ depending on a sequence
$\mathbf a=(a_1,\ldots,a_L)$:
\[
\alpha_{ij}(\mathbf a)=
\begin{cases}
L+3& 
i=j=0
\\
a_j-a_{j-1}& 
i=0,\ j=1,\ldots,L
\\
a_i-a_{i-1}& 
j=0,\ i=1,\ldots,L
\\
L-1&
i=j=1,\ldots,L
\\
-1&
i\ne j,\ i,j=1,\ldots,L
\\
\end{cases}
\]
Combining Props.~\ref{prop:pseudoperiod} and \ref{prop:holo}, we conclude the following:
\begin{cor}\label{cor:theta}
$\Psi^{(k)}_{\mathbf a}$ ($k=2,3$) is a {\em theta function}\/ in the variables
$u_0=\zeta$ and $u_1,\ldots,u_L$, with prescribed lattice of pseudo-periods
$\Lambda^e$ and degree matrix 
given by $\frac{1}{2}\alpha_{ij}(\mathbf a)$, i.e.,
\[
\Psi^{(k)}_{\mathbf a}(\mathbf{u+\boldsymbol\lambda})
=
c^{(k)}_{\mathbf{\boldsymbol\lambda,a}}
e^{-i \sum_{i,j=0}^L n_i \alpha_{ij}(\mathbf a) u_j}
\Psi^{(k)}_{\mathbf a}(\mathbf u),
\qquad
\boldsymbol{\lambda}=(\pi(m_0+n_0\tau),
\ldots,\pi(m_L+n_L\tau))\in\Lambda^e
\]
where the $c_{\mathbf{\boldsymbol\lambda,a}}^{(k)}$ are some constants.
\end{cor}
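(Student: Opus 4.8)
The plan is to combine the two structural inputs that have just been established: the pseudo-periodicity behaviour of $\Psi^{(k)}_{\mathbf a}$ under the generators of $\Lambda$ (Prop.~\ref{prop:pseudoperiod}) and the holomorphicity of $\Psi^{(k)}_{\mathbf a}$ in all the variables $u_0=\zeta,u_1,\ldots,u_L$ (Prop.~\ref{prop:holo}). The statement of the corollary is essentially a repackaging: ``theta function with lattice of pseudo-periods $\Lambda^e$ and degree matrix $\frac12\alpha_{ij}(\mathbf a)$'' is by definition a holomorphic function on $\mathbb C^{L+1}$ transforming under each $\boldsymbol\lambda\in\Lambda^e$ by multiplication by an exponential-of-linear-form times a constant, with the quadratic/linear part of the exponent dictated by the degree matrix. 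So once holomorphicity is in hand, the only real content is to check that the transformation laws of Prop.~\ref{prop:pseudoperiod}, restricted to $\Lambda^e$, assemble into exactly the prescribed form $c^{(k)}_{\boldsymbol\lambda,\mathbf a}\,e^{-i\sum_{i,j}n_i\alpha_{ij}(\mathbf a)u_j}$.

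Concretely, I would first record that $\Lambda^e$ is generated (as a group, together with the $\pi$-real translations, which act trivially up to sign/constant) by the elements of the form $\pi\tau(e_i+e_j)$, $0\le i\le j\le L$ — these are exactly the ``even'' $\tau$-translations, and iterating the single-$\tau$-shift formulas of Prop.~\ref{prop:pseudoperiod} twice produces an automorphy factor whose $\sigma(k)$'s compose to the identity, so one lands back on $\Psi^{(k)}$ (this is precisely why $\Lambda^e$, not $\Lambda$, is the pseudo-period lattice). Then I would compute the automorphy factor for a general $\boldsymbol\lambda=(\pi(m_0+n_0\tau),\ldots,\pi(m_L+n_L\tau))\in\Lambda^e$ by writing it as a product of the generators and multiplying the corresponding factors from Prop.~\ref{prop:pseudoperiod}. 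Each $u_m$-shift contributes, in its exponent, a term linear in the $u_i$ with integer coefficients: reading off the $\zeta$-shift line gives $-i(3(a_L\eta+\zeta)+\sum_j(a_j\eta+\zeta)+\sum_j(a_j-a_{j-1})u_j)$, whose $u_j$-coefficients for $j=1,\ldots,L$ are $a_j-a_{j-1}=\alpha_{0j}(\mathbf a)$ and whose $u_0=\zeta$-coefficient, after grouping the $a_j\eta+\zeta$ and $a_L\eta+\zeta$ constants, is $L+3=\alpha_{00}(\mathbf a)$ (the $\eta$-dependent pieces are $\mathbf u$-independent and get absorbed into the constant $c^{(k)}_{\boldsymbol\lambda,\mathbf a}$). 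Likewise the $u_m$-shift line ($m\ge1$) has $u_m$-coefficient $L-1=\alpha_{mm}(\mathbf a)$, $u_i$-coefficient $-1=\alpha_{mi}(\mathbf a)$ for $i\ne m$, $i\ge1$, and $u_0$-coefficient $a_m-a_{m-1}=\alpha_{m0}(\mathbf a)$ coming from the $(a_{m-1}\eta+\zeta)(a_m-a_{m-1})$ term. Summing over the shifts weighted by the $n_i$ gives exactly $-i\sum_{i,j=0}^L n_i\alpha_{ij}(\mathbf a)u_j$, with everything else — the $\eta$-dependent constants, the $c_m$, the signs $(-1)^k$ from the $\pi$-real part of $\boldsymbol\lambda$ — collected into $c^{(k)}_{\boldsymbol\lambda,\mathbf a}$.

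I would then note the cocycle consistency: because the generators of $\Lambda^e$ commute and the linear forms in the exponents are additive, the automorphy factors automatically satisfy the cocycle condition, so the formula is well defined independently of how $\boldsymbol\lambda$ is decomposed into generators — equivalently, one checks that the symmetric matrix $\alpha_{ij}(\mathbf a)$ is indeed symmetric (immediate from the definition, since $\alpha_{0j}=\alpha_{j0}=a_j-a_{j-1}$ and the $1,\ldots,L$ block is symmetric), which is exactly the integrality/symmetry condition needed for a legitimate degree matrix. Finally, holomorphicity is quoted verbatim from Prop.~\ref{prop:holo}, and together with the transformation law this is the definition of a theta function of the stated type, completing the proof.

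The main obstacle — really the only place where care is needed — is bookkeeping of the exponents: making sure that when one iterates and combines the three single-generator transformation laws of Prop.~\ref{prop:pseudoperiod} over an arbitrary element of $\Lambda^e$, the $\mathbf u$-linear parts add up coefficient-by-coefficient to precisely $-i\sum n_i\alpha_{ij}(\mathbf a)u_j$, with no cross-terms mis-signed and with the $\eta$-dependent and $c_m$-type pieces cleanly separated out as $\mathbf u$-independent constants. One should also double-check the role of $\Lambda^e$ versus $\Lambda$: the odd elements genuinely swap $\Psi^{(2)}\leftrightarrow\Psi^{(3)}$, so restricting to the index-2 sublattice $\Lambda^e$ is essential for the pseudo-periodicity (rather than mere quasi-periodicity up to the $k\mapsto5-k$ exchange) to hold — I would state this explicitly to justify why the corollary is phrased with $\Lambda^e$.
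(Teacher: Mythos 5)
Your proposal is correct and follows exactly the route the paper intends: the corollary is stated as an immediate consequence of combining Proposition~\ref{prop:pseudoperiod} (pseudo-periodicity) with Proposition~\ref{prop:holo} (holomorphicity), and the paper gives no further argument beyond that. Your explicit bookkeeping — matching the $u_j$-coefficients in the exponents of the three transformation laws to the entries $\alpha_{0j}$, $\alpha_{mm}=L-1$, $\alpha_{mi}=-1$, $\alpha_{00}=L+3$, and noting that the $k\mapsto 5-k$ exchange forces restriction to the index-2 sublattice $\Lambda^e$ — is a faithful (and more detailed) spelling-out of what the paper leaves implicit.
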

In general, one expects solutions of the elliptic $q$KZ equation
to have complicated analytic behaviour (see the discussion in section~\ref{sec:disc}).
Here, the particular solutions which we exhibit have the simplest possibly
behaviour: they are holomorphic in all variables. Theta functions
are the natural analogue of polynomial functions in the elliptic world,
and in fact these solutions degenerate to polynomial solutions
in the trigonometric or rational limit, see App.~\ref{app:lim}.

Note that the
dimension of the space of such theta functions
is $2\det(\frac{1}{2}\alpha_{ij}(\mathbf a))_{i,j=0,\ldots,L-1}$ 
(the factor of $2$ being
the index of $\Lambda^e$ inside $\Lambda$),
which is given by
$d_n=(n+1)n^{2(n-1)}=2,12,324,20480\ldots$ for $L=2n=2,4,6,8,\ldots$

\subsection{Recurrence relations}
We now want to consider the relationship between solutions to the qKZB system of different sizes. We denote $\Psi^{(j)}(\bu)$ for system size $L$ by $\Psi^{(L,j)}(\bu)$. This notation is also used for other functions. In the following lemma we show that one can extract the state $\Psi^{(L,j)}(\bu)$ from the state $\Psi^{(L+2,j)}(\bu)$.
\begin{prop} \label{prop:rec}
For $\ket{\ba}\in\mathcal{B}_{L}$ and $j=2,3$ the following recursion relations hold:
\begin{align*}
 & (i) && \left. \Psi^{(L,j)}_{\ba}(\bu) \right|_{u_{L-1}=u_{L}+\eta} \\ 
 &&& = \delta_{a_{L}}^{a_{L-2}} (a_{L}-a_{L-1}) (-1)^{n} f(a_{L-1}\eta+\zeta) \prod_{k=1}^{L-2} f(2\eta-u_{k}+u_{L}) \Psi^{(L-2,j)}_{a_{1},\dots,a_{L-2}}(u_{1},\dots,u_{L-2}), \\
 & (ii) && \Psi^{(L,j)}_{\ba}(\bu) \\
 &&& = \frac{(-1)^{n} f(\eta)\left\{\Psi^{(L+2)}_{a_{1}\dots,a_{L},a_{L}+1,a_{L}}(u_{1},\dots,u_{L},u,u+\eta) - \Psi^{(L+2)}_{a_{1}\dots,a_{L},a_{L}-1,a_{L}}(u_{1},\dots,u_{L},u,u+\eta) \right\} }{f(2\eta)f(a_{L}\eta+\zeta)\prod_{k=1}^{L} f(2\eta-u_{k}+u)} 	
\end{align*}
\end{prop}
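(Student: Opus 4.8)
\textbf{Proof plan for Proposition~\ref{prop:rec}.}

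The plan is to prove both recurrences directly from the integrated expressions \eqref{eq:intexp}, exploiting the fact that the integrand $\psi^{(j)}_{\ba,\bi}$ has an explicit rational-in-theta-functions form whose poles and zeros are easy to track. For part~(i), I would first observe that the constraint $|a_{L-1}-a_L|=1$ and $|a_{L-2}-a_{L-1}|=1$ force $a_{L-2}\in\{a_L,\,a_L\pm 2\}$; the factor $\prod_{1\le i<j\le L}f(\eta-u_i+u_j)$ appearing in every $\psi^{(j)}_{\ba,\bi}$ contains $f(\eta-u_{L-1}+u_L)$, which vanishes identically at $u_{L-1}=u_L+\eta$. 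Hence the only surviving terms in $\sum_{\bi\in S_\alpha}$ are those where this zero is cancelled by a pole of the denominator, i.e. terms with some $i_\ell = L-1$ and $\alpha_\ell = L$ (so that $f(u_{m}-u_{i_\ell})$ or $f(\eta-u_{i_\ell}+u_m)$ in the denominator supplies the compensating pole at $u_m = u_L$, $m=L$). Case analysis on $a_{L-2}$ shows that $a_{L-2}=a_L\pm 2$ is impossible (there the needed pole is absent, giving the $\delta^{a_{L-2}}_{a_L}$), and when $a_{L-2}=a_L$ exactly one of $a_{L-1}=a_L+1$ or $a_{L-1}=a_L-1$ occurs, producing the sign $(a_L-a_{L-1})$. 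The residue extraction at $u_{L-1}=u_L+\eta$ then collapses the $v_\ell$-integral (or rather its contribution in the sum) and, after relabelling $\alpha\to\alpha'$ for the truncated path $(a_1,\dots,a_{L-2})$ and carefully collecting the leftover prefactors, reproduces $\prod_{k=1}^{L-2}f(2\eta-u_k+u_L)$, the factor $(-1)^n f(a_{L-1}\eta+\zeta)$, and the size-$(L-2)$ integrand. The third-kind theta argument $f^{(j)}[\,a_L\eta+\zeta-n\eta+2\sum u_{i_\ell}-\sum u_m\,]$ must be checked to degenerate correctly to the size-$(L-2)$ one, using $2(u_{L-1}) - u_{L-1} - u_L = u_{L-1}-u_L = \eta$ at the specialisation and the shift in $n$; this bookkeeping is routine but is where most of the care goes.

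For part~(ii), the idea is that it is essentially an inversion of (i). I would evaluate the right-hand side: in each of the two size-$(L+2)$ states $\Psi^{(L+2)}_{a_1\dots a_L,a_L\pm1,a_L}(u_1,\dots,u_L,u,u+\eta)$ the last two variables are in exactly the specialised configuration $u_{L+1}=u,\ u_{L+2}=u+\eta$ handled by part~(i) (with the roles of $L$ and $L+2$). Applying (i) to each, the $\delta^{a_{L+2}}_{a_L}$ is automatically satisfied, the factor $(a_{L+2}-a_{L+1})$ gives $\mp 1$ so the difference of the two terms adds rather than cancels, each contributes $f(a_{L+1}\eta+\zeta)=f(a_L\eta\pm\eta+\zeta)$ and $\prod_{k=1}^{L}f(2\eta-u_k+u)$ together with a factor $f(u_{L+1}\eta\pm\cdots)$-type term from the height at position $L+1$ — and crucially a sign $(-1)^{n+1}$ from the size being $L+2=2(n+1)$. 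One then needs the elementary theta identity that combines the two terms $f((a_L+1)\eta+\zeta)\,\cdot\,(\text{rest}) - f((a_L-1)\eta+\zeta)\,\cdot\,(\text{rest})$ — after the specialisation the ``rest'' factors differ only through the height-dependent $f$'s at the last sites — into $f(2\eta)f(a_L\eta+\zeta)$ times the size-$L$ integrand; this is a three-term Riemann-type relation for $\vartheta_1$, of the same flavour as the $\Xi\equiv 0$ identity used in Lemma~\ref{lem:qkz1}. Solving the resulting linear relation for $\Psi^{(L,j)}_{\ba}$ gives the stated formula, with the $f(2\eta)$ and the $(-1)^n f(\eta)$ prefactor emerging from the normalisation constants ($c$, $f(\eta)^n$) and the extra residue taken in passing from size $L+2$ to size $L$.

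The main obstacle I anticipate is neither of the two conceptual steps but the precise tracking of \emph{signs and scalar prefactors} through the residue computations: the $(-1)^n$, the $f(\eta)$ versus $f(\eta)^n$ powers, the normalisation $c=\oint dz/(2\pi i\,f(z))$, and especially the behaviour of the second-period-$\vartheta_j(\,\cdot\mid 2\tau)$ factor $f^{(j)}$ under the specialisation $u_{L-1}=u_L+\eta$ and the resulting shift $n\to n-1$ in its argument. A secondary subtlety is verifying that in part~(i) the terms of $\sum_{\bi\in S_\alpha}$ with $i_\ell\ne L-1$ for all $\ell$ genuinely vanish (not merely that the obvious zero is present) — this requires checking that no other denominator factor develops a pole at $u_{L-1}=u_L+\eta$, which follows because the only denominator factors involving $u_{L-1}$ are $f(u_{L-1}-v_\ell)$ (poles at $v_\ell=u_{L-1}$, i.e. away from $u_L+\eta$ unless the integration has already pinched there) and $f(\eta-v_\ell+u_{L-1})$; after the $v$-integrals are done and one is looking at $\psi^{(j)}_{\ba,\bi}$, the relevant statement is just that $f(\eta-u_{L-1}+u_L)$ in the numerator is not matched by any denominator factor unless $i_\ell=L-1$, $\alpha_\ell=L$. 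Once these checks are in place the rest is a disciplined but unenlightening computation, so in the write-up I would state the key specialisation, do the case analysis on $a_{L-2}$ in full, and then present the prefactor computation compactly, deferring the $L=4$ instance to Appendix~\ref{app:ex} as a sanity check.
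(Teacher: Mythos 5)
Your plan for part (i) follows the paper's route: work directly with the integrated expressions, note that the numerator factor $f(\eta-u_{L-1}+u_L)$ vanishes at the specialisation, and keep only the terms of the sum over $\bi$ in which a denominator factor of the form $f(\eta-u_{i_\ell}+u_L)$ with $i_\ell=L-1$ supplies a compensating pole. One caveat: the vanishing for $a_{L-2}=a_L-2$ is \emph{not} simply ``the needed pole is absent'' if you insist on the $S_\alpha$-form (there $\alpha_{n-1}=L-1$, $\alpha_n=L$, and terms with $i_\ell=L-1$ do carry the compensating pole); the absence of poles is manifest only in the form adapted to the sign of the last step, which is why the paper invokes $\Psi^{(j)}=\overline\Psi{}^{(j)}$ (Lemma \ref{lem:psibpsi}) and treats $a_{L-1}=a_L+1$ with the $\overline\psi$-form and $a_{L-1}=a_L-1$ with a rewritten $\psi$-form. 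With that adjustment your part (i) is essentially the paper's proof.

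Part (ii) contains a genuine gap. Recursion (i) applies when the \emph{second-to-last} argument exceeds the last by $\eta$, i.e.\ at $u_{L+1}=u_{L+2}+\eta$. The two states on the right-hand side of (ii) are evaluated at $(u_{L+1},u_{L+2})=(u,u+\eta)$, which is the opposite ordering: there the relevant factor is $f(\eta-u_{L+1}+u_{L+2})=f(2\eta)\neq 0$, no collapse occurs, and (i) simply does not apply. The missing ingredient is the exchange relation \eqref{qkz1}: one writes $\Psi^{(L+2,j)}(\ldots,u,u+\eta)=R_{L+1}(-\eta)\,\Psi^{(L+2,j)}(\ldots,u+\eta,u)$ (the $R$-matrix is regular at $-\eta$, unlike at $+\eta$), applies (i) to the swapped state, and uses the fact that the weights $W(\cdot\,|-\eta)$ of type $(a,a\pm1;a\pm1,a\pm2)$ vanish while the remaining entries carry exactly the factors $f(\eta)$, $f(2\eta)$, $f(a_L\eta\pm\eta+\zeta)$, $f(a_L\eta+\zeta)$ appearing in (ii); inverting the resulting linear relation and relabelling $L\to L+2$ gives the claim. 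This is what the paper does. A further sign that your route cannot be repaired by a theta identity alone: if (i) did apply directly, both terms would reduce to the \emph{same} $\Psi^{(L,j)}_{\ba}(\bu)$ and (ii) would amount to the scalar identity $f((a_L+1)\eta+\zeta)+f((a_L-1)\eta+\zeta)\propto f(2\eta)f(a_L\eta+\zeta)/f(\eta)$, which holds for $f=\sin$ but is false for $f=\vartheta_1$; the height- and $u$-dependent coefficients supplied by $R_{L+1}(-\eta)$ are precisely what make the elliptic bookkeeping close.
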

\begin{proof}
To prove the first relation we need to split the proof into $a_{L-1}=a_{L}+1$ and $a_{L-1}=a_{L}-1$. We start with the former which also necessitates that $\bar{\alpha}_{n}=L$ and $i_{n}=L$ for any $\bi\in \bS_{\bar{\alpha}}$. We first observe that in such a case
\begin{align*}
  & \bar{\psi}^{(j)}_{\ba,\bi}(\bu) \\
  & = \frac{\f[a_{L}\eta+\eta+\zeta]}{\f[\eta]} \left\{\prod_{1 \leq l < n} \f[u_{i_{l}}-u_{L}] \f[\eta-u_{i_{l}}+u_{L}] \right\} \prod_{l=1}^{n-1}  \f[a_{\bar{\alpha}_{l}}\eta+\eta+\zeta+u_{i_{l}}-u_{\bar{\alpha}_{l}}]  \\
  & \quad \times \fj[a_{L}\eta+\zeta -n\eta-2\sum_{l=1}^{n}u_{i_{l}} +\sum_{m=1}^{L}u_{m}] \left\{\frac{\prod_{1 \leq l < m \leq n-1} \f[u_{i_{l}}-u_{i_{m}}]}{\prod_{l=1}^{n-1}\prod_{\bar{\alpha}_{l} \leq m \leq L,\,m\neq i_{l}} \f[u_{i_{l}}-u_{m}]} \right\}  \\
  & \quad \times \left\{ \frac{\prod_{1 \leq i < j \leq L-1} \f[\eta-u_{i}+u_{j}] \prod_{1 \leq l < m \leq n-1} \f[\eta-u_{i_{l}}+u_{i_{m}}]}{\prod_{l=1}^{n-1} \prod_{1 \leq m \leq \bar{\alpha}_{l}} \f[\eta-u_{m}+u_{i_{l}}]} \right\}.
\end{align*}
This expression is highly dependent upon $i_{n-1}$
\begin{align*}
  & \left. \bar{\psi}^{(L,j)}_{\ba,\bi}(\bu) \right|_{u_{L-1}=u_{L}+\eta} \\
  & = \left\{ \begin{array}{ll}
    0 & : i_{n-1}=L-1 \\
    (-1)^{n-1} \frac{\f[a_{L-1}\eta+\zeta]}{\f[\eta]} \prod_{1 \leq i \leq L-2} \f[2\eta-u_{i}+u_{L}] \bar{\psi}^{(L-2,j)}_{a_{1},\dots,a_{L-2},i_{1},\dots,i_{n-1}}(u_{1},\dots,u_{L-2}) & : i_{n-1}<L-1
  \end{array} \right.
\end{align*}
This result, along with Eq. \mref{eq:intexp}, allow us to prove the recursion relation in this case:
\begin{align*}
  & \left. \Psi_{\tilde{a}}(\tilde{u}) \right|_{u_{L-1}=u_{L}+\eta} \\
  & = \delta_{a_{L-2}}^{a_{L}}\left[f(\eta)\right]^{n} f(a_{L}\eta+\zeta) \sum_{\tilde{i}\in \bar{I}^{(\eta)}_{\tilde{a}}, \, i_{n-1}<L-1} \left. \bar{\psi}^{(\eta)}_{\tilde{a},\tilde{i}}(\tilde{u})  \right|_{u_{L-1}=u_{L}+\eta} \\
  & = \delta_{a_{L-2}}^{a_{L}} (-1)^{n-1} f(a_{L-1}\eta+\zeta) \prod_{1 \leq i \leq L-2} f(2\eta-u_{i}+u_{L}) \Psi^{(L-2)}_{a_{1},\dots,a_{L-2}}(u_{1},\dots,u_{L-2}),
\end{align*}
as required. The $a_{L-1}=a_{L}-1$ follows similarly, however one should use the expression
\begin{align*}
  \Psi_{\ba}(\bu) 
  & = (-1)^{n} \left[\f[\eta]\right]^{n} \f[a_{L}\eta+\zeta] \sum_{\tilde{i}\in \bS_{\alpha}} \fj[a_{L}\eta+\zeta +n\eta+2\sum_{l=1}^{n}u_{i_{l}}- \sum_{m=1}^{L}u_{m}] \\
  & \quad \times \prod_{l=1}^{n} \f[a_{\alpha_{l}}\eta-\eta+\zeta-u_{i_{l}}+u_{\alpha_{l}}] \left\{ \frac{\prod_{1 \leq l < m \leq n} \f[u_{i_{l}}-u_{i_{m}}]}{\prod_{l=1}^{n} \prod_{\alpha_{l} \leq m \leq L,\, m\neq i_{l}} \f[u_{i_{l}}-u_{m}]} \right\}\\
  & \quad \times \left\{  \frac{\prod_{1 \leq i < j \leq L} \f[\eta-u_{i}+u_{j}] \prod_{1 \leq l < m \leq n} \f[\eta-u_{i_{l}}+u_{i_{m}}]}{\prod_{l=1}^{n}\prod_{1 \leq m \leq \alpha_{l}} \f[\eta-u_{m}+u_{i_{l}}]} \right\}.
\end{align*}

The second recursion relations follows from the first and the qKZB exchange relation. Recall that Eq. \mref{qkz1} states
\begin{align*}
  \Psi^{(L,j)}_{\ba}(u_{1},\dots,u_{L-2},u_{L},u_{L-1}) & = \sum_{\ket{\ba'}\in\mathcal{B}_{L}} \left[\prod_{k\neq L-1} \delta_{a_{k}}^{a_{k}'}\right] \Wop[a_{L-2}][a_{L-1}][a_{L-1}'][a_{L}][u_{L-1}-u_{L}]  \Psi^{(L,j)}_{\ba'}(\bu).
\end{align*}
To prove the second recursion relation one must first set $u=u_{L-1}=u_{L}-\eta$ and $a_{L-2}=a_{L}$. Next the first recursion relation must be applied to the left hand side while expanding the right hand side. Lastly, one replaces $L$ with $L+2$.
\end{proof}
One can use the rotation relation to generate a family of recursion relations at the specialisation
$u_{i+1}=u_i-\eta$.

\subsection{Wheel condition}
\begin{prop} \label{prop:wheel}
For $\Psi(\bu)=\Psi^{(2)}(\bu),\Psi^{(3)}(\bu)$, if $u_{j}=u_{i}-\eta$ and $u_{k}=u_{j}-\eta$ for some $1 \leq i < j < k \leq L$ then
\begin{align}
  \Psi(\mathbf{u}) & = 0 \label{eqnWheel}
\end{align}
\end{prop}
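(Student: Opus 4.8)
The plan is to prove the wheel condition by induction on $L=2n$, using the recursion relation of Proposition~\ref{prop:rec}(i) together with the rotation/cyclicity symmetry \eqref{eq:cycle}. The base case $L=4$ (i.e.\ $n=2$) forces $i<j<k$ to be three of the four sites; one checks directly from the explicit integrated expression \eqref{eq:intexp}, or more efficiently notes that here $k\le 4$ and one of $\{i,j,k\}$ equals $L$ so that after at most one rotation we may assume $\{j,k\}=\{L-1,L\}$ with $u_{L-1}=u_L-\eta$, and then Proposition~\ref{prop:rec}(i) expresses $\Psi^{(4,j)}$ in terms of $\Psi^{(2,j)}_{a_1,a_2}$ multiplied by the prefactor $\prod_{k=1}^{L-2}f(2\eta-u_k+u_L)$; the remaining condition $u_i=u_{i'}$ or similar then needs to be tracked. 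Actually the cleanest route avoids separating a base case at all.

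Here is the main argument. First I would use the exchange relation \eqref{qkz1} (which holds by Theorem~\ref{thm:qkz}) and the cyclic relation \eqref{eq:cycle} to reduce the general wheel configuration to the special one $j=L-1$, $k=L$, $u_{L-1}=u_L-\eta$, $u_i=u_L-2\eta$ for some $i\le L-2$: applying $R$-matrices only permutes spectral parameters and acts invertibly, and the shift $u_1\mapsto u_1-3\eta$ together with cyclic rotation lets us move the three special indices into the last slots while only changing the overall height labels, which does not affect whether $\Psi=0$. It is important to check that the reduction does not hit a pole of an $R_i$; since $R_i(u_i-u_{i+1})$ is singular only when $u_i-u_{i+1}=\eta$ (mod periods) and in a wheel configuration the dangerous differences are controlled, one handles these degenerate exchanges by continuity. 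Then with $\{j,k\}=\{L-1,L\}$ and $u_{L-1}=u_L+\eta$ I invoke Proposition~\ref{prop:rec}(i):
\[
\left.\Psi^{(L,j)}_{\mathbf a}(\mathbf u)\right|_{u_{L-1}=u_L+\eta}
=\delta^{a_{L-2}}_{a_L}(a_L-a_{L-1})(-1)^n f(a_{L-1}\eta+\zeta)\prod_{k=1}^{L-2} f(2\eta-u_k+u_L)\,\Psi^{(L-2,j)}_{a_1,\dots,a_{L-2}}(u_1,\dots,u_{L-2}),
\]
and now the third wheel condition $u_i=u_L-2\eta$ makes the factor $f(2\eta-u_i+u_L)=f(0)=0$ vanish, so the whole expression is zero.

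The main obstacle is the reduction step: the naive statement of the wheel condition has $u_j=u_i-\eta$, $u_k=u_j-\eta$ with $i<j<k$ in \emph{arbitrary} positions, whereas Proposition~\ref{prop:rec}(i) only sees the last two sites and requires the specialisation $u_{L-1}=u_L+\eta$, not a chain of three. I would deal with this in two stages. Stage one: use \eqref{qkz1} to move the pair $(j,k)$ with $u_k=u_j-\eta$ to positions $(L-1,L)$; this is legitimate because transpositions of adjacent parameters are implemented by invertible $R$-matrices except at the resonance value, and the resonance is avoided generically and then recovered by continuity of the (holomorphic, by Prop.~\ref{prop:holo}) function $\Psi$. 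Stage two: the remaining constraint is $u_i=u_{L-1}+\eta=u_L+2\eta$ for some $i\le L-2$; after applying Prop.~\ref{prop:rec}(i) this $u_i$ survives as an argument of $\Psi^{(L-2,j)}$ and sits inside the product $\prod_{k=1}^{L-2}f(2\eta-u_k+u_L)$, whose $k=i$ term is $f(2\eta-u_i+u_L)=f(-0)=f(0)=0$. Since $f=\vartheta_1$ and $\vartheta_1(0|\tau)=0$, this kills the whole thing. One should double-check the sign/labelling conventions when the wheel involves the parameter $u_0=\zeta$ is \emph{not} an issue here since the statement restricts to $1\le i<j<k\le L$. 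Finally, for completeness I would remark that the $L=2$ case is vacuous (no three distinct indices), so the induction is well-founded with the above single inductive step doing all the work.
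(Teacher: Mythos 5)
Your argument is correct and is essentially the paper's proof: reduce via the exchange relation (minding that $R_i(\eta)$ is undefined, whence the ordering restriction on $i<j<k$) to the case where $j,k$ occupy positions $L-1,L$, apply Proposition~\ref{prop:rec}(i), and observe that the factor $f(2\eta-u_i+u_L)=f(0)=0$ annihilates the whole expression. Only fix the sign slip in your first reduction paragraph, where ``$u_{L-1}=u_L-\eta$, $u_i=u_L-2\eta$'' should read $u_{L-1}=u_L+\eta$, $u_i=u_L+2\eta$, as you correctly have in Stage two.
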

\begin{proof}
We start by seeing this lemma only applies with $L\geq 4$. We first consider the case $u_{L}=u_{L-1}-\eta=u_{L-2}-2\eta$. Using the recursion relations we find
\begin{align*}
  \Psi^{(L)}_{\ba}(\bu)
  & = \delta_{a_{L}}^{a_{L-2}} (a_{L}-a_{L-1}) (-1)^{n} f(a_{L-1}\eta+\zeta) \prod_{k=1}^{L-2} f(2\eta-u_{k}+u_{L}) \Psi^{(L-2)}_{a_{1},\dots,a_{L-2}}(u_{1},\dots,u_{L-2}) \\
  & = 0,
\end{align*}
due to the appearance of the term $f(2\eta-u_{L-2}+u_{L})$. Application of the exchange relation, Equation (\ref{qkz1}), proves the lemma. The reader should keep in mind that $R_{i}(\eta)$ is not defined and this leads to the restriction on the ordering of $i,j,k$.
\end{proof}

It would be interesting to study the space of theta functions with 
the same lattice of pseudo-periods $\Lambda^e$ and same degree
as $\Psi^{(k)}$
(cf~Cor.~\ref{cor:theta}) which satisfy the wheel condition \eqref{eqnWheel}.
The corresponding problem in the trigonometric case was solved by Kasatani
in \cite{Kasa-wheel}, leading to a a connection to nonsymmetric Macdonald
polynomials \cite{KT-qKZ}.

\section{The Combinatorial Line}\label{sec:combi}
The combinatorial line in the elliptic 8-vertex model occurs when $\eta=\frac{2\pi}{3}$. In this section we will see that when $\eta=\frac{2\pi}{3}$, which will be assumed for the entirety of the section, that the solutions of the qKZB system as defined in the previous section are eigenstates of the face transfer matrix. Depending on the value of the parameter $\zeta$ we find that the face model can be associated with either the 2-height RSOS model or the three colour problem.

\subsection{An Eigenstate of the Transfer Matrix} \label{secEigenstates}
On the combinatorial line we find that 
\begin{align*}
  S_{i} & = t(u_{i}), \\
  S_{i}\Psi^{(j)}({\bf u}) & = -\Psi^{(j)}({\bf u}),
\end{align*}
for $j=2,3$. This is the first hint that $\Psi^{(2)}({\bf u}),\Psi^{(3)}({\bf u})$ are eigenstates of the transfer matrix. To prove this indeed the case we will use the recursion relations determine in the previous section, however, we first require a useful identity. 

Consider a periodic chain with $L+2$ sites with the operator $R_{L}(u_{L+2}-u_{L}) \cdots R_{1}(u_{L+2}-u_{1}) \rho$ acting on either solution to qKZB system. The action of this operator can be computed either via the exchange relation, Eq. \mref{qkz1}, or the definition of the $R$-matrix, Eq. \mref{eq:Rmat}. Equating the two different ways as well as setting $u_{L+1}=u+\eta$ and $u_{L+2}=u$ leads to the identity
\begin{align*}
  & \Psi^{(L+2,j)}_{{\bf a}'}(u_{1},\dots,u_{L},u,u+\eta) \\
  & = -\sum_{\ket{{\bf a}}\in\mathcal{B}_{L+2}} \delta_{a_{L}}^{a_{L+1}'} \delta_{a_{L+1}}^{a_{L+2}'} \Wop[a_{L-1}'][a_{L}'][a_{L-1}][a_{L}][u-u_{L}] \cdots \Wop[a_{1}'][a_{2}'][a_{1}][a_{2}][u-u_{2}]  \\
  & \quad \times \Wop[a_{L+1}][a_{1}'][a_{L+2}][a_{1}][u-u_{1}] \Psi^{(L+2,j)}_{{\bf a}}(u_{1},\dots,u_{L},u+\eta,u),
\end{align*}
for $\ket{{\bf a}'}\in\mathcal{B}_{L+2}$ and $j=2,3$.

\begin{thm} \label{thmEigenstate}
%On the combinatorial line $\Psi^{(2)}({\bf u}),\Psi^{(3)}({\bf u})$ are both eigenstates of $t(u)$ with eigenvalue $-1$.
On the combinatorial line $t(u)\Psi^{(j)}({\bf u}) = -\Psi^{(j)}({\bf u})$ for $j=2,3$.
\end{thm}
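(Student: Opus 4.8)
The plan is to prove the eigenvalue equation $t(u)\Psi^{(j)}(\bu)=-\Psi^{(j)}(\bu)$ by induction on $L=2n$, using the recursion relation of Proposition~\ref{prop:rec} to reduce the size-$L$ statement to the size-$(L-2)$ statement, together with the identity displayed just before the theorem which relates $\Psi^{(L+2,j)}$ evaluated at the two orderings $(\ldots,u,u+\eta)$ and $(\ldots,u+\eta,u)$ of the last two spectral parameters. First I would establish the base case: for $L=2$ the transfer matrix acts on the two-dimensional space $\cH_2$ and one checks directly from \eqref{eq:tmat}, \eqref{eq:weights} and the explicit form of $\Psi^{(j)}$ (or from $S_i=t(u_i)$ and $S_i\Psi^{(j)}=-\Psi^{(j)}$, which is already asserted on the combinatorial line) that $t(u)\Psi^{(j)}(\bu)=-\Psi^{(j)}(\bu)$.

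For the inductive step, I would exploit the fact, noted in the excerpt, that on the combinatorial line $S_i=t(u_i)$, so that Theorem~\ref{thm:qkz} together with $\kappa=-1$ already gives $t(u_i)\Psi^{(j)}(\bu)=-\Psi^{(j)}(\bu)$ \emph{at the special points $u=u_i$}. The content of Theorem~\ref{thmEigenstate} is to upgrade this to \emph{all} $u$. The standard mechanism is: both $t(u)\Psi^{(j)}(\bu)$ and $-\Psi^{(j)}(\bu)$, componentwise, are theta functions in $u$ with the same lattice of pseudo-periods and the same degree (by Corollary~\ref{cor:theta} and the transfer-matrix structure, the extra factors from $t(u)$ being elliptic of controlled degree in $u$ with $L=2n$ effective spectral parameters), so their difference lies in a finite-dimensional space of theta functions; if that difference vanishes at sufficiently many values of $u$ — here the $L$ points $u=u_1,\ldots,u_L$ — and the dimension count works out, the difference is identically zero. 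So the concrete steps are: (1) quantify the pseudo-periodicity and degree in $u$ of the vector-valued function $u\mapsto t(u)\Psi^{(j)}(\bu)+\Psi^{(j)}(\bu)$ for each component, showing it is a theta function of degree strictly less than $L$ in $u$; (2) show it vanishes at $u=u_1,\ldots,u_L$ using $S_i=t(u_i)$ and $S_i\Psi^{(j)}=-\Psi^{(j)}$; (3) conclude it vanishes identically. Alternatively, and perhaps more cleanly in the present setup, one runs the induction through Proposition~\ref{prop:rec}(i): specialise $u_{L-1}=u_L+\eta$ in the size-$L$ eigenvalue equation, use the recursion to rewrite $\Psi^{(L,j)}$ in terms of $\Psi^{(L-2,j)}$ and the identity before the theorem to rewrite $t^{(L)}(u)$ acting on such a specialised state in terms of $t^{(L-2)}(u)$, invoke the inductive hypothesis, and then argue that the eigenvalue equation at the specialisation $u_{L-1}=u_L+\eta$ plus the rotational covariance of both sides (from \eqref{qkz2}) plus the theta-function/degree argument forces it for generic $u_{L-1},u_L$.

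I expect the main obstacle to be step (1)/(3) of the counting argument — making precise that $t(u)\Psi^{(j)}(\bu)+\Psi^{(j)}(\bu)$, as a function of $u$, is a theta function whose space of sections is spanned by its values at the $L$ points $u_1,\ldots,u_L$, i.e.\ that there is no "room" left after imposing those $L$ zeros. This requires carefully tracking the quasi-periodicity multipliers picked up by $t(u)$ (each weight $W$ in \eqref{eq:tmat} contributes, and $\eta=2\pi/3$ makes some of these collapse) and checking that, together with the degree matrix from Corollary~\ref{cor:theta}, the relevant space of vector-valued theta functions has dimension exactly $L$ in the variable $u$ — or, in the inductive formulation, checking that the recursion plus rotation exhausts all components so that no independent theta-function freedom survives. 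The identity displayed before the theorem is the key algebraic input that makes the transfer matrix "factor through" the recursion, and verifying it interacts correctly with the combinatorial-line value $\eta=2\pi/3$ (in particular that $R_i(\eta)$-type degeneracies do not spoil the argument, exactly as flagged in the proof of Proposition~\ref{prop:wheel}) will be the delicate bookkeeping.
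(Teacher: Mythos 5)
You have assembled the right ingredients (the recursion relations of Proposition~\ref{prop:rec}, the identity displayed before the theorem, and the observation that $S_i=t(u_i)$ on the combinatorial line), but both of your proposed routes hinge on a de-specialisation step --- upgrading the eigenvalue equation from the $L$ points $u=u_1,\dots,u_L$ (route (a)) or from the specialisation $u_{L-1}=u_L+\eta$ (route (b)) to generic $u$ by a theta-function degree count --- and you explicitly leave that step open. That step is genuinely delicate: componentwise, $\prod_{j=1}^{L} f(\eta-u+u_j)\,\bigl(t(u)\Psi^{(j)}+\Psi^{(j)}\bigr)$ is a theta function of order $L$ in $u$, and an order-$L$ theta function spans an $L$-dimensional space of sections and has exactly $L$ zeros subject to a constraint on their sum; so vanishing at the $L$ points $u_1,\dots,u_L$ does not by itself force identical vanishing. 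One would additionally have to verify that all components of $t(u)\Psi^{(j)}$ carry matching quasi-periodicity multipliers (a gauge computation in the spirit of Lemma~\ref{lem:pseudoperiod}) and that $\sum_i u_i$ avoids the sum-of-zeros constraint for generic inhomogeneities. As written, the argument is therefore incomplete exactly at its load-bearing point.

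The paper avoids this entirely, and the idea you are missing is that no interpolation is needed: instead of inducting downward from $L$ to $L-2$, one goes \emph{up} to size $L+2$ and realises the transfer-matrix spectral parameter $u$ as a \emph{free inhomogeneity} of the larger system. Concretely, the recursion relations of Proposition~\ref{prop:rec} lift $\Psi^{(L,j)}_{\mathbf a}(\mathbf u)$ to $\Psi^{(L+2,j)}(u_1,\dots,u_L,u+\eta,u)$ up to explicit prefactors; the identity displayed before the theorem --- itself a consequence of the exchange relation \eqref{qkz1} and the cyclicity \eqref{qkz2} at size $L+2$, which hold for \emph{all} values of the $L+2$ inhomogeneities by Theorem~\ref{thm:qkz} --- converts the action of the $L$ face weights $W(\,\cdot\,|u-u_i)$ into the swap $(u+\eta,u)\mapsto(u,u+\eta)$ at the cost of a factor $-1$; and the recursion brings $\Psi^{(L+2,j)}(u_1,\dots,u_L,u,u+\eta)$ back down to $-\Psi^{(L,j)}_{\mathbf a'}(\mathbf u)$. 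Since $u$ enters only as an inhomogeneity of the size-$(L+2)$ system, the eigenvalue equation holds for generic $u$ with no counting argument, no base case, and no induction. I would encourage you to rework your route (b) in this upward direction: the ``identity before the theorem'' is not an auxiliary input to be combined with an induction and an interpolation --- it essentially \emph{is} the proof once you let $u$ live inside the size-$(L+2)$ qKZB system.
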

\begin{proof}
The approach take is to show that $\Psi^{(L,j)}({\bf u})$ being an eigenstate of $t(u)$ follows from $\Psi^{(L+2,j)}({\bf u})$ satisfying the qKZB system relations by utilising the recursion relations to relate the different states.
\begin{align*}
  & t(u) \Psi^{(L,j)}({\bf u}) \\
  & = \sum_{{\bf a},{\bf a}'\in\mathcal{B}_{L}} \Wop[a_{L-1}'][a_{L}'][a_{L-1}][a_{L}][u-u_{L}] \cdots \Wop[a_{1}'][a_{2}'][a_{1}][a_{2}][u-u_{2}] \Wop[a_{L}'][a_{1}'][a_{L}][a_{1}][u-u_{1}] \Psi^{(L,j)}_{{\bf a}}({\bf u}) \ket{{\bf a}'} \\
  & = \sum_{{\bf a},{\bf a}'\in\mathcal{B}_{L+2}} \frac{(-1)^{n+1}(a_{L+1}'-a_{L}')}{\f[a_{L}'\eta+\zeta]\prod_{j=1}^{L} \f[\eta-u+u_{j}]} \delta_{a_{L}'}^{a_{L+2}'} \delta_{a_{L}}^{a_{L+1}'} \delta_{a_{L+1}}^{a_{L+2}'} \Wop[a_{L+1}][a_{1}'][a_{L+2}][a_{1}][u-u_{1}]  \\
  & \quad \times \Wop[a_{L-1}'][a_{L}'][a_{L-1}][a_{L}][u-u_{L}] \cdots \Wop[a_{1}'][a_{2}'][a_{1}][a_{2}][u-u_{2}] \Psi^{(L+2,j)}_{{\bf a}}(u_{1},\dots,u_{L},u+\eta,u) \ket{a_{1}'\cdots a_{L}'} \\
  & = - \sum_{{\bf a}'\in\mathcal{B}_{L+2}} \frac{(-1)^{n+1}(a_{L+1}'-a_{L}')}{\f[a_{L}'\eta+\zeta]\prod_{j=1}^{L} \f[\eta-u+u_{j}]} \delta_{a_{L}'}^{a_{L+2}'} \Psi^{(L+2,j)}_{{\bf a}'}(u_{1},\dots,u_{L},u,u+\eta) \ket{a_{1}'\cdots a_{L}'} \\
  & = - \Psi^{(L,j)}({\bf u}) 
\end{align*}
for $j=2,3$.
\end{proof}
On the combinatorial line we also find that due to the periodicity of $\f[u]$, the face weights are unchanged when all heights are shifted by three. This leads to a symmetry of the transfer matrix:
\begin{align*}
  t(u)T & = Tt(u) \quad \mbox{where} \quad \bra{{\bf a}'} T \ket{{\bf a}} =\prod_{i=1}^{L} \delta_{a_{i}+3}^{a_{i}'},
\end{align*}
for $\ket{{\bf a}},\ket{{\bf a}'}\in\mathcal{B}_{L}$. The operator $T$ is also a symmetry of the $R$-matrix. From Defs.~\ref{def:Idef} and \ref{def:Psi}, $\Psi^{(2)}({\bf u}),\Psi^{(3)}({\bf u})$
are eigenstates with eigenvalue one:
\begin{align*}
  T\Psi^{(j)}({\bf u}) & = \Psi^{(j)}({\bf u}),\qquad j=2,3.
\end{align*}
These properties allow for the construction of related integrable models with finite-dimensional Hilbert spaces. The models depend upon the parameter $\zeta$ and for each the state $\Psi({\bf u})$ maps to a state which satisfies the qKZB equations and is an eigenstate of the transfer matrix.

\subsection{The Three Colour Problem}
In this section we define the three colour problem for $\zeta\notin\frac{\pi}{3}\Z$ and show that $\Psi^{(2)}({\bf u}),\Psi^{(3)}({\bf u})$ can be mapped onto solutions to the qKZB system of the new model.

The Hilbert space of periodic three colour problem on a chain of $L$ sites, $\mathcal{H}_{L}^{3C}$, is defined as the complex span of the following set of basis vectors
\begin{align*}
  \mathcal{B}_{L}^{3C} & = \left\{\ket{\bar a_{1},\bar a_{2},\ldots,\bar a_{L}}\left|\right. \bar a_{i}\in \Z/3\Z, \, \bar a_{i-1}-\bar a_{i}=\pm 1 \right\}.
\end{align*}
The Hilbert space has dimension $2^{L}+2(-1)^{L}$, with odd length chains allowed. The face weights of the three colour problem are defined as
\begin{align*}
  \Wopc[\bar a][\bar b][\bar c][\bar d][u] & = \Wop[a][b][c][d][u] \quad \mbox{where} \quad \bar x = x \mod 3,
\end{align*}
and $a,b,c,d\in\Z$, $|a-b|=|a-c|=|b-d|=|c-d|=1$. All other face weights are defined to be zero. Using these face weights one can define $R$-matrices, $R^{3C}_{i}(u)$ and a transfer matrix, $t^{3C}(u)$, using Equations (\ref{eq:Rmat}) and (\ref{eq:tmat}), respectively, with weights $W$ replaced by $W^{3C}$.

Each sequence $\bar a: \Z/L\Z \to \Z/3\Z$ in $\mathcal{B}_L^{3C}$ can be lifted to a sequence $a: \Z/L\Z \to \Z$ with $a_{i+1}-a_i=\pm 1$; the latter is defined uniquely up to overall shift by $3$. This allows to divide $\mathcal{B}_L^{3C}$ into sectors:
\[
\mathcal{B}_L^{3C} = \bigsqcup%_{k=-L}^L 
_{\substack{|k|\le L \\ k=L\pmod 2 \\ k=0\pmod 3}}
\mathcal{B}_{L;k}^{3C},
\qquad
\mathcal{B}_{L;k}^{3C}=
\left\{ \ket{\bar a_1,\bar a_2,\ldots,\bar a_L}: a_{i+L}=a_i+k\right\}
\]
The corresponding decomposition of $\mathcal{H}_L^{3C}=\bigoplus_k \mathcal{H}_{L;k}^{3C}$ is preserved by the dynamics. In particular, when $L$ is even,
$\mathcal{H}_{L;0}^{3C}$ naturally embeds itself inside $\mathcal{H}_L$:
\[
\phi \ket{\bar a_1,\ldots,\bar a_L} = \sum_{i=-\infty}^{+\infty} \ket{a_1+3i,a_2+3i,\ldots,a_L+3i}
\]
$\phi$ induces an isomorphism from $\mathcal{H}_{L;0}^{3C}$ to its
image, which is precisely 
$\{v\in\mathcal{H}_{L}| Tv=v \}$.
In what follows, we restrict $R_i^{3C}(u)$ and $t^{3C}(u)$ to the sector
$\mathcal{H}_{L;0}^{3C}$.

The map $\phi$ has the following properties:
\begin{align*}
  %R_{i}^{3C}(u)\phi(v) & = \phi( R_{i}(u)v ) \\
  %t^{3C}(u)\phi(v) & = \phi( t(u)v )
  \phi R_{i}^{3C}(u) & = R_{i}(u) \phi \\
  \phi t^{3C}(u) & =  t(u)\phi
\end{align*}
As $\Psi^{(2)}({\bf u}),\Psi^{(3)}({\bf u})$ are invariant under the action of $T$ it follows that the states $\phi^{-1}\Psi^{(2)}({\bf u}),\phi^{-1}\Psi^{(3)}({\bf u})$ must be solutions of the qKZB system and eigenstates of the transfer matrix, $t^{3C}(u)$.

In other models (see \cite{artic31,artic42,artic59}), 
the solution to the qKZ(B) system
at the special point where the shift $s$ vanishes, is not only an eigenstate 
but also the ground state of the model. This, however, is not the case for the three colour problem. If we instead interpret the model as a quantum chain with Hamiltonian
\begin{align}
  H & = \left. \frac{d}{du} \ln(t^{3C}(u)) \right|_{u=0}. \label{eq:Ham}
\end{align}
For small system sizes ($L=4,6$) we find that in the homogeneous case, $u_{1}=\cdots =u_{L}=0$, $\phi^{-1}\Psi^{(2)}({\bf u}),\phi^{-1}\Psi^{(3)}({\bf u})$ are the first excited states. We conjecture this to be true for larger system sizes. If we take the trigonometric limit, $\tau\rightarrow 0$, then we find that $\phi^{-1}\Psi^{(2)}({\bf u}),\phi^{-1}\Psi^{(3)}({\bf u})$ correspond to ground states of twisted sectors of the model (see Appendix \ref{app:lim}). We suspect that there are analogous sectors in the elliptic case and determining these would provide a stepping stone to finding suitable boundary conditions for obtaining solutions to the qKZB system in the vertex language.

\subsection{The 2-Height RSOS model}

In the following we show that when $\zeta \in \frac{\pi}{3}\Z$ the model reproduces the 2-height RSOS model. Furthermore, the solutions to qKZB system of the SOS model are mapped to solutions to the qKZB system of the RSOS model and are both eigenstates of the transfer matrix. We show this for $\zeta=0$, assumed for all the computations below, and discuss the generalisation.

We consider the periodic RSOS model on a chain of $L=2n$ sites with the Hilbert space $\mathcal{H}_{L}^{RSOS}$ defined as the complex span of the following set of basis vectors
\begin{align*}
  \mathcal{B}_{L}^{RSOS} & = \left\{\ket{a_{1},a_{2},\dots a_{L}}\left|\right. a_{i}\in \{1,2\}, \, |a_{i-1}-a_{i}|=1 \right\}.
\end{align*}
It is clear that the Hilbert space has dimension 2 regardless of chain length and is a subspace of both $\mathcal{H}_{L}$ and $\mathcal{H}_{L}^{3C}$. Despite $R_{i}(u)$ and $t(u)$ no longer being defined, due to the existence of poles, on all of $\mathcal{H}_{L}$ both are defined $\mathcal{H}_{L}^{RSOS}$. Furthermore,  we find that $\mathcal{H}_{L}^{RSOS}$ is invariant under the action of $R_{i}(u)$ and $t(u)$. This implies one can construct an integrable model associated to the space $\mathcal{H}_{L}^{RSOS}$, this model is precisely the standard 2-height elliptic RSOS model \cite{ABF}.

Due the homomorphicity of the two solutions to the qKZB equation both are still valid. However, one needs to consider the effect of setting $\zeta=0$ on coefficients $\Psi_{{\bf a}}({\bf u})$. We first compute that if $i_{1}=1$ and $\alpha_{1}=1$ then
\begin{align*}
  \psi^{(j)}_{{\bf a},{\bf i}}({\bf u})
  & = \fj[a_{L}\eta -n\eta+u_{1}+2\sum_{l=2}^{n}u_{i_{l}}- \sum_{m=2}^{L}u_{m}] \f[a_{1}\eta] \prod_{l=2}^{n} \f[a_{\alpha_{l}}\eta-u_{i_{l}}+u_{\alpha_{l}}] \\
  & \quad \times  \left\{ \frac{\prod_{1 < m \leq n} \f[u_{1}-u_{i_{m}}]}{\prod_{2 \leq m \leq \alpha_{1}} \f[u_{m}-u_{1}]} \right\} \times \left\{ 	\frac{\prod_{1 \leq i < j \leq  L} \f[\eta-u_{i}+u_{j}]\prod_{1 < m \leq n} \f[\eta-u_{1}+u_{i_{m}}]}{ \prod_{\alpha_{1} \leq m \leq  L} \f[\eta-u_{1}+u_{m}]} \right\} \\
  & \quad \times \left\{ \frac{\prod_{2 \leq l < m \leq n} \f[u_{i_{l}}-u_{i_{m}}]}{\prod_{l=2}^{n}\prod_{1 \leq m \leq \alpha_{l},\, m\neq i_{l}} \f[u_{m}-u_{i_{l}}]} \right\} \times \left\{ \frac{\prod_{2 \leq l < m \leq n} \f[\eta-u_{i_{l}}+u_{i_{m}}]}{\prod_{l=2}^{n} \prod_{\alpha_{l} \leq m \leq  L} \f[\eta-u_{i_{l}}+u_{m}]} \right\}.
\end{align*}
This allows us to see that $\Psi_{{\bf a}}({\bf u})$ must have factors $\f[a_{L}\eta]$ and $\f[a_{1}\eta]$. The rotation condition implies that $\Psi^{(j)}_{{\bf a}}({\bf u})=0$ whenever any height appearing in ${\bf a}$, say $a_{i}$, is a multiple of 3. Consequently $\phi^{-1}\Psi^{(j)}$ belongs to the subspace $\mathcal{H}_{L}^{RSOS}$, satisfies the qKZB equation and is an eigenstate of the transfer matrix.
In terms of the RSOS basis the transfer matrix is
\begin{align*}
  t(u) & = \left(\begin{array}{cc} 0 & 1 \\ 1 & 0 \end{array}\right).
\end{align*}
As the two solutions to the qKZB system we constructed have the same eigenvalue we conclude that $\Psi^{(2)}({\bf u})=\Psi^{(3)}({\bf u})$, in contrast to the original SOS model and three colour problem. This has been verified for small system sizes.

This approach can be generalised for any $\zeta \in \frac{\pi}{3}\Z$ as it only relies upon $\f[a\eta+\zeta]$ being zero for certain values of the height $a$. Naturally the heights allowed in the basis states may differ but the resulting model is equivalent.

\section{Discussion}\label{sec:disc}
To summarise: in this paper, we have constructed  solutions to the level-1 qKZB equation \eqref{qkzeq} as the integral expressions \eqref{eq:intform1} (see Theorem \ref{thm:qkz}).
These solutions are pseudo-periodic (Proposition \ref{prop:pseudoperiod}),  holomorphic (Proposition \ref{prop:holo}), and therefore theta functions (Corollary \ref{cor:theta}) in both dynamical and spectral parameters; they satisfy recurrence relations (Propositions \ref{prop:rec}) and the wheel condition (Proposition \ref{prop:wheel}). Along the combinatorial line, they correspond to eigenvectors of the SOS periodic transfer matrix with an even number of sites. We have discussed the connection with eigenvectors of the (R)SOS three-colour problem and 2-Height model when the free parameter is $\neq \frac{\pi}{3}\Z$ and $=\frac{\pi}{3}\Z$ respectively.  Based on the evidence of size $4$ and $6$ chains we conjecture that the our qKZB solutions correspond to the first excited states of the  three-colour problem in the
 homogeneous $u_1=u_2=\cdots = u_L$ case. 

Let us now discuss some connections with existing work. Firstly, solutions of qKZB equations are constructed in \cite{FTV-qKZB}. Our 
solutions appear to differ
from these solutions in several respects:  the integrands of our solutions  \eqref{eq:intform1} are theta functions, that is infinite products, whereas the integrands of those in  \cite{FTV-qKZB} are double infinite products.
Also, the solutions of  \cite{FTV-qKZB} are ``hypergeometric integrals'' in the sense that they generalise 
the generic hypergeometric solution of the KZ equation. In comparison, our solutions reduce to {\em polynomials}\/ in the trigonometric limit, and {\em a fortiori}, in the KZ limit.

A second way of constructing solutions of the  qKZB equations is provided by the vertex operator approach to SOS models described in \cite{Fodaetal94}. In this approach, qKZB solutions are given in terms of the trace of products of vertex operators, which can be written as
multiple integral expressions using the free-field realisation of \cite{LukPugai96}. 
However, the required trace of vertex operators ceases to be well-defined in the `critical-level' limit $\ell=\rightarrow -2$ -- which is precisely the point at which the trace would correspond to an eigenvector of the SOS transfer matrix. For the combinatorial-line case with $\eta=2\pi/3$ these trace solutions are also ill-defined at $\ell=1$. Our approach has been to construct new solutions, specifically defined at $\ell=1$ for generic $\mu$ and for which the combinatorial-line $\eta=2\pi/3$ limit is well-defined. 
We then use the result that along the combinatorial line these $\ell=1$ solutions correspond to eigenvectors of the transfer matrix;
this fact is ultimately due to the simple observation that along this line the shift parameter in the qKZB equation is $s=3\eta=2\pi$ and hence that the 
$R(u+s)$ which appears in the qKZB equation \eqref{qkzeq} is equal to $R(u)$.  This strategy is similar to that used for the trigonometric case in \cite{artic31}, with one key difference. In \cite{artic31}, the level $\ell=1$ qKZ solutions were themselves constructed by computing matrix elements (as opposed to traces) of vertex operators between level-1 highest weight states \cite{JM-book}. In the trigonometric case, such matrix elements obey level-1 qKZ equations as was shown in \cite{FR-qKZ}. In the elliptic case, analogous level-1 one matrix elements can also be computed but do {\it not} obey level-1 qKZB equations% \cite{?}
. It is for this reason that we were led to seek the independent construction of level-1 solutions presented in this paper.

Finally, let us consider the connection with the earlier work \cite{artic57} on the 8-vertex on the 
combinatorial line. 
There are some technical differences with \cite{artic57}:
our system has {\em even}\/ size, which is necessary in a height model
with periodic boundary conditions where heights on neighbouring sites
differ by $\pm 1$. This prevents a direct identification
of the present model with the eight-vertex model of \cite{artic57}
via the vertex-IRF transformation \cite{Baxter1973}. Furthermore, in the current work, 
we are able to provide explicit formulae
for the entries of the solution of the qKZB system, leaving the inhomogeneities
free -- in \cite{artic57}, they were ``half-specialised'' to pairs of opposite values,
resulting in expressions involving only even theta functions and therefore a rational parameterisation;
here no such parameterisation is available.

%\subsection*{Acknowledgements} 
 % discussion

\begin{appendix}
\section{The \texorpdfstring{$L=4$}{L=4} Case}\label{app:ex}
\def\th{\vartheta}
We describe here our solutions of the $q$KZB system in size $L=4$.
We only consider here one solution, $\Psi=\Psi^{(3)}$, noting that the other
one can be recovered by using Prop.~\ref{prop:pseudoperiod}.
Once one height is fixed, say $a_L=a$, $\Psi$ has $\binom{4}{2}=6$ components.

We start with the trivial component (Eq. \mref{qkzupstate})
% \[
% \Psi_{a-1,a-2,a-1,a}=\th_1(a)\th_1(a-\eta)\th_1(a-2\eta)
% \th_1(u_2-u_1+\eta)\th_1(u_4-u_3+\eta)\th_3(u_3+u_4-u_1-u_2+a+2\eta|2\tau)
% \]
\begin{multline*}
\Psi_{a+1,a+2,a+1,a}=
\th_1(a\eta+\zeta)
\th_1((a+1)\eta+\zeta)
\th_1((a+2)\eta+\zeta)
\\
\th_1(u_2-u_1+\eta)
\th_1(u_4-u_3+\eta)
\th_3((a-2)\eta+\zeta+u_1+u_2-u_3-u_4|2\tau)
\end{multline*}
(the quasi-period is $\tau$ unless otherwise specified).
Using the cyclicity relation \eqref{qkz2}, we can obtain 3 more components
which also have the same simple factorised form.

The 2 remaining components are obtained from each other by cyclic rotation.
We consider here $(a+1,a,a+1,a)$.
According to Cor.~\ref{cor:theta},
we have to search inside a space of theta functions of dimension $12$, 
but actually, by application of relation (iii) of Lem.~\ref{lem:qkz1}
(which allows to compute this component out of $\Psi_{a+1,a+2,a+1,a}$),
it has a prefactor of $\th_1(a\eta+\zeta)\th_1((a+1)\eta+\zeta)$, 
which reduces the search to a space of dimension $4$, spanned by:
% \begin{align*}
% f_1&=\th_1(a)\th_1(a-\eta)\th_1(a-\frac{1}{2}\eta)\th_1(u_4-u_2+\frac{3}{2}\eta)\th_1(u_3-u_1+\frac{3}{2}\eta)\th_3(u_4+u_2-u_3-u_1+a+\eta;p^2)
% \\
% f_2&=\th_1(a)\th_1(a-\eta)\th_2(a-\frac{1}{2}\eta)\th_2(u_4-u_2+\frac{3}{2}\eta)\th_2(u_3-u_1+\frac{3}{2}\eta)\th_3(u_4+u_2-u_3-u_1+a+\eta;p^2)
% \\
% f_3&=\th_1(a)\th_1(a-\eta)\th_3(a-\frac{1}{2}\eta)\th_3(u_4-u_2+\frac{3}{2}\eta)\th_3(u_3-u_1+\frac{3}{2}\eta)\th_2(u_4+u_2-u_3-u_1+a+\eta;p^2)
% \\
% f_4&=\th_1(a)\th_1(a-\eta)\th_4(a-\frac{1}{2}\eta)\th_4(u_4-u_2+\frac{3}{2}\eta)\th_4(u_3-u_1+\frac{3}{2}\eta)\th_2(u_4+u_2-u_3-u_1+a+\eta;p^2)
% \end{align*}
\begin{multline*}
f_k=\th_1(a\eta+\zeta)\th_1((a+1)\eta+\zeta)\th_k((a+\frac{1}{2})\eta+\zeta)
\hfill (k=1,\ldots,4)
\\
\th_k(u_4-u_2+\frac{3}{2}\eta)\th_k(u_3-u_1+\frac{3}{2}\eta)\th_{4-\lceil k/2\rceil}(u_4+u_2-u_3-u_1+(a-1)\eta+\zeta|2\tau)
\end{multline*}
Then $\Psi_{a,a-1,a,a-1,a}=\sum_{k=1}^4 \alpha_k f_k$, with
\[
\alpha_k 
=(-1)^{\frac{(k-3)(k-4)}{2}}\frac{1}{\th'_1(0)} \prod_{\ell\ne k}\th_\ell(\eta/2)
\]

\section{The Trigonometric and Rational Cases} \label{app:lim}
\subsection{The Trigonometric Case}
The trigonometric SOS model is defined on the same Hilbert space as the elliptic SOS model with face weights again defined by Equation (\ref{eq:weights}) while setting $f(u)=\sin(u)$. The $R$-matrices and transfer matrix are defined by Equations (\ref{eq:Rmat}) and (\ref{eq:tmat}), respectively. If we consider the limit $\tau\rightarrow 0$, i.e. setting the elliptic nome to one then we find
\begin{align*}
  \lim_{\tau\rightarrow 0} \vartheta_{1}(u|\tau)\propto \sin(u), \hspace{1cm}
  \lim_{\tau\rightarrow 0} \vartheta_{2}(u|\tau)\propto \cos(u), \hspace{1cm}
  \lim_{\tau\rightarrow 0} \vartheta_{3}(u|\tau)\propto 1,
\end{align*}
where the proportionality is unimportant for the purpose of this discussion. From these relations we are able to view the trigonometric SOS model as a limit of the elliptic SOS model and infer results. Setting $f^{(2)}(u)=\cos(u)$ and $f^{(3)}(u)=1$ implies that the states $\Psi^{(2)}({\bf u})$ and $\Psi^{(3)}({\bf u})$ are solutions of the qKZ(B) system.

Solutions of the qKZ(B) system for the trigonometric and elliptic cases also share similar properties, with the properties of the latter discussed in detail in Section~\ref{sec:anal}. In the trigonometric case $\Psi^{(2)}({\bf u})$ and $\Psi^{(3)}({\bf u})$ both satisfy the recursion relations and wheel condition defined in Propositions \ref{prop:rec} and \ref{prop:wheel} . The coefficients of the solutions of the qKZ system are Laurent polynomials in the variables $q=e^{i\eta}$, $z_{j}=e^{iu_{j}}$ and $Z=e^{i\zeta}$. Moreover, they are homogeneous with respect to the variables $z_{1},\dots,z_{L}$ with degree zero. The most negative/positive degree of $\Psi^{(2)}_{{\bf a}}({\bf u})$ and $\Psi^{(3)}_{{\bf a}}({\bf u})$ in terms of a single variable $z_{j}$ is $n$ and $n-1$, respectively. The Laurent polynomial nature and degree of the solutions is analogous to the holomorphicity and pseudo-periodicity of the elliptic case.

The trigonometric limit of the elliptic SOS model also intersects the combinatorial line when $\eta=\frac{2\pi}{3}$ and is referred to as the combinatorial point. As such it follows that at this point $\Psi^{(2)}({\bf u}), \Psi^{(3)}({\bf u})$ are eigenstates of the transfer matrix. Furthermore, the maps to the three colour problem and 2-height RSOS model can still be applied, albeit to trigonometric versions.

It was mentioned that in the elliptic three colour problem the solutions of the qKZB system are the first excited state and not ground states. It was also stated that the authors suspect that solutions of the qKZB system are ground states of symmetry sector of the model. To explain this we use the quantum chain interpretation with the homogeneous Hamiltonian, given by Equation (\ref{eq:Ham}), of the trigonometric three-colour problem. The Hamiltonian can be written as
\begin{align*}
  H  & = \frac{2}{\sqrt{3}}\sum_{k} \left[ U_{k} + I \right]
\end{align*}
where
\begin{align*}
  U_{k} & = -\frac{\sqrt{3}}{2}\left.\frac{d}{du} \ln\left(R_{k}(u)\right) \right|_{u=0} - I, \\
  U_{k}U_{k\pm1}U_{k} & = U_{k}, \\
  U_{k}U_{k} & = 2\Delta U_{k}.
\end{align*}
with $\Delta = \cos(\eta) = -\frac{1}{2}$. This implies it is connected to the Temperley--Lieb model and subsequently the XXZ model with anisoptropy $\Delta=\cosh(\eta)$ \cite{AufKlu2010}. This allows us to use the Ansatz \cite{Nepome2003}
\begin{align*}
  \Lambda(u) & = \omega \left[\frac{\sin(u+\eta)}{\sin(u-\eta)}\right]^{L} \frac{q(u-\eta)}{q(u)} + \omega^{-1} \left[\frac{\sin(u)}{\sin(u-\eta)}\right]^{L} \frac{q(u+\eta)}{q(u)}
\end{align*}
where $\Lambda(u)$ is an eigenvalue of the homogeneous transfer matrix, $\omega\in\C$ is a twist and $q(u)=\prod_{i=1}^{l}\sin(u-u_{i}+\frac{\eta}{2})$. The Bethe equations are given by
\begin{align*} 
  \left(\frac{\sin\left(u_{j} - \frac{\eta}{2}\right)}{\sin\left(u_{j} + \frac{\eta}{2}\right)}\right)^{L} 
  & = - \omega^{2} \prod_{k=1}^{l} \left(\frac{\sin\left(u_{j}-u_{k} - \eta\right)}{\sin\left(u_{j}-u_{k} + \eta\right)}\right).
\end{align*}
The energy and momentum of the eigenstates can be given in terms of the Bethe roots, however, these are not of interest to us. The quantity of interest is the twist $\omega$, which can be expressed in terms of the eigenvalue of the homogeneous transfer matrix,
\begin{align*}
  \lim_{u\rightarrow\pm\infty} \Lambda(u) 
  & = \omega e^{\pm i\eta(\frac{L}{2}-l)} + \omega^{-1} e^{\pm i\eta(l-\frac{L}{2})}
\end{align*}
For system sizes $L=2,\dots,10$ we have found that the value $\lim_{u\rightarrow\pm\infty} \Lambda(u)$ is either $-1$ or $2$ when $L$ is even and either $1$ or $-2$ when $L$ is odd. This constrains the allowed twists and implies for even $L$ that the only twists allowed are cube roots of unity i.e. $\omega\in\{1,e^{i\eta},e^{-i\eta}\}$.

Consider only the case for $L$ even, we find for small system sizes ($L=2,\dots,8$) that there is one state with negative energy, two states with zero energy while every other state has positive energy. It was observed for $L=2,\dots,6$ that the ground state has negative energy, momentum $0$, twist $1$ and $l=\frac{L}{2}$ Bethe roots while the two solutions of the qKZ system of the three colour problem have energy $0$, momentum $\pi$, twist $e^{\pm i\eta}$ and $l=\frac{L}{2}$ Bethe roots. In fact Theorem \ref{thmEigenstate} implies that the solutions of the qKZ system must have energy $0$ and momentum $\pi$. Solving the Bethe equations numerically for larger $L$ yields results consistent with the observations stated.
Thus it appears that the true ground state of the system corresponds to the ground state of the spin-$1/2$ XXZ chain at the combinatorial point while the solutions of the qKZ states correspond to ground states of the twisted spin-$1/2$ XXZ chain at the combinatorial point.

\subsection{The Rational Case}
The rational SOS model is defined on the same Hilbert space as the trigonometric and elliptic SOS models with $f(u)=u$. Like the trigonometric case it can be considered a limiting case of the elliptic model, in fact it is also a limiting case of the trigonometric case. This relationship to trigonometric and elliptic models is seen via the relations
\begin{align*}
  \lim_{\tau,t\rightarrow 0} \vartheta_{1}(ut|\tau)\propto u, \hspace{1cm}
  \lim_{\tau,t\rightarrow 0} \vartheta_{2}(ut|\tau)\propto 1+u^{2}/2, \hspace{1cm}
  \lim_{\tau,t\rightarrow 0} \vartheta_{3}(ut|\tau)\propto 1,
\end{align*}
where again the proportionality is unimportant for the purpose of this discussion.

Since solutions of the qKZ system form a vector space we can set $f^{(2)}(u)=u^{2}$ and $f^{(3)}(u)=1$, which ensures that $\Psi^{(2)}({\bf u})$ and $\Psi^{(3)}({\bf u})$ are solutions of the qKZ system that satisfy recursion relations and the wheel condition. Additionally, each $\Psi^{(2)}_{{\bf a}}({\bf u})$ and $\Psi^{(3)}_{{\bf a}}({\bf u})$ is a homogeneous polynomial in the variables $\eta,\zeta,u_{1},\dots,u_{L}$ of degree $n^{2}+3$ and $n^{2}+1$, respectively.

The rational limit does not intersect with the combinatorial line. Thus the model can not be mapped to either a 2-height RSOS model or the three colour problem. This can also be seen as a consequence of $f(u)$ not being periodic.

%%%%%%%%%%%%%%%%%%%%%%%%
% tikz commands 
%%%%%%%%%%%%%%%%%%%%%%%%

\tikzset{bgplaq/.style={fill=gray!20!white}}

\def\wplaq(#1,#2,#3){
\begin{scope}[shift={(#1,#2)}]
\draw (0,0) rectangle ++(1,1); % no background here
\draw (0,0.8) -- (0.2,1);
\draw (0.5,0.5) node {\small $ #3 $};
\end{scope}
}

\def\wsplaq(#1,#2,#3){
\begin{scope}[shift={(#1,#2)}]
\draw (0,0) rectangle ++(1,1); % no background here
\draw (0,0.8) -- (0.2,1);
\draw (0.5,0.5) node {\tiny $#3$};
\end{scope}
}

\def\wlabels(#1,#2,#3,#4,#5,#6){
\begin{scope}[shift={(#1,#2)}]
\draw (0,1) node[above] {\small $#3$}; \draw (1,1) node[above] {\small $#4$};\draw (1,0) node[below] {\small $#5$};\draw (0,0) node[below] {\small $#6$};
\end{scope}
}

\def\qkzplaq(#1,#2,#3){
\begin{scope}[shift={(#1,#2)}]
\draw[bgplaq] (0,0) rectangle ++(1,1); % no background here
\draw[blue,ultra thick] (0,0) -- (1,0);
\draw (0,0.8) -- (0.2,1);
\draw (0.5,0.5) node {\small $#3$};
\end{scope}
}

\def\downqkzplaq(#1,#2,#3){
\begin{scope}[shift={(#1,#2)}]
\draw[bgplaq] (0,0) -- (1,-0.5) -- (1,0.5) -- (0,1) -- (0,0);
\draw[blue,ultra thick] (0,0) -- (1,-0.5);
\draw (0,0.8) -- (0.2,0.9);
\draw (0.5,0.25) node {\small $#3$};
\end{scope}
}

\def\upqkzplaq(#1,#2,#3){
\begin{scope}[shift={(#1,#2)}]
\draw[bgplaq] (0,-0.5) -- (1,0) -- (1,1) -- (0,0.5) -- (0,-0.5);
\draw[blue,ultra thick] (0,-0.5) -- (1,0);
\draw (0,0.3) -- (0.2,0.6);
\draw (0.5,0.25) node {\small $#3$};
\end{scope}
}

\def\dplaq(#1,#2,#3){
\begin{scope}[shift={(#1,#2)}]
\draw (0,0) -- (1,0) -- (0,0.5) -- (-1,0.5) -- (0,0);
\draw (-0.8,0.4) -- (-0.8,0.5);
\draw (0,0.25) node {\tiny $ #3 $};
\end{scope}
}

%%%%%%%%%%%%%%%%%%%%%%%%
% end of tikz commands
%%%%%%%%%%%%%%%%%%%%%%%%

\section{A Pictorial Representation of the qKZB System}\label{app:graph}
In this appendix, we introduce a simple graphical representation of the qKZB system. Firstly, we use the 
conventional graphical representation of the SOS weights:
\begin{align*}
  \Wop[a][b][c][d][u] & = \begin{tikzpicture}[baseline=10pt,scale=1]
  \wplaq (0,0,u); \wlabels(0,0,a,b,c,d);
  \end{tikzpicture}
\end{align*}
The transfer matrix defined in Equation  \eqref{eq:tmat} is then given by
\begin{align*}
  \bra{\ba'}t(u) \ket{\ba} & = 
  \begin{tikzpicture}[baseline=10pt,scale=1]
  \foreach \x in {1,...,4} 
  {
  \wplaq(\x,0,u_{\x});
  }
  \wplaq(5,0,\cdots);
  \wplaq(6,0,u_L);
  \wlabels (1,0,a'_L,a'_1,a_1,a_L); 
  \wlabels (3,0,a'_2,a'_3,a_3,a_2); 
  \wlabels (6,0,a'_{L-1},a'_L,a_{L},a_{L-1}); 
  \end{tikzpicture}
\end{align*}
Let us now represent a solution of the qKZB system \eqref{qkz1}-\eqref{qkz2} as follows:\\[2mm]
\begin{align*}
\Psi(u_1,u_2,\cdots,u_L) & = 
\begin{tikzpicture}[baseline=10pt,scale=1]
  \foreach \x in {1,...,4} 
  {
  \qkzplaq(\x,0,u_{\x});
  }
  \qkzplaq(5,0,\cdots);
  \qkzplaq(6,0,u_L);
  \end{tikzpicture}
\end{align*}
and components, with $\ba=(a_1,a_2,\cdots,a_L)$, by
\begin{align*}
  \Psi_{\ba}(u_1,u_2,\cdots,u_L) & = 
  \begin{tikzpicture}[baseline=10pt,scale=1]
  \foreach \x in {1,...,4} 
  {
  \qkzplaq(\x,0,u_{\x});
  }
  \qkzplaq(5,0,\cdots);
  \qkzplaq(6,0,u_L);
  \wlabels (1,0,a_L,a_1,,); 
  \wlabels (3,0,a_2,a_3,,); 
  \wlabels (6,0,a_{L-1},a_L,,); 
  \end{tikzpicture}
\end{align*}
The qKZB system (\ref{qkz1},\ref{qkz2}) can then be represented very simply as 
\begin{align*}
  \begin{tikzpicture}[baseline=10pt,scale=1]
  \qkzplaq(0,0,u_1); \qkzplaq(1,0,\cdots); \downqkzplaq(2,0,u_i); \upqkzplaq(3,0,u_{i+1});\qkzplaq(4,0,\cdots);\qkzplaq(5,0,u_L);
  \draw (2,1) -- (3,1.5) -- (4,1);
  \draw (2.2,0.9) -- (2.2,1.1);
  \draw (3,1) node {\tiny $u_i-u_{i+1}$};
  \end{tikzpicture}
  & =  \begin{tikzpicture}[baseline=10pt,scale=1]
  \qkzplaq(0,0,u_1); \qkzplaq(1,0,\cdots); \qkzplaq(2,0,u_{i+1}); \qkzplaq(3,0,u_{i});\qkzplaq(4,0,\cdots);\qkzplaq(5,0,u_L);
  \end{tikzpicture}
\end{align*}
and 
\begin{align*}
  \begin{tikzpicture}[baseline=10pt,scale=1]
  \qkzplaq(0,0,u_1); \qkzplaq(1,0,u_2); \qkzplaq(2,0,u_{3}); \qkzplaq(3,0,u_4);\qkzplaq(4,0,\cdots);\qkzplaq(5,0,u_L);
  \wlabels (0,0,a_L,a_1,,); 
  \wlabels (2,0,a_2,a_3,,); 
  \wlabels (5,0,a_{L-1},a_L,,); 
  \end{tikzpicture}
  & =
  \kappa \:\begin{tikzpicture}[baseline=10pt,scale=1]
  \qkzplaq(0,0,u_L\!+\!s); \qkzplaq(1,0,u_1); \qkzplaq(2,0,u_2); \qkzplaq(3,0,u_3);\qkzplaq(4,0,\cdots);\qkzplaq(5,0,u_{L-1});
  \wlabels (0,0,a_{L-1},a_L,,); 
  \wlabels (2,0,a_1,a_2,,); 
  \wlabels (5,0,a_{L-2},a_{L-1},,); 
  \end{tikzpicture}
\end{align*}
respectively.
% The qKZ Equation \eqref{qkzeq} itself is then represented by\\[2mm]
% \ben && \begin{tikzpicture}[baseline=10pt,scale=1.4]
% \qkzplaq(0,0,u_1); \qkzplaq(1,0,\cdots); \downqkzplaq(2,0,u_i); \upqkzplaq(3,0,u_{i+1});\qkzplaq(4,0,\cdots);\qkzplaq(5,0,u_L);
% \draw (2,1) -- (3,1.5) -- (4,1);
% \draw (2.2,0.9) -- (2.2,1.1);
% \draw (3,1) node {\tiny $u_i\!\!-\!\!u_{i+1}$};
% %%%%
% \dplaq(4,1,\cdots);
% \dplaq(5,1,u_i\!-\!u_L);
% \dplaq(0,1,u_i\!\!-\!\!u_1\!\!+\!\!s);
% \dplaq(1,1,\cdots);
% %%%%
% \draw (-1,1.5) node[above] {$a_L$};
% \draw (0,1.5) node[above] {$a_1$};
% \draw (1,1.5) node[above] {$a_{i-1}$};
% \draw (2,1) node[above] {$a_{i}$};
% \draw (3,1.5) node[above] {$a_{i+1}$};
% \draw (4,1.5)  node[above] {$a_{L-1}$};
% \draw (5,1.5)  node[above] {$a_{L}$};
% \end{tikzpicture}\\
% && =  \begin{tikzpicture}[baseline=10pt,scale=1.4]
% \qkzplaq(0,0,u_1); \qkzplaq(1,0,\cdots); \qkzplaq(2,0,u_{i}+s); \qkzplaq(3,0,u_{i+1});\qkzplaq(4,0,\cdots);\qkzplaq(5,0,u_L);
% \wlabels (0,0,a_{L},a_1,,); 
% \wlabels (2,0,a_{i-1},a_i,,);
% \wlabels (3,0,,a_{i+1},,); 
% \wlabels (5,0,a_{L-1},a_{L},,); 
% \end{tikzpicture}.
% \een
% Here we use the convention that all unmarked heights on the left-hand-side are summed over.  
% {\bf Version 2:} 
The qKZB equation \eqref{qkzeq} is represented by\\[2mm]
\begin{align*}
% && \begin{tikzpicture}[baseline=10pt,scale=1.4]
% \qkzplaq(0,0,u_1); \qkzplaq(1,0,\cdots); \downqkzplaq(2,0,u_i); \upqkzplaq(3,0,u_{i+1});\qkzplaq(4,0,\cdots);\qkzplaq(5,0,u_L);
% \draw (2,1) -- (3,1.5) -- (4,1);
% \draw (2.2,0.9) -- (2.2,1.1);
% \draw (3,1) node {\tiny $u_i\!\!-\!\!u_{i+1}$};
% %%%%
% \dplaq(4,1,\cdots);
% \dplaq(5,1,u_i\!-\!u_L);
% \dplaq(0,1,u_i\!\!-\!\!u_1\!\!+\!\!s);
% \dplaq(1,1,\cdots);
% %%%%
% \draw (-1,1.5) node[above] {$a_L$};
% \draw (0,1.5) node[above] {$a_1$};
% \draw (1,1.5) node[above] {$a_{i-1}$};
% \draw (2,1) node[above] {$a_{i}$};
% \draw (3,1.5) node[above] {$a_{i+1}$};
% \draw (4,1.5)  node[above] {$a_{L-1}$};
% \draw (5,1.5)  node[above] {$a_{L}$};
% \end{tikzpicture}\\
\begin{tikzpicture}[baseline=10pt,scale=1.2]
\wsplaq(0,1,u_i\!\!-\!\!u_1\!\!+\!\!s); \wsplaq(1,1,\cdots); \wsplaq(3,1,u_i\!-\!u_{i+1});\wsplaq(4,1,\cdots);\wsplaq(5,1,u_i\!-\!u_L);
\qkzplaq(0,0,u_1); \qkzplaq(1,0,\cdots); \qkzplaq(2,0,u_{i}) \qkzplaq(3,0,u_{i+1});\qkzplaq(4,0,\cdots);\qkzplaq(5,0,u_L);
\wlabels (0,1,a_L,a_1,,); 
\wlabels (2,1,a_{i-1},a_i,,);
\wlabels (3,1,,a_{i+1},,);
\wlabels (5,1,a_{L-1},a_L,,); 
\draw[dashed] (2,1) -- (3,2);
%\draw (2.2,1) node[above] {\small $a_i$};
\end{tikzpicture}
&& = \kappa \begin{tikzpicture}[baseline=10pt,scale=1.2]
\qkzplaq(0,0,u_1); \qkzplaq(1,0,\cdots); \qkzplaq(2,0,u_{i}+s); \qkzplaq(3,0,u_{i+1});\qkzplaq(4,0,\cdots);\qkzplaq(5,0,u_L);
\wlabels (0,0,a_{L},a_1,,); 
\wlabels (2,0,a_{i-1},a_i,,);
\wlabels (3,0,,a_{i+1},,); 
\wlabels (5,0,a_{L-1},a_{L},,); 
\end{tikzpicture}.
\end{align*}
Here the dashed line means that heights are identified and we use the convention that all unmarked heights on the left-hand-side are summed over.  

\end{appendix}

% to fix MR issues
\gdef\MRshorten#1 #2MRend{#1}%
\gdef\MRfirsttwo#1#2{\if#1M%
MR\else MR#1#2\fi}
\def\MRfix#1{\MRshorten\MRfirsttwo#1 MRend}
\renewcommand\MR[1]{\relax\ifhmode\unskip\spacefactor3000 \space\fi
\MRhref{\MRfix{#1}}{{\scriptsize \MRfix{#1}}}}
\renewcommand{\MRhref}[2]{%
\href{http://www.ams.org/mathscinet-getitem?mr=#1}{#2}}
\bibliography{sosqkz.bib}

\newcommand{\noopsort}[1]{}
\providecommand{\bysame}{\leavevmode\hbox to3em{\hrulefill}\thinspace}
\begin{thebibliography}{10}

\bibitem{ABF}
G.~Andrews, R.~Baxter, and P.~Forrester, \emph{Eight-vertex {SOS} model and
  generalized {R}ogers--{R}amanujan-type identities}, J. Statist. Phys.
  \textbf{35} (1984), no.~3-4, 193--266,
  \href{http://dx.doi.org/10.1007/BF01014383}{\path{doi}}. \MR{748075
  (86a:82001)}

\bibitem{MR1421016}
Yoshinori Asai, Michio Jimbo, Tetsuji Miwa, and Yaroslav Pugai,
  \emph{Bosonization of vertex operators for the {$A^{(1)}_{n-1}$} face model},
  J. Phys. A \textbf{29} (1996), no.~20, 6595--6616,
  \href{http://dx.doi.org/10.1088/0305-4470/29/20/015}{\path{doi}}. \MR{1421016
  (97j:82020)}

\bibitem{AufKlu2010}
B.~Aufgebauer and A.~Kl{\"u}mper, \emph{Quantum spin chains of
  {T}emperley--{L}ieb type: periodic boundary conditions, spectral
  multiplicities and finite temperature}, J. Stat. Mech. (2010), P05018.

\bibitem{BdGN-XXZ-ASM-PP}
M.~Batchelor, J.~de~Gier, and B.~Nienhuis, \emph{The quantum symmetric {XXZ}
  chain at {$\Delta=-1/2$}, alternating-sign matrices and plane partitions}, J.
  Phys. A \textbf{34} (2001), no.~19, L265--L270,
  \href{http://arxiv.org/abs/cond-mat/0101385}{\path{arXiv:cond-mat/0101385}}.
  \MR{MR1836155 (2002e:82014b)}

\bibitem{Baxter1970}
R.~Baxter, \emph{Three-colorings of the square lattice: A hard squares model},
  Journal of Mathematical Physics \textbf{11} (1970), no.~10, 3116--3124,
  \href{http://dx.doi.org/10.1063/1.1665102}{\path{doi}}.

\bibitem{Baxter1973}
R.~Baxter, \emph{Eight-vertex model in lattice statistics and one-dimensional
  anisotropic {H}eisenberg chain. {I}. some fundamental eigenvectors}, Annals
  of Physics \textbf{76} (1973), no.~1, 1–24,
  \href{http://dx.doi.org/doi:10.1016/0003-4916(73)90439-9}{\path{doi}}.

\bibitem{BM-8v}
V.~Bazhanov and V.~Mangazeev, \emph{Eight-vertex model and non-stationary
  {L}am\'e equation}, J. Phys. A \textbf{38} (2005), no.~8, L145--L153,
  \href{http://arxiv.org/abs/hep-th/0411094}{\path{arXiv:hep-th/0411094}},
  \href{http://dx.doi.org/10.1088/0305-4470/38/8/L01}{\path{doi}}. \MR{2119173
  (2005j:82010)}

\bibitem{BM-P6a}
\bysame, \emph{The eight-vertex model and {P}ainlev\'e {VI}}, J. Phys. A
  \textbf{39} (2006), no.~39, 12235--12243,
  \href{http://arxiv.org/abs/hep-th/0602122}{\path{arXiv:hep-th/0602122}},
  \href{http://dx.doi.org/10.1088/0305-4470/39/39/S15}{\path{doi}}.
  \MR{MR2266223 (2008a:82018)}

\bibitem{BM-P6b}
\bysame, \emph{The eight-vertex model and {P}ainlev\'e {VI} equation {II}:
  eigenvector results}, J. Phys. A \textbf{43} (2010), no.~8, 085206, 16,
  \href{http://arxiv.org/abs/0912.2163}{\path{arXiv:0912.2163}},
  \href{http://dx.doi.org/10.1088/1751-8113/43/8/085206}{\path{doi}}.
  \MR{2592349 (2012a:82025)}

\bibitem{Bernard88a}
D.~Bernard, \emph{On the {W}ess--{Z}umino--{W}itten models on the torus},
  Nuclear Phys. B \textbf{303} ({\noopsort{1988a}}1988), no.~1, 77--93,
  \href{http://dx.doi.org/10.1016/0550-3213(88)90217-9}{\path{doi}}. \MR{952765
  (89k:81130)}

\bibitem{Bernard88b}
\bysame, \emph{On the {W}ess--{Z}umino--{W}itten models on {R}iemann surfaces},
  Nuclear Phys. B \textbf{309} ({\noopsort{1988b}}1988), no.~1, 145--174,
  \href{http://dx.doi.org/10.1016/0550-3213(88)90236-2}{\path{doi}}. \MR{968348
  (90a:81111)}

\bibitem{Che-qKZ}
I.~Cherednik, \emph{Quantum {K}nizhnik--{Z}amolodchikov equations and affine
  root systems}, Comm. Math. Phys. \textbf{150} (1992), no.~1, 109--136,
  \url{http://projecteuclid.org/getRecord?id=euclid.cmp/1104251785}.
  \MR{MR1188499 (94a:17019)}

\bibitem{DJMO1986}
E.~Date, M.~Jimbo, T.~Miwa, and M.~Okado, \emph{Fusion of the eight vertex
  {SOS} model}, Letters in Mathematical Physics \textbf{12} (1986), no.~3,
  209--215, \href{http://dx.doi.org/doi:10.1007/BF00416511}{\path{doi}}.

\bibitem{artic31}
P.~Di~Francesco and P.~Zinn-Justin, \emph{Around the {R}azumov--{S}troganov
  conjecture: proof of a multi-parameter sum rule}, Electron. J. Combin.
  \textbf{12} (2005), Research Paper 6, 27 pp,
  \href{http://arxiv.org/abs/math-ph/0410061}{\path{arXiv:math-ph/0410061}}.
  \MR{MR2134169 (2005m:82017)}

\bibitem{artic34}
\bysame, \emph{Quantum {K}nizhnik--{Z}amolodchikov equation, generalized
  {R}azumov--{S}troganov sum rules and extended {J}oseph polynomials}, J. Phys.
  A \textbf{38} (2005), no.~48, L815--L822,
  \href{http://arxiv.org/abs/math-ph/0508059}{\path{arXiv:math-ph/0508059}},
  \href{http://dx.doi.org/10.1088/0305-4470/38/48/L02}{\path{doi}}.
  \MR{MR2185933 (2006m:82019)}

\bibitem{Etingofbook}
P.~Etingof, I.~Frenkel, and A.~Kirillov, Jr., \emph{Lectures on representation
  theory and {K}nizhnik--{Z}amolodchikov equations}, Mathematical Surveys and
  Monographs, vol.~58, American Mathematical Society, Providence, RI, 1998,
  \href{http://dx.doi.org/10.1090/surv/058}{\path{doi}}. \MR{1629472
  (2001b:32028)}

\bibitem{Felder-ICM}
G.~Felder, \emph{Conformal field theory and integrable systems associated to
  elliptic curves}, Proceedings of the {I}nternational {C}ongress of
  {M}athematicians, {V}ol.\ 1, 2 ({Z}\"urich, 1994), Birkh\"auser, Basel, 1995,
  pp.~1247--1255,
  \href{http://arxiv.org/abs/hep-th/9407154}{\path{arXiv:hep-th/9407154}}.
  \MR{1404026 (97m:81015)}

\bibitem{FTV-qKZB}
G.~Felder, V.~Tarasov, and A.~Varchenko, \emph{Solutions of the elliptic q{KZB}
  equations and {B}ethe ansatz. {I}}, Topics in singularity theory, Amer. Math.
  Soc. Transl. Ser. 2, vol. 180, Amer. Math. Soc., Providence, RI, 1997,
  pp.~45--75,
  \href{http://arxiv.org/abs/q-alg/9606005}{\path{arXiv:q-alg/9606005}}.
  \MR{1767112 (2001g:17026)}

\bibitem{FelVar1996}
G.~Felder and A.~Varchenko, \emph{{A}lgebraic {B}ethe {A}nsatz for the elliptic
  quantum group {${E}_{\tau,\eta}(sl_2)$}}, Nuclear Physics B \textbf{480}
  (1996), no.~1, 485--503.

\bibitem{Fodaetal94}
O.~Foda, M.~Jimbo, T.~Miwa, K.~Miki, and A.~Nakayashiki, \emph{Vertex operators
  in solvable lattice models}, J. Math. Phys. \textbf{35} (1994), no.~1,
  13--46,
  \href{http://arxiv.org/abs/hep-th/9305100}{\path{arXiv:hep-th/9305100}},
  \href{http://dx.doi.org/10.1063/1.530783}{\path{doi}}. \MR{1252097
  (94k:82033)}

\bibitem{Tiago-thesis}
T.~Fonseca, \emph{Alternating sign matrices, completely packed loops and plane
  partitions}, 2010, PhD thesis, Universit{\'e} Pierre et Marie Curie (Paris
  6), \url{https://tel.archives-ouvertes.fr/tel-00521884/}.

\bibitem{artic59}
T.~Fonseca and P.~Zinn-Justin, \emph{Higher spin polynomial solutions of
  quantum {K}nizhnik--{Z}amolodchikov equation}, Comm. Math. Phys. \textbf{328}
  (2014), no.~3, 1079--1115,
  \href{http://arxiv.org/abs/1212.4672}{\path{arXiv:1212.4672}}.

\bibitem{FR-qKZ}
I.~Frenkel and N.~Reshetikhin, \emph{Quantum affine algebras and holonomic
  difference equations}, Comm. Math. Phys. \textbf{146} (1992), no.~1, 1--60,
  \url{http://projecteuclid.org/getRecord?id=euclid.cmp/1104249974}.
  \MR{1163666 (94c:17024)}

\bibitem{MR1702389}
Yuji Hara, Michio Jimbo, Hitoshi Konno, Satoru Odake, and Jun'ichi Shiraishi,
  \emph{Free field approach to the dilute {$A_L$} models}, J. Math. Phys.
  \textbf{40} (1999), no.~8, 3791--3826,
  \href{http://dx.doi.org/10.1063/1.532927}{\path{doi}}. \MR{1702389
  (2000f:82030)}

\bibitem{JM-book}
M.~Jimbo and T.~Miwa, \emph{Algebraic analysis of solvable lattice models},
  CBMS Regional Conference Series in Mathematics, vol.~85, Published for the
  Conference Board of the Mathematical Sciences, Washington, DC, 1995.
  \MR{MR1308712 (96e:82037)}

\bibitem{MR1832074}
Michio Jimbo, Hitoshi Konno, Satoru Odake, Yaroslav Pugai, and Jun'ichi
  Shiraishi, \emph{Free field construction for the {ABF} models in regime
  {II}}, Proceedings of the {B}axter {R}evolution in {M}athematical {P}hysics
  ({C}anberra, 2000), vol. 102, 2001, pp.~883--921,
  \href{http://dx.doi.org/10.1023/A:1004807104626}{\path{doi}}. \MR{1832074
  (2002e:82010)}

\bibitem{Kasa-wheel}
M.~Kasatani, \emph{Subrepresentations in the polynomial representation of the
  double affine {H}ecke algebra of type {${\rm GL}_n$} at
  {$t^{k+1}q^{r-1}=1$}}, Int. Math. Res. Not. (2005), no.~28, 1717--1742,
  \href{http://arxiv.org/abs/math/0501272}{\path{arXiv:math/0501272}},
  \href{http://dx.doi.org/10.1155/IMRN.2005.1717}{\path{doi}}. \MR{2172339
  (2007c:20012)}

\bibitem{KT-qKZ}
M.~Kasatani and Y.~Takeyama, \emph{The quantum {K}nizhnik--{Z}amolodchikov
  equation and non-symmetric {M}acdonald polynomials}, Funkcial. Ekvac.
  \textbf{50} (2007), no.~3, 491--509,
  \href{http://arxiv.org/abs/math/0608773}{\path{arXiv:math/0608773}}.
  \MR{MR2381328}

\bibitem{KZ}
V.~Knizhnik and A.~Zamolodchikov, \emph{Current algebra and {W}ess--{Z}umino
  model in two dimensions}, Nuclear Phys. B \textbf{247} (1984), no.~1,
  83--103, \href{http://dx.doi.org/10.1016/0550-3213(84)90374-2}{\path{doi}}.
  \MR{853258 (87h:81129)}

\bibitem{MR2153660}
Takeo Kojima, Hitoshi Konno, and Robert Weston, \emph{The vertex-face
  correspondence and correlation functions of the fusion eight-vertex model.
  {I}. {T}he general formalism}, Nuclear Phys. B \textbf{720} (2005), no.~3,
  348--398,
  \href{http://dx.doi.org/10.1016/j.nuclphysb.2005.05.012}{\path{doi}}.
  \MR{2153660 (2006f:82029)}

\bibitem{MR1625187}
Michael Lashkevich and Yaroslav Pugai, \emph{Free field construction for
  correlation functions of the eight-vertex model}, Nuclear Phys. B
  \textbf{516} (1998), no.~3, 623--651,
  \href{http://dx.doi.org/10.1016/S0550-3213(98)00086-8}{\path{doi}}.
  \MR{1625187 (99g:82019)}

\bibitem{Lieb1967}
E.~Lieb, \emph{Residual entropy of square ice}, Phys. Rev. \textbf{162} (1967),
  162--172, \href{http://dx.doi.org/10.1103/PhysRev.162.162}{\path{doi}}.

\bibitem{LukPugai96}
S.~Lukyanov and Y.~Pugai, \emph{Multi-point local height probabilities in the
  integrable {RSOS} model}, Nuclear Phys. B \textbf{473} (1996), no.~3,
  631--658, \href{http://dx.doi.org/10.1016/0550-3213(96)00221-0}{\path{doi}}.
  \MR{1403536 (97j:81111)}

\bibitem{MR1403536}
Sergei Lukyanov and Yaroslav Pugai, \emph{Multi-point local height
  probabilities in the integrable {RSOS} model}, Nuclear Phys. B \textbf{473}
  (1996), no.~3, 631--658,
  \href{http://dx.doi.org/10.1016/0550-3213(96)00221-0}{\path{doi}}.
  \MR{1403536 (97j:81111)}

\bibitem{Nepome2003}
R.~Nepomechie, \emph{Functional relations and {B}ethe {A}nsatz for the {XXZ}
  chain}, Journal of Statistical Physics \textbf{111} (2003), no.~5-6,
  1363--1376.

\bibitem{RS-spin}
A.~Razumov and Yu. Stroganov, \emph{Spin chains and combinatorics}, J. Phys. A
  \textbf{34} (2001), no.~14, 3185--3190,
  \href{http://arxiv.org/abs/cond-mat/0012141}{\path{arXiv:cond-mat/0012141}},
  \href{http://dx.doi.org/10.1088/0305-4470/34/14/322}{\path{doi}}. \MR{1832783
  (2002e:82014a)}

\bibitem{RS-spintwist}
\bysame, \emph{Spin chains and combinatorics: twisted boundary conditions}, J.
  Phys. A \textbf{34} (2001), no.~26, 5335--5340,
  \href{http://arxiv.org/abs/cond-mat/0102247}{\path{arXiv:cond-mat/0102247}},
  \href{http://dx.doi.org/10.1088/0305-4470/34/26/304}{\path{doi}}. \MR{1855759
  (2002j:82026)}

\bibitem{RS-conj}
\bysame, \emph{Combinatorial nature of the ground-state vector of the {$\rm
  O(1)$} loop model}, Teoret. Mat. Fiz. \textbf{138} (2004), no.~3, 395--400,
  \href{http://arxiv.org/abs/math.CO/0104216}{\path{arXiv:math.CO/0104216}},
  \href{http://dx.doi.org/10.1023/B:TAMP.0000018450.36514.d7}{\path{doi}}.
  \MR{2077318 (2005f:82022)}

\bibitem{RS-XYZ}
\bysame, \emph{A possible combinatorial point for {XYZ}-spin chain}, 2009,
  \href{http://arxiv.org/abs/0911.5030}{\path{arXiv:0911.5030}}.

\bibitem{artic42}
A.~Razumov, Yu. Stroganov, and P.~Zinn-Justin, \emph{Polynomial solutions of
  {$q$KZ} equation and ground state of {$XXZ$} spin chain at {$\Delta=-1/2$}},
  J. Phys. A \textbf{40} (2007), no.~39, 11827--11847,
  \href{http://arxiv.org/abs/0704.3542}{\path{arXiv:0704.3542}},
  \href{http://dx.doi.org/10.1088/1751-8113/40/39/009}{\path{doi}}.
  \MR{MR2374053}

\bibitem{rosengrenb}
H.~Rosengren, \emph{The three-colour model with domain wall boundary
  conditions}, Adv. in Appl. Math. \textbf{46} (2011), no.~1-4, 481--535,
  \href{http://arxiv.org/abs/0911.0561}{\path{arXiv:0911.0561}},
  \href{http://dx.doi.org/10.1016/j.aam.2010.10.007}{\path{doi}}. \MR{2794035
  (2012i:05017)}

\bibitem{Smi}
F.~Smirnov, \emph{A general formula for soliton form factors in the quantum
  sine-{G}ordon model}, J. Phys. A \textbf{19} (1986), no.~10, L575--L578,
  \href{http://dx.doi.org/10.1088/0305-4470/19/10/003}{\path{doi}}. \MR{851469
  (87j:81115)}

\bibitem{Strog-odd}
Yu. Stroganov, \emph{The importance of being odd}, J. Phys. A \textbf{34}
  (2001), no.~13, L179--L185,
  \href{http://arxiv.org/abs/cond-mat/0012035}{\path{arXiv:cond-mat/0012035}},
  \href{http://dx.doi.org/10.1088/0305-4470/34/13/104}{\path{doi}}. \MR{1831119
  (2002j:82027)}

\bibitem{artic57}
P.~Zinn-Justin, \emph{Sum rule for the eight-vertex model on its combinatorial
  line}, Symmetries, Integrable systems and representations, vol.~40, 2013, Eds
  Iohara, Morier-Genoud, R{\'e}my,
  \href{http://arxiv.org/abs/1202.4420}{\path{arXiv:1202.4420}}.

\end{thebibliography}
\bibliographystyle{amsplainhyper}

\end{document}